\definecolor{myYellow}{rgb}{0.9,0.9,0}
\tikzstyle{knot}=[circle, fill,scale=0.4]
\tikzstyle{grayKnot}=[circle,fill=gray,fill opacity=0.3,text opacity=1,scale=0.9]
  \def\AtanTwo(#1,#2){atan2({#1},{#2})}%
  \def\AtanTwo(#1,#2){atan2({#2},{#1})}%
\newcommand{\convexpath}[2]{
  [   
  create hullcoords/.code={
    \global\edef\namelist{#1}
    \foreach [count=\counter] \nodename in \namelist {
      \global\edef\numberofnodes{\counter}
      \coordinate (hullcoord\counter) at (\nodename);
    }
    \coordinate (hullcoord0) at (hullcoord\numberofnodes);
    \pgfmathtruncatemacro\lastnumber{\numberofnodes+1}
    \coordinate (hullcoord\lastnumber) at (hullcoord1);
  },
  create hullcoords
  ]
  ($(hullcoord1)!#2!-90:(hullcoord0)$)
  \foreach [
  evaluate=\currentnode as \previousnode using \currentnode-1,
  evaluate=\currentnode as \nextnode using \currentnode+1
  ] \currentnode in {1,...,\numberofnodes} {
    let \p1 = ($(hullcoord\currentnode) - (hullcoord\previousnode)$),
    \n1 = {\AtanTwo(\y1,\x1) + 90},
    \p2 = ($(hullcoord\nextnode) - (hullcoord\currentnode)$),
    \n2 = {\AtanTwo(\y2,\x2) + 90},
    \n{delta} = {Mod(\n2-\n1,360) - 360}
    in 
    {arc [start angle=\n1, delta angle=\n{delta}, radius=#2]}
    -- ($(hullcoord\nextnode)!#2!-90:(hullcoord\currentnode)$) 
  }
}
\title{Constructing small tree grammars and small circuits for formulas\thanks{The fourth and fifth author are supported
by the DFG research project QUANT-KOMP (Lo 748/10-1).}}
\author[1]{Moses Ganardi\thanks{ganardi@eti.uni-siegen.de}}
\author[1]{Danny Hucke\thanks{hucke@eti.uni-siegen.de}}
\author[2]{Artur Je\.{z}\thanks{aje@cs.uni.wroc.pl}}
\author[1]{Markus Lohrey\thanks{lohrey@eti.uni-siegen.de}}
\author[1]{Eric Noeth\thanks{eric.noeth@eti.uni-siegen.de}}
\affil[1]{University of Siegen, Germany}
\affil[2]{University of Wroc\l{}aw, Poland}
\date{\today}
\newcommand{\mc}{\mathcal}
\newcommand{\rank}{\mathrm{rank}}
\newcommand{\val}{\mathrm{val}}
\newcommand{\labels}{\mathrm{labels}}
\newcommand{\valid}{\mathrm{valid}}
\newcounter{dummy}
\newtheorem{proposition}[dummy]{Proposition}
\newtheorem{example}[dummy]{Example}
\newtheorem{lemma}[dummy]{Lemma}
\newtheorem{theorem}[dummy]{Theorem}
\newtheorem{corollary}[dummy]{Corollary}
\begin{document}

\maketitle

\begin{abstract}
It is shown that every tree of size $n$ over a fixed set of $\sigma$ different ranked symbols can be decomposed 
(in linear time as well as in logspace) into $O\big(\frac{n}{\log_\sigma n}\big) = O\big(\frac{n \log \sigma}{\log n}\big)$ many hierarchically defined pieces.
Formally, such a hierarchical decomposition has the form of a straight-line linear context-free
tree grammar of size $O\big(\frac{n}{\log_\sigma n}\big)$, which can be used as a compressed representation of 
the input tree. This generalizes an analogous result for strings.  Previous grammar-based tree compressors
were not analyzed for the worst-case size of the computed grammar, except for the top dag of Bille et al. \cite{BilleGLW13},
for which only the weaker upper bound of $O\big(\frac{n}{\log_\sigma^{0.19} n}\big)$ (which was very recently improved to 
$O\big(\frac{n \cdot \log \log_\sigma n}{\log_\sigma n}\big)$ \cite{HSR15})
for unranked and unlabelled trees has been derived.
The main result is used to show that every arithmetical formula of size $n$, in which only $m \leq n$ different
variables occur, can be transformed (in linear time as well as in logspace) 
into an arithmetical circuit of size $O\big(\frac{n \cdot \log m}{\log n}\big)$ and depth $O(\log n)$. This refines a classical result of
Brent from 1974, according to which an arithmetical formula of size $n$ can be transformed into a logarithmic depth circuit of size $O(n)$.
A short version of this paper appeared as \cite{HuLoNo14}.
\end{abstract}

\section{Introduction}

\noindent
{\bf Grammar-based string compression.}
{\em Grammar-based compression} has emerged to an active field in string compression during the past 20 years.
The idea is to represent a given string $s$ by a small context-free grammar that generates only $s$;
such a grammar is also called a {\em straight-line program}, briefly SLP. For instance, the word $(ab)^{1024}$ can be represented
by the SLP with the rules $A_0 \to ab$ and $A_i \to A_{i-1} A_{i-1}$ for $1 \leq i \leq 10$ ($A_{10}$ is the start 
symbol). The size of this grammar is much smaller than the size (length) of the string $(ab)^{1024}$. In general,
an SLP of size $n$ (the size of an SLP is usually defined as the total length of all right-hand sides of the rules)
can produce a string of length $2^{\Omega(n)}$. Hence, an SLP can be seen indeed as a succinct representation
of the generated string. The goal of grammar-based string compression is to construct from a given input
string $s$ a small SLP that produces $s$. Several algorithms for this have been proposed and analyzed. 
Prominent grammar-based string compressors are for instance 
{\sf LZ78}, {\sf RePair}, and {\sf BiSection}, see \cite{CLLLPPSS05} for more details.

To evaluate the compression performance of a grammar-based compressor $\mathcal{C}$, two different approaches can be found
in the literature:
\begin{enumerate}[(a)]
\item One can analyze the maximal size of SLPs produced by $\mathcal{C}$
on strings of length $n$ over the alphabet $\Sigma$ (the size of  $\Sigma$ is considered to be a constant larger than one in the further discussion). 
Formally, let
$$
\sigma_{\mathcal{C}}(n) = \max_{x \in \Sigma^n} |\mathcal{C}(x)|,
$$
where $\mathcal{C}(x)$ is the SLP produced by $\mathcal{C}$ on input $x$, and $|\mathcal{C}(x)|$ is the size 
of this SLP. 
An information-theoretic argument shows that for almost all binary strings of length $n$ (up to an exponentially small part) the smallest
SLP has size $\Omega\big(\frac{n}{\log n}\big)$.\footnote{If we do not specify the base of a logarithm, we always mean $\log_2(n)$.}
Explicit examples of strings for which the smallest SLP has size $\Omega\big(\frac{n}{\log n}\big)$ 
result from de Bruijn sequences; see Section~\ref{sec-strings}.
 On the other hand, for many grammar-based compressors $\mathcal{C}$ one gets
$\sigma_{\mathcal{C}}(n) \in O\big(\frac{n}{\log n}\big)$ and hence
$\sigma_{\mathcal{C}}(n) \in \Theta\big(\frac{n}{\log n}\big)$. This holds for instance for the above mentioned
{\sf LZ78}, {\sf RePair}, and {\sf BiSection}, and in fact for all compressors that produce so-called irreducible SLPs \cite{KiYa00}. This 
fact is used in \cite{KiYa00} to construct universal string compressors based on grammar-based compressors.
\item A second approach is to analyze the size of the SLP produced by $\mathcal{C}$ for an input string $x$ compared
to the size of a smallest SLP for $x$. This leads to the approximation ratio for $\mathcal{C}$, which is formally 
defined as
$$
\alpha_{\mathcal{C}}(n) = \max_{x \in \Sigma^n} \frac{|\mathcal{C}(x)|}{g(x)},
$$
where $g(x)$ is the size of a smallest SLP for $x$. It is known that unless $\mathsf{P=NP}$, there is no polynomial time 
grammar-based compressor $\mathcal{C}$ such that $\alpha_{\mathcal{C}}(n) < 8569/8568$ for all $n$ \cite{CLLLPPSS05}.
The  best known  polynomial time 
grammar-based compressors  \cite{CLLLPPSS05,Jez13approx,Jez14approx,Ryt03,Sakamoto05} have
an approximation ratio of  $\mathcal{O}(\log (n/g))$, where $g$ is the size of a smallest
SLP for the input string and each of them works in linear time. 
\end{enumerate}

\smallskip
\noindent
{\bf Grammar-based tree compression.}
In this paper, we want to follow approach (a), but for trees instead of strings.
A tree in this paper is always a rooted ordered tree over a ranked alphabet, i.e., every
node is labelled with a symbol, the rank of this symbol is equal to the number of children of the node
and there is an ordering among the children of a node.
In \cite{BuLoMa07}, grammar-based compression was extended from strings to trees. 
For this, linear context-free tree grammars were used. Linear context-free tree grammars that produce only a single tree
are also known as tree straight-line programs (TSLPs) or  straight-line context-free tree grammars (SLCF tree grammars).
TSLPs generalize dags (directed acyclic graphs), which are widely used as a compact tree representation. Whereas
dags only allow to share repeated subtrees, TSLPs can also share repeated internal tree patterns.

Several grammar-based tree compressors were developed in \cite{Akutsu10,MLMN13,BuLoMa07,JezLo14approx,LohreyMM13},
where the work from \cite{Akutsu10} is based on another type of tree grammars (elementary ordered tree grammars).
The algorithm from \cite{JezLo14approx} achieves an approximation ratio of $O(\log n)$ (for a constant set of node labels).
On the other hand, for none of the above mentioned compressors it is known, whether for every input tree with $n$ nodes 
the size of the output grammar is bounded by $O\big(\frac{n}{\log n}\big)$, as it is the case for many grammar-based string compressors. 
Recently, it was shown that the so-called  {\em top dag} of an unranked and unlabelled tree of size $n$ has size $O\big(\frac{n}{\log^{0.19} n}\big)$
\cite{BilleGLW13} and this bound has been subsequently improved to 
$O\big(\frac{n \cdot \log \log_\sigma n}{\log_\sigma n}\big)$ in \cite{HSR15}.
The top dag can be seen as a slight variant of a TSLP for an unranked tree.

In this paper, we present a grammar-based tree compressor that transforms a given node-labelled tree
of size $n$ with $\sigma$ different node labels, whose rank is bounded by a constant,
into a TSLP of size $O\big(\frac{n}{\log_\sigma n}\big)$ and depth $O(\log n)$, where the depth of a TSLP is the depth
of the corresponding derivation tree (we always assume that $\sigma \geq 2$). In particular, for an unlabelled binary tree we get a 
TSLP of size $O\big(\frac{n}{\log n}\big)$.
Our compressor is basically an extension of the {\sf BiSection} algorithm \cite{KiefferYNC00} 
from strings to trees and is therefore called {\sf TreeBiSection}. It 
works in two steps:\footnote{The following outline works only for binary trees, but it 
can be easily adapted to trees of higher rank, as long as the rank is bounded by a constant.}

In the first step, {\sf TreeBiSection} hierarchically decompose in a top-down way the input tree into pieces of roughly equal size.
This is a well-known technique that is also known as the $(1/3, 2/3)$-Lemma \cite{LeStHa65}. But care has to be taken
to bound the ranks of the nonterminals of the resulting TSLP. As soon as we get a tree with three holes
during the decomposition (which corresponds in the TSLP to a nonterminal of rank three) we have to do 
an intermediate step that decomposes the tree into two pieces having only two holes each. This may involve
an unbalanced decomposition. On the other hand, such an unbalanced decomposition is only necessary in 
every second step. This trick to bound the number of holes by three was used by Ruzzo \cite{Ru80}  in his analysis
of space-bounded alternating Turing machines.

The TSLP produced in the first step can be identified with its derivation tree, which has logarithmic depth. Thanks to the fact 
that all nonterminals have rank at most three, we can encode the derivation tree by a tree with $O(\sigma)$ many
labels. Moreover, this derivation tree is weakly balanced in the following sense. For each edge $(u,v)$ 
in the derivation tree such that both $u$ and $v$ are internal nodes, the derivation tree is balanced at $u$ or $v$.
In a second step, {\sf TreeBiSection} computes the minimal dag of the derivation tree. Due to its balanced shape,
we can show that this minimal dag has size at most $O\big(\frac{n}{\log_\sigma n}\big)$. The 
nodes of this dag are the nonterminals of our final TSLP.

We prove that the algorithm sketched above can be implemented so that it works in logarithmic space, see Section~\ref{sec-time-space}.
Concerning the running time, we show the upper bound of $O(n \log n)$. 
An alternative algorithm {\sf BU-Shrink} (for bottom-up shrink) with a linear running time is presented in Section~\ref{sec:lin_time}. 
In a first step, it merges nodes of the input tree in a bottom-up way. Thereby it constructs a
partition of the input tree into $O\big(\frac{n}{\log_\sigma n}\big)$ many connected parts of size at most
$c \cdot \log_\sigma n$, where $c$ is a suitably chosen constant.  Each such connected part has 
at most three neighbors, one at the top and two at the bottom. By associating with each such connected
part a nonterminal, we obtain a TSLP for the input tree 
consisting of a start rule $S \to s$, where $s$ consists of $O\big(\frac{n}{\log_\sigma n}\big)$ many 
nonterminals of rank at most two, each having a right-hand side consisting of $c \cdot \log_\sigma n$
many terminal symbols.
By choosing the constant $c$ suitably, we can  
(using the formula for the number of binary trees of size $m$, which is given by the Catalan numbers) bound the number 
of different subtrees of these right-hand sides
by $\sqrt{n} \in O\big(\frac{n}{\log_\sigma n}\big)$. This allows to build up the right-hand sides for the non-start nonterminals 
using $O\big(\frac{n}{\log_\sigma n}\big)$ many nonterminals. A combination of this algorithm with {\sf TreeBiSection} --
we call the resulting algorithm {\sf BU-Shrink+TreeBiSection} --
finally yields a linear time algorithm for constructing a TSLP of size 
$O\big(\frac{n}{\log_\sigma n}\big)$ and depth $O(\log n)$.

Let us remark that our size bound $O\big(\frac{n}{\log_\sigma n}\big)$ does not contradict any information-theoretic lower bounds (it actually
matches the information theoretic limit):
Consider for instance unlabelled ordered trees. There are roughly $4^n/\sqrt{\pi n^3}$ such trees with $n$ nodes. Hence under any binary
encoding of unlabelled trees, most trees are mapped to bit strings of length at least $2n - o(n)$.
But when encoding a TSLP of size $m$ into a bit string, another $\log(m)$-factor arises. Hence, a 
TSLP of size $O\big(\frac{n}{\log n}\big)$ is encoded by a bit string of size $O(n)$.

It is also important to note that our size bound $O\big(\frac{n}{\log_\sigma n}\big)$ only holds for trees, where the maximal rank
is bounded by a constant. In particular, it does not directly 
apply to unranked trees (that are, for instance, the standard tree model for XML), which is in contrast to top dags.
To overcome this limitation, one can transform an unranked tree of size $n$ into its
first-child-next-sibling encoding \cite[Paragraph 2.3.2]{Knuth68}, 
which is a ranked tree of size $n$. Then, the first-child-next-sibling encoding
can be transformed into a TSLP of size $O\big(\frac{n}{\log_\sigma n}\big)$.

\medskip
\noindent
{\bf Transforming formulas into circuits.}
Our main result has an interesting application for the classical problem of transforming formulas into small circuits, which will be presented
in Section~\ref{sec-circuits}.
Spira \cite{Spira71} has shown that for every Boolean formula of size $n$ there exists an equivalent Boolean circuit of depth
$O(\log n)$ and size $O(n)$. Brent \cite{Brent74} extended Spira's theorem to formulas over arbitrary semirings and moreover
improved the constant in the $O(\log n)$ bound. Subsequent improvements that mainly concern constant factors can be found in \cite{BonetB94,BshoutyCE95}.
An easy corollary of our $O\big(\frac{n}{\log_\sigma n}\big)$ bound for TSLPs is that for every 
(not necessarily commutative) semiring (or field), every formula of size $n$, in which only $m \leq n$ different
variables occur, can 
be transformed into a circuit of depth $O(\log n)$ and size $O\big(\frac{n \cdot \log m}{\log n}\big)$. Hence, we refine the size bound
from $O(n)$ to $O\big(\frac{n \cdot \log m}{\log n}\big)$ (Theorem~\ref{thm-circuit}). The transformation can be achieved in logspace and,
alternatively, in linear time.
Another interesting point of our formula-to-circuit conversion is that 
most of the construction (namely the construction of a TSLP for the input formula) is purely syntactic.
The remaining part (the transformation of the TSLP into a circuit) is straightforward. In contrast, the
constructions from \cite{BonetB94,Brent74,BshoutyCE95,Spira71} construct a log-depth circuit from a formula
in one step.

\medskip
\noindent
{\bf Related work.}
Several papers deal with algorithmic problems on trees that are succinctly represented by TSLPs,
see \cite{Loh12survey} for a survey. Among other problems, equality checking and the evaluation of tree
automata can be done in polynomial time for TSLPs. 

It is interesting to compare our $O\big(\frac{n}{\log_\sigma n}\big)$ bound 
with the known bounds for dag compression. A 
counting argument shows that for all unlabelled binary trees of size $n$, 
except for an exponentially small part,
the size of a smallest TSLP is $\Omega\big(\frac{n}{\log n}\big)$, and hence
(by our main result) $\Theta\big(\frac{n}{\log n}\big)$. This implies that also the average size of the minimal TSLP, where the average
is taken for the uniform distribution on unlabelled binary trees of size $n$, is $\Theta\big(\frac{n}{\log n}\big)$. In contrast, the average size of 
the minimal dag for trees of size $n$ is $\Theta\big(\frac{n}{\sqrt{\log n}}\big)$ \cite{FlajoletSS90}, whereas the worst-case size of the dag is
 $n$.
 
 In the context of tree compression, succinct data structures for trees are another big topic. There, the goal
is to represent a tree in a number of bits that asymptotically matches the information
theoretic lower bound, and at the same time allows efficient querying (in the best case in time $\mc O(1)$) of the data structure.
For unlabelled unranked trees of size $n$ there exist representations with $2n+o(n)$ bits that support navigation and some other tree queries
in time $\mc O(1)$ \cite{Jacobson89,MunroR01}. This result has been extended to labelled trees, where $(\log \sigma) \cdot n + 2n + o(n)$ 
bits suffice when $\sigma$ is the number of node labels \cite{FerraginaLMM09}. See \cite{RaRa13} for a survey.
 
In view of Theorem~\ref{thm-circuit} on arithmetical circuits, we should mention the following
interesting difference between commutative and noncommutative semirings:
By a classical result of Valiant et al., every  arithmetical circuit of polynomial size and degree over a commutative semiring
can be restructured into an equivalent unbounded fan-in arithmetical circuit of polynomial size and logarithmic depth \cite{ValiantSBR83}.
This result fails in the noncommutative case:
In \cite{Kosaraju90},  Kosaraju gave an example of a circuit family of linear size and degree over a 
noncommutative semiring that is not equivalent to a polynomial size circuit family of depth $o(n)$.
In Kosaraju's example, addition (multiplication, respectively) is the union (concatenation, respectively) of languages. A similar
example was given by Nisan in \cite{Nisan91}.

\section{Computational models}

We will consider time and space bounds for computational problems. For time bounds, we will use the standard
RAM model. We make the assumption that for an input tree of size $n$, arithmetical operations on numbers with $O(\log n)$ bits can be carried
out in time $O(1)$.  We assume that the reader has some familiarity with logspace computations, see e.g. \cite[Chapter~4.1]{AroraBarak} for more details.
A function can be computed in logspace, if it can be computed on a Turing machine with three tapes: a read-only input tape, a write-only output tape, and 
a read-write working tape of length $O(\log n)$, where $n$ is the length of the input. It is an important fact that if functions $f$ and $g$ can be computed
in logspace, then the composition of $f$ and $g$ can be computed in logspace as well. We will use this fact implicitly all over the paper.

\section{Strings and Straight-Line Programs}
\label{sec-strings}

Before we come to grammar-based tree compression, let us briefly discuss grammar-based
string compression. A \emph{straight-line program}, briefly SLP, is a context-free grammar that
produces a single string. Formally, it is defined as a tuple $\mc G = (N,\Sigma, S, P)$, where $N$ is a 
finite set of nonterminals, $\Sigma$ is a finite set of terminal symbols ($\Sigma \cap N = \emptyset$), 
$S \in N$ is the start nonterminal, and $P$ is a finite
set of rules of the form $A \to w$ for $A \in N$, $w \in (N \cup \Sigma)^*$
such that the following conditions hold:
\begin{itemize}
\item There do not exist rules $(A \to u)$ and $(A \to v)$ in $P$ with $u \neq v$.
\item The binary relation $\{ (A, B) \in N \times N \mid (A \to w) \in P,\;B \text{ occurs in } w \}$
is acyclic.
\end{itemize}
These conditions ensure that every nonterminal $A \in N$ produces a unique string $\val_{\mc G}(A) \in \Sigma^*$.
The string defined by $\mc G$ is $\val(\mc G) = \val_{\mc G}(S)$. The size of the SLP $\mc G$ is 
$|\mc G| = \sum_{(A \to w) \in P} |w|$, where $|w|$ denotes the length of the string $w$. 
SLPs are also known as {\em word chains} in the area of combinatorics on words \cite{BeBr87,Diw86}.

A simple induction shows that for every SLP $\mc G$ of size $m$ one has
$|\val(\mc G)| \leq \mathcal{O}(3^{m/3})$ \cite[proof of Lemma~1]{CLLLPPSS05}. On the other hand, it is straightforward to define an SLP
$\mc H$ of size $2n$ such that $|\val(\mc H)| \geq 2^n$. 
This justifies to see an SLP $\mc G$ as a compressed representation of the string $\val(\mc G)$, 
and exponential compression rates can be achieved in this way.

Let $\sigma \geq 2$ be the size of the terminal alphabet $\Sigma$.\footnote{The case $\sigma=1$ is not interesting, since
every string of length $n$ over a unary alphabet can be produced by an SLP of size $O(\log n)$.}
It is well-known that for every string $x \in \Sigma^*$ of length $n$ 
there exists an SLP $\mc G$ of size $O\big(\frac{n}{\log_\sigma n}\big)$ such that $\val(\mc G) = x$, see e.g. \cite{KiYa00}.
On the other hand, an information-theoretic argument shows that for all
strings of length $n$, except for an exponentially small part, the smallest SLP has size $\Omega\big(\frac{n}{\log_\sigma n}\big)$. 
For SLPs, one can, in contrast to other models like Boolean circuits, construct explicit strings that achieve
this worst-case bound as the following result shows. For $\sigma \geq 3$ this result is also shown in \cite{BeBr87}, using a slightly
different argument. 

\begin{proposition}
Let $\Sigma$ be an alphabet of size $\sigma \geq 2$. For every $n\ge \sigma^2$, one can construct in time polynomial in $n$ and $\sigma$
a string $s_{\sigma,n}\in\Sigma^*$ of length $n$ such that every SLP for $s_{\sigma,n}$ has size $\Omega\big(\frac{n}{\log_\sigma n}\big)$.
\end{proposition}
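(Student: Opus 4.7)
The plan is to let $s_{\sigma,n}$ be a length-$n$ prefix of a de Bruijn word. Recall that a linear de Bruijn word of order $k$ over $\Sigma$ has length $\sigma^k + k - 1$ and contains every string in $\Sigma^k$ exactly once as a factor; such a word can be built in time polynomial in $\sigma^k$, for instance by computing an Eulerian circuit in the de Bruijn graph on $\Sigma^{k-1}$. I would choose $k$ as the least integer with $\sigma^k + k - 1 > n$. The hypothesis $n \geq \sigma^2$ forces $k \geq 2$ (since $\sigma \leq n$), and minimality of $k$ gives $\sigma^{k-1} \leq n$, hence $k - 1 \leq \log_\sigma n$. Then $s_{\sigma,n}$ is the first $n$ symbols of any de Bruijn word of order $k$; the construction takes time polynomial in $\sigma^k + k = O(\sigma n)$, hence polynomial in $n$ and $\sigma$.

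The heart of the argument is the following bound: for every SLP $\mc G$, the string $\val(\mc G)$ has at most $|\mc G|\cdot(k-1)$ distinct factors of length $k$. To prove this, I would assign to each distinct length-$k$ factor $u$ a fixed occurrence in $\val(\mc G)$ and find the deepest node $v$ of the derivation tree of $\mc G$ whose yield contains that occurrence. Writing the rule at $v$ as $A \to \alpha_1 \cdots \alpha_\ell$, minimality forces the occurrence of $u$ to straddle some boundary between consecutive $\alpha_i$; charge $u$ to the leftmost such boundary. For a fixed rule and fixed boundary the starting position of the occurrence must lie among the last $k-1$ positions of $\val(\alpha_i)$, so at most $k-1$ distinct factors can be charged there. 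Summing over the $\sum_A (|r(A)|-1) \leq |\mc G|$ boundaries in all rules yields the claim.

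To conclude, since $s_{\sigma,n}$ is a prefix of a de Bruijn word of order $k$, its $n - k + 1$ length-$k$ windows are pairwise distinct. Combined with the bound above, every SLP $\mc G$ with $\val(\mc G) = s_{\sigma,n}$ satisfies
\[
|\mc G| \;\geq\; \frac{n-k+1}{k-1} \;=\; \Omega\!\left(\frac{n}{\log_\sigma n}\right),
\]
using $k - 1 \leq \log_\sigma n$ and the easy estimate $n - k + 1 \geq n/2$, which follows from $k \leq \log_\sigma n + 1 \leq n/2$ under $n \geq \sigma^2$. The charging argument and the Eulerian-circuit construction of de Bruijn words are standard; the only subtle point is to take the minimal $k$ for which the de Bruijn word already exceeds length $n$ and then \emph{truncate} to length $n$, rather than the more naive idea of padding a shorter de Bruijn word with repeated symbols -- the latter would lose an extra factor of $\sigma$ in the final bound when $\sigma$ is large, because the number of distinct length-$k$ factors would drop from $\Theta(n)$ to $\Theta(n/\sigma)$.
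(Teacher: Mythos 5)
Your proposal is correct and follows essentially the same route as the paper: take a length-$n$ prefix of an order-$k$ de Bruijn sequence (with $k-1\leq \log_\sigma n$), note it has $n-k+1$ pairwise distinct length-$k$ factors, and apply the factor-counting bound for SLPs, which the paper simply cites as the $mk$-Lemma of Charikar et al.\ while you reprove it with the standard boundary-charging argument. The only blemish is the intermediate claim $k\leq \log_\sigma n+1\leq n/2$, which can fail for very small $n$ (e.g.\ $n=4$, $\sigma=2$), but this affects only constants and not the $\Omega\big(\frac{n}{\log_\sigma n}\big)$ conclusion.
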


\begin{proof}
Let $r = \lceil\log_\sigma n\rceil \geq 2$. The sequence $s_{\sigma,n}$ is in fact a prefix of a de Bruijn sequence \cite{deBr46}. 
Let $x_1, \ldots, x_{\sigma^{r-1}}$ be a list of all words from $\Sigma^{r-1}$. 
Construct a directed graph by taking these strings as vertices and drawing an $a$-labelled edge 
($a \in \Sigma$) from $x_i$ to $x_j$ if $x_i = b w$ and $x_j = wa$ for some $w \in \Sigma^{r-2}$ and 
$b \in \Sigma$. This graph has $\sigma^r$ edges and 
every vertex of this graph has indegree and outdegree $\sigma$. Hence, it has
a Eulerian cycle, which can be viewed as a sequence $u, b_1, b_2, \ldots, b_{\sigma^r}$, where $u \in \Sigma^{r-1}$ is the start vertex,
and the edge traversed in the $i^\mathrm{th}$ step is labelled with $b_i \in \Sigma$. Define $s_{\sigma,n}$ 
as the prefix of $u b_1 b_2 \cdots b_{\sigma^r}$ of length $n$. The construction implies that
$s_{\sigma,n}$ has $n-r+1$ different substrings of length $r$.
 By the so-called $mk$-Lemma from \cite{CLLLPPSS05}, every SLP for $s_{\sigma,n}$ has size at least
 $$
 \frac{n-r+1}{r} > \frac{n}{r} - 1 \geq \frac{n}{\log_\sigma(n)+1} - 1.
 $$
 This proves the proposition.
 \end{proof}
In  \cite{ArvindRS14} a set of $n$ binary strings of length $n$ is constructed such that any concatenation
circuit that computes this set has size  $\Omega\big(\frac{n^2}{\log^2 n}\big)$. A concatenation circuit for a set $S$ of strings is simply an SLP such that
every string from $S$ is derived from a nonterminal of the SLP. Using the above construction, this lower bound can
be improved to $\Omega\big(\frac{n^2}{\log n}\big)$: Simply take the string $s_{2,n^2}$ and write it as $s_1 s_2 \cdots s_n$ with 
$|s_i|=n$. Then any concatenation circuit for $\{s_1, \ldots, s_n\}$ has size $\Omega\big(\frac{n^2}{\log n}\big)$.

\section{Trees and Tree Straight-Line Programs} \label{sec-trees}

For every $i \ge 0$, we fix a countably infinite set $\mc F_i$ of \emph{terminals} of rank $i$ 
and a countably infinite set $\mc N_i$ of \emph{nonterminals} of rank $i$. Let
$\mc F=\bigcup_{i\ge 0} \mc F_i$ and $\mc N=\bigcup_{i\ge 0} \mc N_i$.
Moreover, let $\mc X=\left\{x_1,x_2,\dots\right\}$ be a countably infinite set of \emph{parameters}.
We assume that the three sets $\mc F$, $\mc N$, and $\mc X$ are pairwise disjoint.
A \emph{labelled tree} $t = (V,\lambda)$ is a finite, rooted and ordered tree $t$ with node set $V$, whose nodes are labelled by elements from $\mc F\cup\mc N\cup \mc X$.
The function $\lambda : V \to \mc F\cup\mc N\cup \mc X$ denotes the labelling function.
We require that a node $v \in V$ with $\lambda(v) \in \mc F_k\cup \mc N_k$ has exactly $k$ children,
which are ordered from left to right. 
We also require that every node $v$ with $\lambda(v) \in \mc X$ is a leaf of $t$.
The size of $t$ is $|t| = |\{ v \in V \mid \lambda(v) \in \mc F\cup\mc N\}|$, i.e., we do not
count parameters.

We denote trees in their usual term notation, e.g.\ $b(a,a)$
denotes the tree with root node labelled by $b$ and two children, both labelled by $a$.
We define $\mc T$ as the set of all labelled trees.
The \emph{depth} of a tree $t$ is the maximal length (number of edges) of a path from the root to a leaf. 
Let $\labels(t) = \{ \lambda(v) \mid  v \in V\}$. For $\mc L \subseteq \mc F\cup\mc N\cup \mc X$ we let
 $\mc T(\mc L) = \{ t \mid \labels(t) \subseteq \mc L\}$.
We write $<_t$ for the depth-first-order on $V$. Formally, $u <_t v$ if $u$ is an ancestor of $v$ or if there exists a node
 $w$ and $i<j$ such that the $i^\mathrm{th}$ child of $w$
 is an ancestor of $u$ and the $j^\mathrm{th}$ child of $w$ is an ancestor of $v$.
 The tree $t\in \mc T$ is a \emph{pattern} if there do not exist different nodes that are labelled with the same parameter.
 For example, $f(x_1,x_1,x_3)$ is not a pattern, whereas $f(x_1,x_{21},x_{99})$ and  $f(x_1,x_2,x_3)$ are patterns.
A pattern $t\in \mc T$ is \emph{valid} if (i) $\labels(t)\cap \mc X=\left\{x_1,\dots,x_n\right\}$ for some $n\ge 0$
 and (ii) for all $u,v \in V$ with $\lambda(u) = x_i$, $\lambda(v) = x_j$ and $u <_t v$ we have $i < j$. 
 For example the pattern $f(x_1,x_{21},x_{99})$ is not valid, whereas $f(x_1,x_2,x_3)$ is valid.
 For a pattern $t$ we define $\valid(t)$ as the unique valid pattern which is obtained from $t$ by renaming the parameters.
 For instance, $\valid(f(x_{21},x_{2},x_{99})) = f(x_1,x_2,x_3)$.
 A valid pattern $t$ in which the parameters $x_1,\dots,x_n$ occur is also written as $t(x_1,\dots,x_n)$ and we write $\rank(t)=n$
 (the {\em rank} of the pattern).
 We say that a valid pattern $p$ of rank $n$ \emph{occurs} in a tree $t$ if there exist $n$ trees
 $t_1,\dots, t_n$ such that $p(t_1,\dots,t_n)$ (the tree obtained from $p$ by replacing the parameter $x_i$ by
 $t_i$ for $1 \leq i \leq n$)
 is a subtree of $t$.
 
 The following counting lemma will be needed several times:
 \begin{lemma} \label{lemma-counting}
The number of trees $t \in \mc T(\mc F)$ with $1 \leq |t| \leq n$ and $|\labels(t)| \leq \sigma$ is 
bounded by $\frac{4}{3} (4\sigma)^n$.
\end{lemma}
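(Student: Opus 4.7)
My plan is to decouple the combinatorics of the tree shape from the combinatorics of the labelling. The lemma is most naturally read as bounding, for any fixed set $L \subseteq \mc F$ of at most $\sigma$ labels, the number of trees $t$ with $\labels(t) \subseteq L$ and $1 \leq |t| \leq n$ — otherwise the count is vacuously infinite, since $\mc F$ is infinite. Under this reading, I would count trees of each fixed size $k \leq n$ separately and then sum a geometric series in $k$.

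For a given $k$, describe a tree by (i) its underlying ordered rooted tree shape on $k$ nodes — forgetting labels but remembering the number of children of every node — together with (ii) a label from $L$ assigned to every node whose rank equals the node's number of children. The number of shapes in (i) is the Catalan number $C_{k-1}$, which I would bound crudely by $4^{k-1}$. For (ii), each of the $k$ nodes independently has at most $\sigma$ admissible labels, so there are at most $\sigma^k$ valid labellings of any given shape. (This overcounts by admitting labels of the wrong rank, which is fine for an upper bound.) Multiplying, the number of trees of size exactly $k$ is at most $4^{k-1}\sigma^k = \frac{1}{4}(4\sigma)^k$.

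Summing over $k = 1, \ldots, n$ as a geometric series, and using $\sigma \geq 1$ so that $\frac{\sigma}{4\sigma - 1} \leq \frac{1}{3}$, yields a bound of $\frac{1}{3}(4\sigma)^n$, which is comfortably below $\frac{4}{3}(4\sigma)^n$. There is no real obstacle in the argument — it is a routine combinatorial count — and the stated constant $\frac{4}{3}$ appears to be deliberately loose, perhaps reflecting the cleaner but slightly weaker shape bound $C_k \leq 4^k$ on $k+1$-node trees rather than the sharper $C_{k-1} \leq 4^{k-1}$ used above.
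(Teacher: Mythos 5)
Your proposal is correct and follows essentially the same route as the paper: bound the number of ordered tree shapes of each size $k$ by a Catalan number at most $4^k$, multiply by at most $\sigma$ label choices per node, and sum the resulting geometric series. The only differences are bookkeeping — you use $C_{k-1}\le 4^{k-1}$ and $\sigma^k$ where the paper uses the looser $4^k$ and $\sigma^n$ — which merely sharpens the constant from $\frac{4}{3}$ to $\frac{1}{3}$; your reading of the statement (a fixed label set of size at most $\sigma$) is also exactly what the paper's proof and its later applications implicitly assume.
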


\begin{proof}
The number of different rooted ordered (but unranked) trees with $k$ nodes is 
$\frac{1}{k+1} \binom{2k}{k} \leq 4^k$ (the $k^{\text{th}}$ Catalan number, see e.g. \cite{Stan15}).
Hence, the number of trees in the lemma can be bounded by
$$
\sigma^n \cdot \sum_{k=1}^n 4^k  \leq \sigma^n \cdot \frac{4^{n+1}-1}{3} \leq \frac{4}{3} (4\sigma)^n.
$$
\end{proof}

We now define a particular form of context-free tree grammars (see \cite{tata97} for more details on context-free
tree grammars) with the property that exactly one tree is derived.
A \emph{tree straight-line program (TSLP)} is a tuple $\mc G = (N,\Sigma,S,P)$, where 
$N \subseteq \mc N$ is a finite set of nonterminals, $\Sigma \subseteq \mc F$ is a finite set of terminals,
$S \in N \cap \mc N_0$ 
is the start nonterminal, and $P$ is a finite set of rules of the form $A(x_1, \ldots, x_n) \to t(x_1, \ldots, x_n)$ (which is also briefly
written as $A \to t$), where $n \geq 0$, 
$A \in N \cap \mc N_n$ and $t(x_1, \ldots, x_n) \in \mc T(N \cup \Sigma \cup \{x_1, \ldots, x_n\})$ is a valid pattern
such that the following conditions hold:
\begin{itemize}
 \item For every $A \in N$ there is exactly one tree $t$ such that $(A \to t) \in P$.
 \item The binary relation $\{ (A,B) \in \mc N \times \mc N \mid  (A \to t) \in P, B \in  \labels(t)\}$ is acyclic.
\end{itemize}
Note that $N$ and $\Sigma$ are implicitly defined by the rules from $P$. Therefore, we can (and always will)
write a TSLP as a pair $\mc G = (S,P)$ consisting of rules and a start nonterminal.

The above conditions ensure that from every nonterminal $A \in N \cap \mc N_n$ exactly one valid pattern
$\val_{\mc G}(A) \in \mc T(\mc F\cup\{x_1, \ldots, x_n\})$ 
is derived by using the rules as rewrite rules in the usual sense. The tree defined by $\mc G$ is
$\val(\mc G) = \val_{\mc G}(S)$.
Instead of giving a formal definition, we show a derivation of $\val(\mc G)$ from $S$ in an example:
\begin{example}\label{example:SLCF}
Let $\mc G = (S,P)$, $S,A,B,C,D,E,F\in\mc N$, $a\in\mc F_0$, $b\in\mc F_2$ and
\begin{align*}
 P = \{&S\to A(B),\;A(x_1)\to C(F,x_1),\;B\to E(F),\;C(x_1,x_2)\to D(E(x_1),x_2),\\
&D(x_1,x_2)\to b(x_1,x_2),\;E(x_1)\to D(F,x_1),\;F\to a\}.
\end{align*}
A possible derivation of $\val(\mc G) = b(b(a,a),b(a,a))$ from $S$ is:
\begin{align*}
S&\to A(B)\to C(F,B)\to D(E(F),B)\to b(E(F),B)\to b(D(F,F),B)\to b(b(F,F),B)\\
&\to b(b(a,F),B) \to b(b(a,a),B) \to b(b(a,a),E(F))\to b(b(a,a),D(F,F))\\
&\to b(b(a,a),b(F,F))\to b(b(a,a),b(a,F))\to b(b(a,a),b(a,a))
\end{align*}
\end{example}
The size $|\mc G|$ of a TSLP $\mc G=(S,P)$ is the total size of all trees on the right-hand sides of $P$:
\begin{align*}
|\mc G|=\sum_{(A\to t)\in P}|t|
\end{align*}
For instance, the TSLP from Example~\ref{example:SLCF} has size 12.

Note that for the size of a TSLP we do not count nodes of right-hand sides that are labelled with a parameter.
To justify this, we use the following internal representation of valid patterns (which is also used in \cite{JezLo14approx}):
For every non-parameter node $v$ of a tree, with children $v_1, \ldots, v_n$ we store in a list all pairs $(i,v_i)$
such that $v_i$ is a non-parameter node. Moreover, we store for every symbol (node label) its rank. This allows
to reconstruct the valid pattern, since we know the positions where parameters have to inserted.

A TSLP is in {\em Chomsky normal form} if for every rule
$A(x_1, \ldots, x_n) \to t(x_1, \ldots, x_n)$  one of the following two cases holds:
\begin{eqnarray}
t(x_1, \ldots, x_n) & = & B(x_1, \ldots, x_{i-1}, C(x_i, \ldots, x_k), x_{k+1}, \ldots, x_n)   \text{ for } B,C \in \mc N
\label{nonterm-rules} \\
t(x_1, \ldots, x_n) & = & f(x_1, \ldots, x_n) \text{ for } f \in \mc F_n \label{term-rules}.
\end{eqnarray}
If the tree $t$ in the corresponding rule $A\to t$ is of type~\eqref{nonterm-rules}, we write $\mathrm{index}(A)=i$.
If otherwise $t$ is of type~\eqref{term-rules}, we write $\mathrm{index}(A)=0$.
One can transform every TSLP
efficiently into an equivalent TSLP in Chomsky normal form with a small size increase \cite{LoMaSS12}.
We mainly consider TSLPs in Chomsky normal form in the following.

We define the rooted, ordered derivation tree $\mc D_{\mc G}$ of a TSLP $\mc G = (S,P)$
in Chomsky normal form as for string grammars: 
The inner nodes of the derivation tree are labelled by nonterminals
and the leaves are labelled by terminal symbols.
Formally, we start with the root node of $\mc D_{\mc G}$ and assign it the label $S$.
For every node in $D_{\mc G}$ labelled by $A$, where the right-hand side $t$ of the rule for $A$ is of 
type~\eqref{nonterm-rules}, we attach a left child labelled by $B$ and a right
child labelled by $C$. If the right-hand side $t$ of the rule for $A$ is of 
type~\eqref{term-rules}, we attach a single child labelled by $f$ to $A$.
Note that these nodes are the leaves of $\mc D_{\mc G}$ and they 
represent the nodes of the tree $\val(\mc G)$. 
We denote by $\mathrm{depth}(\mc G)$ the depth of the derivation tree $\mc D_{\mc G}$.
For instance, the depth of the TSLP from Example~\ref{example:SLCF} is $4$.

A TSLP is monadic if every nonterminal has rank at most one. The following result was shown in \cite{LoMaSS12}:
\begin{theorem} \label{theo-monadic}
From a given TSLP $\mc G$ in Chomsky normal form such that every nonterminal has rank at most $k$ and every terminal symbol
has rank at most $r$, one can compute in time $O(|\mc G| \cdot k \cdot r)$ a monadic TSLP $\mc H$ with the 
following properties:
\begin{itemize}
\item $\val(\mc G) = \val(\mc H)$,
\item $|\mc H| \in O(|\mc G| \cdot r)$,
\item $\mathrm{depth}(\mc H) \in O(\mathrm{depth}(\mc G))$.
\end{itemize}
Moreover, one can assume that every production of $\mc H$ has one of the following four forms:
\begin{itemize}
\item $A \to B(C)$ for $A, C \in \mc N_0$, $B \in \mc N_1$, 
\item $A(x_1) \to B(C(x_1))$ for $A,B,C \in \mc N_1$,
\item $A \to f(A_1, \ldots, A_n)$ for $f \in \mc F_n$, $A,A_1,\ldots,A_n \in \mc N_0$,
\item $A(x_1) \to f(A_1, \ldots, A_{i-1},x_1, A_{i+1}, \ldots, A_n)$ for $f \in \mc F_n$, $A \in \mc N_1$, $A_1,\ldots,A_n \in \mc N_0$.
\end{itemize}
\end{theorem}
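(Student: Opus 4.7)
The plan is to construct $\mc H$ from $\mc G$ by processing the nonterminals of $\mc G$ in reverse topological order of the dependency DAG. For each nonterminal $A$ of rank $n$, I would introduce a small collection of monadic nonterminals in $\mc H$ whose combined values represent the pattern $\val_{\mc G}(A)(x_1,\ldots,x_n)$ via a path decomposition: for each parameter position $x_i$ a rank-$1$ ``path'' nonterminal describing the context from the nearest branching point down to $x_i$, together with rank-$0$ ``anchor'' nonterminals for the parameter-free side subtrees. The already-constructed monadic nonterminals of the callees will always suffice to define the new monadic rules for $A$.

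The construction splits into two cases according to the form of $A$'s rule. For a composite rule $A \to B(x_1,\ldots,x_{i-1},C(x_i,\ldots,x_k),x_{k+1},\ldots,x_n)$, the monadic encoding of $A$ is obtained by combining the previously constructed monadic encodings of $B$ and $C$ via composition rules of the forms $X \to Y(Z)$ and $X(x_1) \to Y(Z(x_1))$, each of constant size. For an atomic rule $A \to f(x_1,\ldots,x_n)$ with $f \in \mc F_n$, the monadic encoding uses $f$ directly in rules of the forms $X \to f(A_1,\ldots,A_n)$ and $X(x_1) \to f(A_1,\ldots,A_{i-1},x_1,A_{i+1},\ldots,A_n)$, each of size $n+1 \le r+1$. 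Correctness, i.e.\ $\val(\mc H) = \val(\mc G)$, then follows by an induction on the topological order, showing that assembling the path nonterminals of $A$ (with the anchor nonterminals plugged into the parameter-free positions) indeed produces $\val_{\mc G}(A)$.

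For the size bound $|\mc H| \in O(|\mc G|\cdot r)$, the crucial observation is that the total number of newly introduced monadic rules is $O(|\mc G|)$: composite rules of $\mc G$ contribute $O(1)$ composition rules apiece (bounded by the $O(|\mc G|)$ nonterminal call sites guaranteed by CNF), while each atomic rule of $\mc G$ contributes $O(1)$ amortized terminal rules of size $O(r)$. The depth bound $\mathrm{depth}(\mc H) = O(\mathrm{depth}(\mc G))$ holds because each derivation step in $\mc H$ expands to $O(1)$ steps in a derivation of $\mc G$, so the constant-factor overhead preserves logarithmic depth. The time bound $O(|\mc G|\cdot k\cdot r)$ reflects that processing each nonterminal of $\mc G$ involves iterating over its up to $k$ parameter slots and constructing rules of size $O(r)$.

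The main obstacle is avoiding an $O(r^2)$ blowup in the atomic case: a naive encoding would assign $n$ fresh monadic nonterminals, each with a size-$(n+1)$ rule, to an atomic rule $A \to f(x_1,\ldots,x_n)$, giving a per-rule contribution of $\Theta(r^2)$. The key is to share the rank-$0$ anchor nonterminals across the $n$ monadic pieces of the same $A$, keeping the total size introduced per atomic rule at $O(r)$. A secondary subtlety is verifying that all newly introduced rules, including those for the auxiliary anchor nonterminals, fall into exactly one of the four prescribed target forms without requiring a further round of normalization that would cascade the size.
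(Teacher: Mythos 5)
Your overall strategy (decompose each pattern $\val_{\mc G}(A)$ into unary path pieces along the spine leading to the parameters, with rank-$0$ anchors for the ground subtrees, and define these pieces recursively following the Chomsky normal form rules) is the same skeleton-tree construction that the paper imports from \cite{LoMaSS12} and sketches, for rank at most $3$, in the proof of Lemma~\ref{lemma-monadic-logspace}. However, there is a genuine gap in where you generate the terminal rules. You propose to emit the rules $X(x_1) \to f(A_1,\ldots,A_{i-1},x_1,A_{i+1},\ldots,A_n)$ at the \emph{atomic} production $A \to f(x_1,\ldots,x_n)$ of $\mc G$. At that point all children of $f$ are parameters, so the rank-$0$ anchors $A_j$ simply do not exist yet: which positions become ground, and with which trees, is only determined by later compositions, and different uses of $A$ instantiate them differently. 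Concretely, take $F(x_1,x_2)\to f(x_1,x_2)$ together with $m$ rules $B_j(x_1)\to F(x_1,C_j)$ with pairwise distinct rank-$0$ nonterminals $C_j$: the monadic grammar needs $m$ distinct rules of the form $X_j(x_1)\to f(x_1,C_j)$, one per \emph{composite} rule, and none of them can be produced from the single atomic rule for $F$. Correspondingly, your claim that composite rules contribute only constant-size composition rules of the forms $X\to Y(Z)$ and $X(x_1)\to Y(Z(x_1))$ is false: whenever a rank-$0$ substitution closes a spine leaf and a branching node of the skeleton stops branching, you must emit a rule of size up to $r+1$ containing the branching terminal together with the anchors accumulated at that node (this is exactly what happens in the paper's sketch, where the rule $A_0(x)\to B_0(f_1(B_1(x),B_4(C)))$ is created at the composite production $A(x_1,x_2)\to B(x_1,x_2,C)$). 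The size bound $O(|\mc G|\cdot r)$ still comes out right, but only because the number of such $O(r)$-size rules is bounded by the number of composite productions, not by the number of atomic ones; your ``amortization against atomic rules'' does not hold, as the example above shows.

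Your resolution of the ``main obstacle'' is also not sound as stated: sharing the anchor nonterminals among the $n$ monadic pieces of one atomic rule does not reduce the total size, since each of the $n$ rules still has $n+1$ symbols on its right-hand side, so the contribution would remain $\Theta(r^2)$ if all of them were created there. The actual fix is lazy creation: keep, per nonterminal of $\mc G$, a skeleton of size $O(k\cdot r)$ recording the branching terminals, the anchors collected so far, and the unary edge nonterminals (including the segments between branching points and above the topmost one, which your description omits), and emit only $O(1)$ new rules of size $O(r)$ per production of $\mc G$, reusing the callee's skeleton nonterminals instead of copying them (copy rules $A(x)\to B(x)$ are not among the four allowed forms). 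With that repair the argument becomes the construction of \cite{LoMaSS12}, and your correctness, depth and time analyses go through.
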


A commonly used tree compression scheme is obtained by writing down repeated subtrees only once. 
In that case all occurrences except for the first are replaced by a pointer to the first one.
This leads to a node-labelled \emph{directed acyclic graph} (dag).
It is known that every tree has a unique minimal dag, which is called 
\emph{the dag} of the initial tree.  An example can be found in 
Figure~\ref{fig:deriv-tree}, where the right graph is the dag of the middle tree.
A dag can be seen as a TSLP where every nonterminal has rank zero: The nonterminals are the nodes of the dag.
A node $v$ with label $f$ and $n$ children $v_1, \ldots, v_n$ corresponds to the rule $v \to f(v_1, \ldots, v_n)$.
The root of the dag is the start variable.
Vice versa, it is straightforward to transform a TSLP, where every variable has rank zero, into an equivalent dag.

The dag of a tree $t$ can be constructed in time $O(|t|)$~\cite{DoSeTa80}.
The following lemma shows that the dag of a tree can be also constructed in logspace.

\begin{lemma} \label{lemma-dag-logspace}
 The dag of a given tree can be computed in logspace.
\end{lemma}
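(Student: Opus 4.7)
The plan is to reduce the problem to two subroutines: (i) testing in logspace whether two subtrees $t_u$ and $t_v$ of the input tree $t$ are equal as labelled ordered trees, and (ii) computing, for a given node $v$, its canonical representative, where a node $v$ is declared \emph{canonical} if it is the first node in depth-first order whose rooted subtree equals $t_v$. Given these subroutines, the minimal dag is emitted by scanning the nodes $v$ of $t$ in depth-first order and, for each $v$, first checking via another inner scan whether $v$ is canonical; if it is, we output a dag node named by $v$, labelled $\lambda(v)$, whose ordered list of children consists of the canonical representatives of the children of $v$ in $t$. The start node of the dag is the canonical representative of the root of $t$ (which is the root itself). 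Each of the nested loops stores only a single pointer into the input, i.e.\ $O(\log n)$ bits, and the subroutines (i) and (ii) will be shown to also need only $O(\log n)$ bits of work-space, which they can reuse; hence the whole construction runs in logspace.

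Subroutine (i) is implemented by a synchronized depth-first traversal. Two pointers $p$ and $q$ are initialized to $u$ and $v$; at every step we verify that $\lambda(p) = \lambda(q)$ (and, implicitly, that the two nodes have the same rank), and then move $p$ and $q$ in lock-step using the elementary navigation operations "first child", "next sibling", and "parent". Whenever a local structural mismatch is detected we halt with answer \emph{no}, and when the traversal would backtrack out of $u$ (equivalently, out of $v$) we halt with answer \emph{yes}. Since up to the point of comparison the two subtrees have been verified to be structurally identical, the navigation choices made on the $p$-side transfer verbatim to the $q$-side, so no explicit stack is required; the only state kept is the pair $(p,q)$ and the roots $u,v$, a total of $O(\log n)$ bits. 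Subroutine (ii) then consists of iterating $u$ through the nodes in depth-first order and calling (i) on $(u,v)$ until equality is found; this is the canonical representative of $v$.

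Putting everything together, the outer loop emits the list of dag nodes, each with its label and its list of children represented as pointers to canonical representatives; this is exactly the minimal dag of $t$, up to a choice of node names. All loops and subroutine calls keep only a constant number of $O(\log n)$-bit node identifiers on the work tape.

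The main obstacle is that the synchronized traversal in (i) must not secretly allocate a stack: this is avoided by the observation that, as long as no mismatch has occurred, the two pointers agree on structure and so share a single sequence of navigation commands derivable from the current node alone. A secondary subtlety is to ensure that the output format really is the minimal dag; this is guaranteed because distinct canonical nodes have distinct subtrees by definition, and every node has a unique canonical representative, so the emitted dag is both complete and irredundant.
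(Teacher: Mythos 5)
Your proposal is correct and follows essentially the same route as the paper's proof: a logspace subtree-equality test via a synchronized preorder traversal (kept stackless by storing only a constant number of node pointers), combined with picking as dag nodes the first occurrences (canonical representatives, the paper uses the smallest node index) and wiring each dag node's children to the canonical representatives of its children in $t$. The only cosmetic difference is your choice of depth-first order rather than an arbitrary fixed numbering as the tie-breaking order, which changes nothing in the argument.
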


\begin{proof}
Assume that the node set of the input tree $t$ is $\{1, \dots, n\}$. We denote by $t[i]$ the subtree of $t$ rooted at node $i$.
Given two nodes $i,j$ of $t$ one can verify in logspace whether the subtrees $t[i]$ and $t[j]$ are isomorphic (we write $t[i] \cong t[j]$ for this)
by performing a preorder traversal over both trees and thereby comparing the two trees symbol by symbol.

The nodes and edges of the dag of $t$ can be enumerated in logspace as follows.
A node $i$ of $t$ is a node of the dag if there is no $j < i$ with $t[i] \cong t[j]$.
By the above remark, this can be checked in logspace.
Let $i$ be a node of the dag and let $j$ be the $k^{\mathrm{th}}$ child of $i$ in $t$. Then $j'$ is the $k^{\mathrm{th}}$ child
of $i$ in the dag where $j'$ is the smallest number such that $t[j'] \cong t[j]$. 
Again by the above remark this  $j'$ can be found in logspace.
\end{proof}

\section{Constructing a small TSLP for a tree}

  Let $t$ be a tree of size $n$ with $\sigma$ many different node labels.
  In this section we present two algorithms that each construct a TSLP for $t$
  of size $O\big(\frac{n}{\log_\sigma n}\big)$ and depth $O(\log n)$, assuming the maximal rank of symbols is bounded by a constant.
  Our first algorithm {\sf TreeBiSection} achieves this while only using logarithmic space, but needs time $O(n \cdot \log n)$.
  The second algorithm first reduces the size of the input tree and then performs 
   {\sf TreeBiSection} on the resulting tree, which yields a linear running time.

 \subsection{TreeBiSection} \label{sec-construction}

  {\sf TreeBiSection}  uses the well-known idea of splitting a tree recursively into smaller parts of roughly equal size,
  see e.g.~\cite{Brent74,Spira71}.   
  For a valid pattern $t = (V,\lambda) \in \mc T(\mc F \cup \mc X)$ and a node $v \in V$ we denote by $t[v]$ the tree $\valid(s)$, where $s$ is
  the subtree rooted at $v$ in $t$.
  We further write $t\setminus v$ for the tree $\valid(r)$, where $r$ is obtained from $t$ by
  replacing the subtree rooted at $v$ by a new parameter. 
  If for instance $t = h(g(x_1, f(x_2,x_3)), x_4)$ and $v$ is the $f$-labelled node, then 
  $t[v] = f(x_1,x_2)$ and $t \setminus v = h(g(x_1,x_2),x_3)$.
  The following lemma is well-known, at least for binary trees; see e.g. \cite{LeStHa65}. 
  
  \begin{lemma}\label{lemma:r-lemma}
 Let $t \in \mc T(\mc F \cup \mc X)$ be a tree with $|t|\ge 2$ such that every node has at most $r$ children (where $r \geq 1$).
  Then there is a node $v$ such that
  \[
   \frac{1}{2(r+2)} \cdot |t|  \; \le \; |t[v]| \; \le \; \frac{r+1}{r+2} \cdot |t| .
  \]
 \end{lemma}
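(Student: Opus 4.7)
The plan is to use a classical heavy-path walk. I define the weight of a node $u$ in $t$ to be $w(u) = |t[u]|$, i.e., the number of non-parameter descendants of $u$ (including $u$ itself). Starting at the root I build a sequence $u_0, u_1, u_2, \ldots$ where $u_0$ is the root and $u_{i+1}$ is any non-parameter child of $u_i$ of maximum weight. Let $v = u_k$ be the first node on this path with $w(v) \le \tfrac{r+1}{r+2}|t|$. Such a $v$ exists since any non-parameter leaf has weight $1$; moreover the root has weight $|t| > \tfrac{r+1}{r+2}|t|$, so $v$ is not the root and has a well-defined parent $u$ with $w(u) > \tfrac{r+1}{r+2}|t|$. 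This already gives the upper bound $|t[v]| \le \tfrac{r+1}{r+2}|t|$.

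For the lower bound, I use that $u$ has at most $r$ children and that, by the choice of $v$, every non-parameter child of $u$ has weight at most $w(v)$. Hence $w(u) \le 1 + r \cdot w(v)$ (the $1$ accounts for $u$ itself), so
\[
 w(v) \;\ge\; \frac{w(u)-1}{r} \;>\; \frac{1}{r}\!\left(\frac{r+1}{r+2}|t|-1\right) \;=\; \frac{(r+1)|t|-(r+2)}{r(r+2)}.
\]
A short calculation shows that $\tfrac{(r+1)|t|-(r+2)}{r(r+2)} \ge \tfrac{|t|}{2(r+2)}$ is equivalent, after clearing denominators, to $(r+2)|t| \ge 2(r+2)$, i.e.\ to $|t| \ge 2$, which is exactly the hypothesis.

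The only subtlety, which I expect to be the main nuisance rather than a genuine obstacle, is ensuring that the walk actually reaches $u$ with a well-defined non-parameter child $v$. This is automatic: whenever $w(u) > \tfrac{r+1}{r+2} \cdot 2 > 1$, the node $u$ must have a non-parameter proper descendant, and since parameters are always leaves, any path from $u$ down to such a descendant enters a non-parameter child in its first step. No other complications arise; the delicate point is merely verifying the arithmetic that produces the constant $\tfrac{1}{2(r+2)}$, which is tight precisely at $|t|=2$.
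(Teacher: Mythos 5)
Your proof is correct and follows essentially the same route as the paper's: a heaviest-child descent from the root, stopping at the first node whose subtree size falls below the threshold, and then bounding that node's subtree size from below via its parent. The only cosmetic difference is that the paper stops at the first node $v$ with $|t[v]| \le \frac{d+1}{d+2}\cdot|t|$ for $d$ the local degree and argues with the parent's actual degree $e$, whereas you use the uniform threshold $\frac{r+1}{r+2}\cdot|t|$ and the estimate $w(u) \le 1 + r\cdot w(v)$; both yield exactly the stated constants.
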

 
  \begin{proof}
    We start a search at the root node, checking at each node $v$ whether 
    $|t[v]| \leq  \frac{d+1}{d+2} \cdot |t|$, where $d$ is the number of children
   of $v$.  If the property does not hold, we continue the search at a child that spawns a largest subtree 
    (using an arbitrary tie-breaking rule). Note that we eventually reach a node such that $|t[v]| \leq  \frac{d+1}{d+2} \cdot |t|$:
    If $|t[v]| = 1$, then $|t[v]|  \leq  \frac{1}{2} |t|$ since $|t| \geq 2$. 
     
    So, let $v$ be the first node with $|t[v]| \leq  \frac{d+1}{d+2} \cdot |t|$, where $d$ is the number of children of $v$.
   We get 
   $$|t[v]| \leq  \frac{d+1}{d+2} \cdot |t| \leq \frac{r+1}{r+2} \cdot |t|.$$ 
   Moreover $v$ cannot be the root node. Let $u$ be its 
   parent node and let $e$ be the number of children of $u$. Since $v$ spans a largest subtree among the children of $u$, we get
   $e \cdot |t[v]| + 1 \geq |t[u]| \geq \frac{e+1}{e+2} \cdot |t|$, i.e., 
   \begin{eqnarray*}
      |t[v]| & \geq & \frac{e+1}{e (e+2)} \cdot |t| - \frac{1}{e}  \\
      & = & \left(\frac{e+1}{e (e+2)} -  \frac{1}{|t| \cdot e} \right)  \cdot |t|  \\
      & \geq  & \left(\frac{e+1}{e (e+2)} -  \frac{1}{2e}\right) \cdot  |t|  \\
      & =  &  \frac{1}{2(e+2)} \cdot |t|  \\
     &  \geq  & \frac{1}{2(r+2)} \cdot |t| .
   \end{eqnarray*}
  \end{proof}
    For the remainder of this section we refer with $\mathrm{split}(t)$ 
    to the unique node in a tree $t$ which is obtained by the procedure from the proof above. 
		Based on Lemma~\ref{lemma:r-lemma}
    we now construct a TSLP  $\mc G_t =(S,P)$ with $\val(\mc G_t)=t$ 
    for a given tree $t$.  It is {\em not} the final TSLP produced by {\sf TreeBiSection}.  
    For our later analysis, it is important to bound the number of parameters in the
    TSLP $\mc G_t$ by a constant. To achieve this, we use an idea from Ruzzo's paper~\cite{Ru80}. 
    
    We will first present the construction and analysis of $\mc G_t$ only for trees, where every node has at most two children, i.e.,
    we consider  trees from $\mc T(\mc F_0  \cup \mc F_1 \cup \mc F_2)$. Let us write $\mc F_{\leq 2}$ for $\mc F_0  \cup \mc F_1 \cup \mc F_2$.
    In Section~\ref{sec-large-degree}, we will consider trees of larger branching degree.
    For the case that $r=2$, Lemma~\ref{lemma:r-lemma} yields for every tree
    $s \in  \mc T(\mc F_{\leq 2} \cup \mc X)$ with $|s| \geq 2$ a node $v = \mathrm{split}(s)$ such that
    \begin{equation} \label{eq-balanced-splitting}
    \frac{1}{8} \cdot |s|  \; \le \; |s[v]| \; \le \; \frac{3}{4} \cdot |s| .
    \end{equation}
     Consider an input tree $t \in \mc T(\mc F_{\leq 2})$ (we assume that $|t| \geq 2$).
    Every nonterminal of $\mc G_t$ will be of rank at most three.
    Our algorithm stores two sets of rules, $P_{\mathrm{temp}}$ and $P_{\mathrm{final}}$. 
    The set $P_{\mathrm{final}}$ will contain the rules of the TSLP $\mc G_t$ and the rules from $P_{\mathrm{temp}}$ 
    ensure that the TSLP $(S,P_{\mathrm{temp}}\cup P_{\mathrm{final}})$ produces $t$ at any given time of the procedure.
    We start with the initial setting $P_{\mathrm{temp}}=\{S\to t\}$ and $P_{\mathrm{final}}=\emptyset$.
    While $P_{\mathrm{temp}}$ is non-empty we proceed for each rule $(A\to s)\in P_{\mathrm{temp}}$ as follows:
    
    Remove the rule from $P_{\mathrm{temp}}$. Let $A \in \mc N_r$. 
    If $r \leq 2$ we determine the node $v=\mathrm{split}(s)$ in $s$.
    Then we split the tree $s$ into the two trees $s[v]$ and $s\setminus v$.
    Let $r_1=\rank(s[v])$, $r_2=\rank(s\setminus v)$ and let $A_1\in \mc N_{r_1}$ and $A_2\in \mc N_{r_2}$ be fresh nonterminals.
    Note that $r=r_1+r_2-1$.
    If the size of $s[v]$ ($s\setminus v$, respectively)
    is larger than $1$ we add the rule $A_1\to s[v]$ ($A_2\to s\setminus v$, respectively)
    to $P_{\mathrm{temp}}$.
    Otherwise we add it to $P_{\mathrm{final}}$ as a final rule.
    Let $k$ be the number of nodes of $s$ that are labelled by a parameter and that are smaller 
    (w.r.t.~$<_{s}$) than $v$.
    To link the nonterminal $A$ to the fresh nonterminals $A_1$ and $A_2$ we add the rule
    \begin{align*}
      A(x_1,\dots,x_r)\to A_1(x_1,\dots,x_k,A_2(x_{k+1},\dots,x_{k+r_2}),x_{k+r_2+1},\dots,x_{r})
  \end{align*}
    to the set of final rules $P_{\mathrm{final}}$.
    
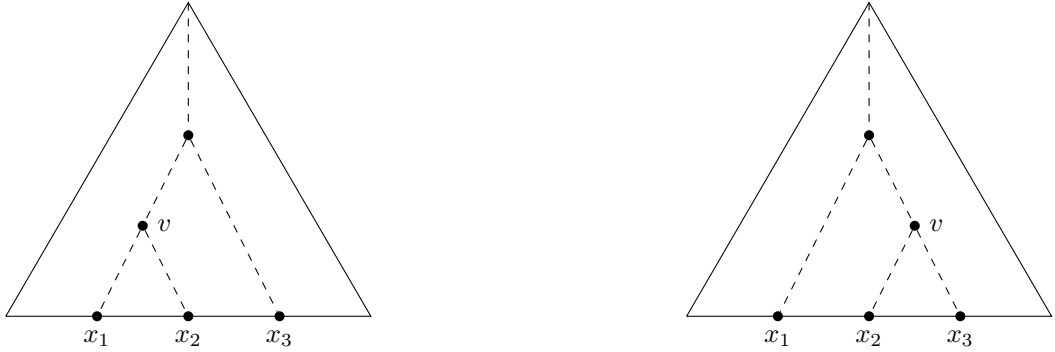
\begin{figure}[t]
 
\centering

\begin{minipage}[hbt]{0.4\textwidth} 
\centering
\begin{tikzpicture}[scale=0.8]

 \coordinate (A) at (0,0);
 \coordinate (B) at (6,0);
 \coordinate (C) at (3,5.2);
 
 \node[knot] (M) at (3,3){};
 
 \node[knot, label=below:$x_1$] (X1) at (1.5,0){}; 
 \node[knot, label=below:$x_2$] (X2) at (3,0){};
 \node[knot, label=below:$x_3$] (X3) at (4.5,0){};

 \node[knot, label=right:$v$] (V) at (2.25,1.5){};
 
 \draw (A)--(B)--(C)--(A);
 \draw[dashed] (C) -- (M) -- (V) -- (X1);
 \draw[dashed] (V) -- (X2);
 \draw[dashed] (M) -- (X3);

\end{tikzpicture}
\end{minipage}
\hfill
\begin{minipage}[hbt]{0.4\textwidth} 
\centering

\begin{tikzpicture}[scale=0.8]

 \coordinate (A) at (0,0);
 \coordinate (B) at (6,0);
 \coordinate (C) at (3,5.2);
 
 \node[knot] (M) at (3,3){};
 
 \node[knot, label=below:$x_1$] (X1) at (1.5,0){}; 
 \node[knot, label=below:$x_2$] (X2) at (3,0){};
 \node[knot, label=below:$x_3$] (X3) at (4.5,0){};

 \node[knot, label=right:$v$] (V) at (3.75,1.5){};
 
 \draw (A)--(B)--(C)--(A);
 \draw[dashed] (C) -- (M) -- (V) -- (X3);
 \draw[dashed] (V) -- (X2);
 \draw[dashed] (M) -- (X1);
 
 \end{tikzpicture}

\end{minipage}
 \caption{Splitting a tree with three parameters}
 \label{fig:threevar}
\end{figure}

    To bound the rank of the introduced nonterminals by three we handle rules $A\to s$ with $A \in \mc N_3$ as follows.
    Let $v_1,v_2$, and $v_3$ be the nodes of $s$ labelled by the parameters $x_1,x_2$, and $x_3$, respectively.
    Instead of choosing the node $v$ by $\mathrm{split}(s)$ we set $v$ to the lowest common ancestor of $(v_1,v_2)$ or $(v_2,v_3)$, 
    depending on which one has the greater distance from the root node (see Figure~\ref{fig:threevar}).
    This step ensures that the two trees $s[v]$ and $s \setminus v$ have rank 2, so in the next step each of these
    two trees will be split in a balanced way according to \eqref{eq-balanced-splitting}. 
     As a consequence, the resulting TSLP $\mc G_t$ has depth $O(\log |t|)$ but size $O(|t|)$.
    
     \begin{example}\label{example:TSLP}
     If we apply our construction to the binary tree $t = b(b(a,a),b(a,a))$ we get the TSLP $\mc G_t = (S,P)$
     with the following rules, where 
     $A,B,C,\ldots, L\in \mc N$, $a\in\mc F_0$, and $b\in\mc F_2$:
\begin{align*}
 P = \{&S\to A(B),\;A(x_1)\to C(D,x_1),\;B\to E(F),\;C(x_1,x_2)\to G(H(x_1),x_2),\;D\to a,\\
&E(x_1)\to I(J,x_1),\;F\to a,\;G(x_1,x_2)\to b(x_1,x_2), \; H(x_1)\to K(L(x_1)),\\
&I(x_1,x_2)\to b(x_1,x_2),\;J\to a,\;K(x_1,x_2)\to b(x_1,x_2),\;L\to a\}.
\end{align*}
\end{example}

   In the next step we want to compact the TSLP by considering the dag of the derivation tree.
   For this we first build the derivation tree $\mc D_t := \mc D_{\mc G_t}$ from the TSLP $\mc G_t$ as described
   above. The  derivation tree for the TSLP $\mc G_t$ from Example~\ref{example:TSLP} is shown on the left of
   Figure~\ref{fig:deriv-tree}.
   
\begin{figure}[t]
\hspace*{\fill}

\begin{minipage}{0.3\textwidth}
\centering
\begin{tikzpicture}[->,level distance=7mm]

\tikzset{level 1/.style={sibling distance=30mm}}
\tikzset{level 2/.style={sibling distance=16mm}}
\tikzset{level 3/.style={sibling distance=11mm}}
\tikzset{level 4/.style={sibling distance=9mm}}

\node  (start){$S$}
  child {node  (a) {$A$}
    child {node  (c) {$C$}
      child {node  (g) {$G$}
        child {node (b1) {$b$}}
      }
      child {node  (h) {$H$}
        child {node  (k) {$K$}    
          child {node (b2) {$b$}}
        }
        child {node  (l) {$L$}
          child {node (a1) {$a$}}
        }
      }
    }
    child {node  (d) {$D$}
      child {node  (f2) {$a$}}
    }
  }
  child {node  (b) {$B$}
    child {node  (e) {$E$}
      child {node  (i) {$I$}
        child {node (b2) {$b$}}
      }
      child {node  (j) {$J$}
        child {node  (f3) {$a$}}
      }
    }
    child {node  (f) {$F$}
      child {node  (f3) {$a$}}
    }
  } 
;
\end{tikzpicture}
\end{minipage}
\hspace*{1.2cm}
\begin{minipage}[hbt]{0.3\textwidth}
\centering
\begin{tikzpicture}[->,level distance=8mm]
\tikzset{level 1/.style={sibling distance=18mm}}
\tikzset{level 2/.style={sibling distance=11mm}}
\tikzset{level 3/.style={sibling distance=9mm}}

\node  (start){$1$}
  child {node  (a) {$1$}
    child {node  (c) {$1$}
      child {node (b1) {$b$}}
      child {node  (h) {$1$}
        child {node (b2) {$b$}}
        child {node (a1) {$a$}}
      }
    }
    child {node  (f2) {$a$}}
  }
  child {node  (b) {$1$}
    child {node  (e) {$1$}
        child {node (b2) {$b$}}
        child {node  (f3) {$a$}}
      }
    child {node (f3) {$a$}}
  }
;

\end{tikzpicture}
\end{minipage}
\hfill
\begin{minipage}[hbt]{0.3\textwidth} 
\centering
\begin{tikzpicture}[->,auto,node distance=11mm,arrow/.style={bend angle=45}]
  \node (1) {1};
  \node (2) [below left of=1] {1};
  \node (3) [below right of=1] {1};
  \node (4) [below left of=2] {1};
  \node (5) [below left of=4] {$b$};
  \node (6) [below right of=4] {1};
  \node (7) [below of=6] {$a$};
  \path
(1) edge (2)
    edge (3)
(2) edge (4)
    edge [bend left=45] (7)
(3) edge (6)
    edge [bend left=45] (7)
(4) edge (5)
    edge (6)
(6) edge (5)
    edge (7);
\end{tikzpicture}
\end{minipage}
\hspace*{\fill}
\caption{The derivation tree from Example~\ref{example:TSLP} }
\label{fig:deriv-tree}
\end{figure}
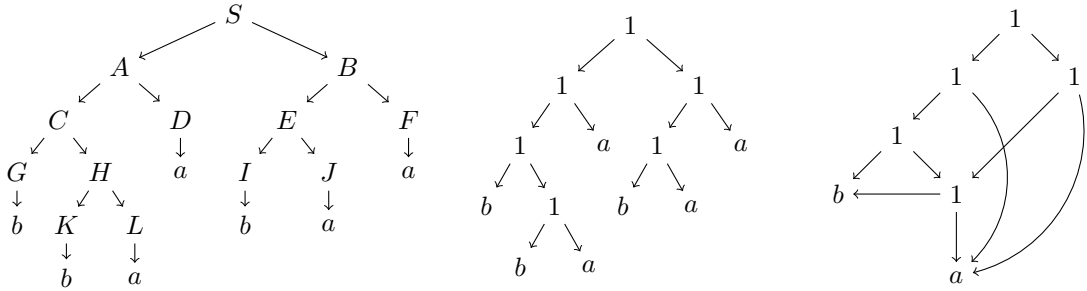

   We now want to identify some (but not all) nonterminals that produce the same tree.
   Note that if we just omit the nonterminal labels from the derivation tree, then there 
   might exist isomorphic subtrees of the derivation whose root nonterminals produce different trees. 
   This is due to the fact that we lost for an $A$-labelled node of the derivation tree with a left (right, respectively) child
   that is labelled with $B$ ($C$, respectively) the information at which argument position of $B$ the nonterminal $C$ is 
   substituted. To keep this information  we replace every label $A$ in the derivation tree with $\mathrm{index}(A) \in \{0,1,2,3\}$
   (the index of a nonterminal of a TSLP in Chomsky normal form was defined in Section~\ref{sec-trees}).
   Moreover, we remove every leaf $v$ and write its label into its parent node.
   We call the resulting tree the \emph{modified derivation tree} and denote it by $\mc D_t^*$.
   Note that $\mc D_t^*$ is a full binary tree with node labels from $\{1,2,3\} \cup \labels(t)$.
   The modified derivation tree for Example~\ref{example:TSLP}
   is shown in the middle of Figure~\ref{fig:deriv-tree}.
   
   The following lemma shows how to compact our grammar by considering the dag of $\mc D_t^*$.

	\begin{lemma}\label{lemma:subtrees-nonterm}
	Let $u$ and $v$ be nodes of $\mc D_t$ labelled by $A$ and $B$, respectively.
	Moreover, let $u'$ and $v'$ be the corresponding nodes in $\mc D_t^*$.
	If the subtrees $\mc D_t^*[u']$ and $\mc D_t^*[v']$ are isomorphic (as labelled ordered trees), then 
	$\val_{\mc G_t}(A)=\val_{\mc G_t}(B)$.
  \end{lemma}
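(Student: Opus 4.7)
The plan is to prove the statement by structural induction on the subtree $\mc D_t^*[u']$. The key conceptual observation is that for a TSLP $\mc G_t$ in Chomsky normal form, the valid pattern $\val_{\mc G_t}(A)$ is completely determined by the datum $\mathrm{index}(A)$ together with the values produced by the one or two children of the $A$-labelled node in $\mc D_t$. Since $\mc D_t^*$ is precisely obtained from $\mc D_t$ by recording at each internal node its index and at each leaf its terminal label, isomorphism of the subtrees in $\mc D_t^*$ carries exactly the information needed to reconstruct the value inductively.

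For the base case, $u'$ is a leaf of $\mc D_t^*$ whose label is some $f \in \labels(t)$. By the construction of $\mc D_t^*$ (every $\mc D_t$-leaf is merged into its parent), this means that in $\mc D_t$ the node $u$ has $\mathrm{index}(A) = 0$ and the unique child of $u$ carries the terminal label $f$, so the rule for $A$ is $A(x_1,\ldots,x_n) \to f(x_1,\ldots,x_n)$ with $n = \rank(f)$. Hence $\val_{\mc G_t}(A) = f(x_1,\ldots,x_n)$. The same argument applied to $v'$ gives $\val_{\mc G_t}(B) = f(x_1,\ldots,x_n)$.

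For the inductive step, suppose $u'$ is internal with label $i \in \{1,2,3\}$ and children $u_1', u_2'$. Then the rule for $A$ has the form
\[
A(x_1,\ldots,x_n) \to B'(x_1,\ldots,x_{i-1}, C'(x_i,\ldots,x_k), x_{k+1},\ldots,x_n),
\]
where $B'$ and $C'$ are the nonterminals labelling the left and right children of $u$ in $\mc D_t$ (which correspond to $u_1'$ and $u_2'$). The isomorphism $\mc D_t^*[u'] \cong \mc D_t^*[v']$ forces $v'$ to carry the same label $i$, so the rule for $B$ has the analogous shape with children $B''$ and $C''$, and moreover $\mc D_t^*[u_j'] \cong \mc D_t^*[v_j']$ for $j \in \{1,2\}$. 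The inductive hypothesis yields $\val_{\mc G_t}(B') = \val_{\mc G_t}(B'')$ and $\val_{\mc G_t}(C') = \val_{\mc G_t}(C'')$. Since $\val_{\mc G_t}(A)$ is obtained by substituting $\val_{\mc G_t}(C')$ (with suitably renumbered parameters) into the $i$-th parameter position of $\val_{\mc G_t}(B')$, and $\val_{\mc G_t}(B)$ arises from the identical substitution in terms of the primed values, equality $\val_{\mc G_t}(A) = \val_{\mc G_t}(B)$ follows.

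The only mildly delicate point — and the place I would verify most carefully — is the bookkeeping in the substitution step: one must check that knowing $\mathrm{index}(A)$ and the two child values is enough to determine which parameter of $\val_{\mc G_t}(B')$ gets replaced and how the remaining parameters are renumbered. This works because $\rank(B')$ and $\rank(C')$ are recoverable from $\val_{\mc G_t}(B')$ and $\val_{\mc G_t}(C')$ themselves (as the number of distinct parameters occurring in each valid pattern), so the position recorded by $\mathrm{index}(A)$ together with these ranks pins down the substitution uniquely.
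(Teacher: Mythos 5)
Your proposal is correct and follows essentially the same route as the paper's proof: induction on the subtree of $\mc D_t^*$, with the base case given by leaves carrying the same terminal label and the inductive step substituting the (equal, by induction) child values at the parameter position recorded by the common index label. Your closing remark about the bookkeeping is exactly the point the paper settles by observing $\rank(A)=\rank(B)=m+n-1$ and writing out the substitution explicitly, so nothing is missing.
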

  
  \begin{proof}
	We prove the lemma by induction over the size of the trees $\mc D_t^*[u']$ and $\mc D_t^*[v']$.
	Consider $u$ and $v$ labelled by $A$ and $B$, respectively. We have $\mathrm{index}(A) = \mathrm{index}(B) = i$. 
	For the induction base assume that $i = 0$.
	Then $u'$ and $v'$ are both leaves labelled by the same terminal. Hence, $\val_{\mc G_t}(A)=\val_{\mc G_t}(B)$ holds.
	For the induction step assume that $i > 0$. 	
	Let $A_1$ ($B_1$, respectively) be the label of the left child of $u$ ($v$, respectively) and let
	$A_2$ ($B_2$, respectively) be the label of the right child of $u$ ($v$, respectively). 
	By induction, we get $\val_{\mc G_t}(A_1)=\val_{\mc G_t}(B_1)=s(x_1,\dots,x_m)$ and $\val_{\mc G_t}(A_2)=\val_{\mc G_t}(B_2)=t(x_1,\dots,x_n)$.
	Therefore, $\rank(A)=\rank(B)=m+n-1$ and 
	$\val_{\mc G_t}(A) = s(x_1,...,x_{i-1},t(x_i,...,x_{i+n-1}),x_{i+n},...,x_{n+m-1}) =  \val_{\mc G_t}(B)$.
  \end{proof}
  By Lemma~\ref{lemma:subtrees-nonterm}, if two subtrees of $\mc D_t^*$ are isomorphic we can eliminate the nonterminal of the root node of one subtree.
  Hence, we construct the minimal dag $d$ of $\mc D_t^*$ to compact our TSLP.
  The minimal dag of the TSLP of Example~\ref{example:TSLP} is shown on the right of 
  Figure~\ref{fig:deriv-tree}.
  The nodes of $d$ are the nonterminals of the final TSLP produced by {\sf TreeBiSection}.
  For a nonterminal that corresponds to an inner node of $d$ (a leaf of $d$, respectively), we obtain
  a rule whose right-hand side has the form~\eqref{nonterm-rules} (\eqref{term-rules}, respectively).
  Let $n_1$ be the number of inner nodes of $d$ and $n_2$ be the number of leaves.
  Then the size of our final TSLP is $2n_1+n_2$, which is bounded by twice the number of nodes of $d$.
  The dag from Figure~\ref{fig:deriv-tree} gives the TSLP for the tree $b(b(a,a),b(a,a))$ described in Example~\ref{example:SLCF}.
  Algorithm~\ref{algorithm:SLCF} shows the pseudocode of  {\sf TreeBiSection}. For a valid pattern $s(x_1, \ldots, x_k)$ we denote
  by $\mathrm{lca}_s(x_i,x_j)$ the lowest common ancestor of the unique leafs that are labelled with $x_i$ and $x_j$.
  
  In Section~\ref{sec-time-space} we will analyze the running time of 
  {\sf TreeBiSection}, and we will present a logspace implementation. In Sections~\ref{sec-dag-size} and \ref{sec-size-treebisection}
  we will analyze the size of the produced TSLP.

  \begin{algorithm}[t]
\SetKwComment{Comment}{(}{)}
\SetKwInOut{Input}{input}
\SetKwInOut{Sub}{methods}
\Input{ binary tree $t$}
$P_{\mathrm{temp}} := \{S\to t\}$\\
$P_{\mathrm{final}} := \emptyset$\\
\While{$P_{\mathrm{temp}}\neq\emptyset$}
  {
	\ForEach{$(A\to s)\in P_{\mathrm{temp}}$}
		{
		$P_{\mathrm{temp}} := P_{\mathrm{temp}}\setminus\{A\to s\}$\\
		\eIf{$\rank(s)=3$}
			{ 
		        $v :=$ the lower of nodes  $\mathrm{lca}_s(x_1,x_2), \mathrm{lca}_s(x_2,x_3)$
			}
			{
			$v := \mathrm{split}(s)$
			}
		$t_1 := s[v]$; $t_2 := s\setminus v$ \\
		$r_1 := \rank(t_1)$; $r_2 := \rank(t_2)$\\
		Let $A_1$ and $A_2$ be fresh nonterminals.\\
		\ForEach{$i=1$ \KwTo $2$}
			{
			\eIf{$|t_i|>1$}
				{
				$P_{\mathrm{temp}} := P_{\mathrm{temp}}\cup\{A_i(x_1,\dots,x_{r_i})\to t_i\}$
				}
				{
				$P_{\mathrm{final}} := P_{\mathrm{final}}\cup\{A_i(x_1,\dots,x_{r_i})\to t_i\}$
				}
			}
		$r := r_1+r_2-1$\\
		Let $k$ be the number of nodes in $s$  labelled by parameters that are smaller than $v$ w.r.t. $<_s$.\\
		$P_{\mathrm{final}} := P_{\mathrm{final}}\cup\{A(x_1,\dots,x_r)\to A_1(x_1,\dots,x_k,A_2(x_{k+1},\dots,x_{k+r_2}),x_{k+r_2+1},\dots,x_{r})\}$
		}	
	}
Let $\mc G$ be the TSLP 	$(S,P_{\mathrm{final}})$. \\
Construct the modified derivation tree $\mc D^*_t$ of $\mc G$. \\
Compute  the minimal dag of $\mc D^*_t$ and let $\mc H$ be the corresponding TSLP. \\
  \Return TSLP $\mc H$
\caption{$\mathsf{TreeBiSection}(t,k)$  \label{algorithm:SLCF}}
\end{algorithm}

   \subsection{Running time and space consumption of  TreeBiSection} \label{sec-time-space}
  
  In this section, we show that {\sf TreeBiSection} can be implemented so that it works in logspace, and alternatively
  in time $O(n \cdot \log n)$. Note that these are two different implementations. 
   
        \begin{lemma} \label{lemma-space-time-treebisection}
		Given a tree $t \in \mc T(\mc F_{\leq 2})$ of size $n$ one can compute (i) in time $O(n \log n)$ and (ii)
		in logspace the TSLP produced by {\em TreeBiSection} on input $t$.
	\end{lemma}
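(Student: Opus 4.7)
The plan is to decompose the algorithm into three phases---(A) the recursive top-down splitting that builds $\mc G_t$, (B) the construction of the modified derivation tree $\mc D_t^*$, and (C) the dag minimization---and to bound each separately for both parts of the statement.

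For the $O(n\log n)$ running time, I would first run a preliminary DFS on $t$ to precompute subtree sizes of every node; relative subtree sizes inside any arising pattern $s$ (which is determined by its root plus its at most three cut nodes where parameters sit) are then computable in constant time, so one call to $\mathrm{split}(s)$ or to the lca computation used in the $\rank(s)=3$ case runs in $O(|s|)$ time. Since $|s[v]|+|s\setminus v|=|s|$, the patterns processed at a fixed recursion level partition $t$, and the total work per level is $O(n)$. The depth of the recursion is $O(\log n)$: a balanced split (Lemma~\ref{lemma:r-lemma} with $r=2$) shrinks each part to at most a $3/4$ fraction of its parent, and an lca split on a rank-$3$ pattern may be unbalanced but always yields two patterns of rank at most two, which are split in a balanced way in the very next step. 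Thus phase (A) costs $O(n\log n)$. Phase (B) is linear because $\mc G_t$ has size $O(n)$, so $\mc D_t^*$ has $O(n)$ nodes; phase (C) uses the linear-time dag algorithm of~\cite{DoSeTa80}.

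For the logspace bound I invoke Lemma~\ref{lemma-dag-logspace} together with closure of logspace under composition, so it suffices to compute $\mc D_t^*$ in logspace from $t$. The key observation is that every node of $\mc D_t$ admits a self-contained $O(\log n)$-bit identifier, namely a path $p\in\{L,R\}^*$ of length at most $\mathrm{depth}(\mc G_t)=O(\log n)$ from the root. Every pattern arising in the recursion is encoded by a constant number of pointers into $t$---its root and its (at most three) cut nodes---which also fits in $O(\log n)$ bits. Given $p$, we start from the trivial encoding of $t$ and read $p$ bit by bit, each time computing $\mathrm{split}$ or the required lca on the current pattern and overwriting the encoding with that of the chosen child; the sizes of sub-patterns of $t$ needed for these computations reduce to counting nodes in subtrees of $t$, which is standard logspace. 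In this way we can decide for any $p$ whether it indexes a node of $\mc D_t$ and, if so, emit its label in $\mc D_t^*$ (the index in $\{0,1,2,3\}$, or the terminal symbol if the node is a leaf). Iterating over all bit strings $p$ in some fixed order produces $\mc D_t^*$ in a canonical tree encoding, and composing with Lemma~\ref{lemma-dag-logspace} yields the output TSLP $\mc H$ in logspace.

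The main obstacle is the logspace implementation rather than the time analysis: a naive depth-first traversal of $\mc D_t$ would maintain a stack of depth $O(\log n)$ whose frames each hold an $O(\log n)$-bit pattern, giving $\Theta(\log^2 n)$ space. The trick is to trade time for space by recomputing the state of the decomposition from scratch along the path identifier $p$ whenever the label of a given node of $\mc D_t^*$ is queried, so that at any moment only a constant number of $O(\log n)$-bit pointers into $t$ need to be stored.
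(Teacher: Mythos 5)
Your proposal is correct and follows essentially the same route as the paper's proof: reduce everything to computing the modified derivation tree $\mc D_t^*$ (then compose with the linear-time/logspace dag construction), obtain $O(n\log n)$ from $O(\log n)$ splitting rounds of total cost $O(n)$ each, and obtain logspace by addressing nodes of $\mc D_t^*$ with $O(\log n)$-bit root-to-node paths while representing each pattern by a constant number of pointers into $t$ (root plus at most three cut nodes) and recomputing the pattern along the path instead of keeping a recursion stack. The paper's argument is the same, down to the $\mathrm{rep}(s)=(v_0,\dots,v_k)$ encoding and the prefix-by-prefix recomputation you describe as the space-saving trick.
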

	
	\begin{proof}
	         Let $t$ be the input tree of size $n$.
                 The dag of a tree can be computed in (i) linear time \cite{DoSeTa80} and (ii) in logspace by Lemma~\ref{lemma-dag-logspace}.
                 Hence, it suffices to show that the modified derivation tree $\mc D_t^*$ for $t$ can be computed
	        in time $O(n \cdot \log n)$ and in logspace.

	        For the running time let us denote with $P_{\mathrm{temp},i}$ the set of productions $P_{\mathrm{temp}}$ after $i$
	        iterations of the while loop. Moreover, let $n_i$ be the sum of the sizes of all right-hand sides in $P_{\mathrm{temp},i}$.
	        Then, we have $n_{i+1} \leq n_i$:		
	        When a single rule $A \to s$ is replaced with $A_1 \to t_1$ and $A_2 \to t_2$ then each non-parameter node in $t_1$ or $t_2$ is one of the nodes of $s$.
	        Hence, we have $|s| = |t_1|+|t_2|$ (recall that we do not count parameters for the size of a tree).
	        We might have $n_{i+1} < n_i$ since rules with a single terminal symbol on the right-hand side are put into $P_{\mathrm{final}}$.
	        We obtain $n_i \leq n$ for all $i$. Hence, splitting all rules in $P_{\mathrm{temp},i}$ takes time $O(n)$, and a single iteration
	        of the while loop takes time $O(n)$ as well.
		On the other hand, since every second split reduces the size of the tree to which the split is applied by a constant
		factor (see \eqref{eq-balanced-splitting}). Hence, the while loop is iterated at most $O(\log n)$ times.
		This gives the time bound.
		
		The inquisitive reader may wonder whether our convention of neglecting parameter nodes for the size of a tree affects the linear running time.
		This is not the case: Every right-hand side $s$ in $P_{\mathrm{temp},i}$ has at most three parameters, i.e., the total number of nodes in $s$
		is at most $|s|+3 \leq 4|s|$. This implies that the split node can be computed in time $O(|s|)$. Doing this for all right-hand sides in $P_{\mathrm{temp},i}$
		yields the time bound $O(n)$ as above.
			

	        For the logspace version, we
		first describe how to represent a single valid pattern occurring in $t$ in logspace and how to
		compute its split node.
		Let $s(x_1, \dots, x_k)$ be a valid pattern which occurs in $t$ and has $k$ parameters where $0 \le k \le 3$, 
		i.e., $s(t_1, \dots, t_k)$ is a subtree of $t$ for some subtrees $t_1, \dots, t_k$ of $t$.
		We represent the tree $s(x_1, \dots, x_k)$ by the tuple $\mathrm{rep}(s) = (v_0, v_1 \dots, v_k)$
		where $v_0, v_1, \dots, v_k$ are the nodes in $t$ corresponding to the roots of $s, t_1, \dots, t_k$, respectively.
		Note that $\mathrm{rep}(s)$ can be stored using $O((k+1) \cdot \log(n))$ many bits.
		Given such a tuple $\mathrm{rep}(s) = (v_0, \dots, v_k)$, we can compute in logspace
		the size $|s|$ by a preorder traversal of $t$, starting from $v_0$ and skipping subtrees rooted in the
		nodes $v_1, \dots, v_k$. We can also compute in logspace the split node $v$ of $s$:
		If $s$ has at most two parameters, then $v = \mathrm{split}(s)$.
		Note that the procedure from Lemma~\ref{lemma:r-lemma} can be implemented in logspace
		since the size of a subtree of $s$ can be computed as described before.
		If $s$ has three parameters, then $v$ is the lowest common ancestor of either
		$v_1$ and $v_2$, or of $v_2$ and $v_3$, depending on which node has the larger distance from $v_0$.
		The lowest common ancestor of two nodes can also be computed in logspace by traversing the paths
		from the two nodes to the root upwards. From $\mathrm{rep}(s) = (v_0, \dots, v_k)$ and a split node $v$
		we can easily determine $\mathrm{rep}(s[v])$ and $\mathrm{rep}(s \setminus v)$ in logspace.
		
		Using the previous remarks we are ready to present the logspace algorithm to compute $\mc D_t^*$.
		Since $\mc D_t^*$ is a binary tree of depth $O(\log n)$ we can identify a node of $\mc D_t^*$
		with the string $u \in \{0,1\}^*$ of length at most $c \cdot \lfloor \log n \rfloor$ that stores the path from the root to the node, where $c > 0$ is a suitable constant.
		We denote by $s_u$ the tree (with at most three parameters) described by a node $u$ of $\mc D_t^*$ in the sense of Lemma~\ref{lemma:subtrees-nonterm}.
		That is, if $u'$ is the corresponding node of the derivation tree $\mc D_t$ and $u'$ is labelled with the nonterminal
		$A$, then $s_u = \val_{\mc G_t}(A)$.
		
		To compute  $\mc D_t^*$, it suffices 
		for each string $w \in \{0,1\}^*$ of length $c \cdot \lfloor \log n \rfloor$ to check in logspace
		whether it is a node of  $\mc D_t^*$ and in case it is a node, to determine the label of $w$ in  $\mc D_t^*$.
		For this, we compute for each prefix $u$ of $w$, starting with the empty word, the tuple $\mathrm{rep}(s_u)$ and the label of $u$ in $\mc D_t^*$, or a bit 
		indicating that $u$ is not a node of $\mc D_t^*$  (in which case also $w$ is not a node of $\mc D_t^*$).
	         Thereby we only store the current bit strings $w,u$ and the 
		value of $\mathrm{rep}(s_u)$, which fit into logspace.
		If $u = \varepsilon$, then $\mathrm{rep}(s_u)$ consists only of the root of $t$.
		Otherwise, we first compute in logspace the size $|s_u|$ from $\mathrm{rep}(s_u)$.
		If $|s_u| = 1$, then $u$ is a leaf in $\mc D_t^*$ with label $\lambda(u)$ and no longer prefixes
		represent nodes in $\mc D_t^*$.
		If $|s_u| > 1$, then $u$ is an inner node in $\mc D_t^*$ and, as described above, we can compute in logspace from
		$\mathrm{rep}(s_u)$ the tuples $\mathrm{rep}(s_{u0})$ and $\mathrm{rep}(s_{u1})$,
		from which we can easily read off the label of $u$ from $\{1,2,3\}$. If $u=w$, then we stop, otherwise
		we continue with $u i$ and $\mathrm{rep}(s_{ui})$, where $i \in \{0,1\}$ is such that $ui$ is a prefix of $w$.		
	\end{proof} 
	
In view of the above logspace algorithm, it is interesting to remark that Gagie and 
Gawrychowski considered in \cite{GaGa10} the problem of computing in logspace
a small SLP for a given string. They present a logspace algorithm that achieves an
approximation ratio of $O(\min\{g, \sqrt{n/\log n}\})$, where $g$ is the size of a smallest
SLP and $n$ is the length of the input word.

 \subsection{Size of the minimal dag} \label{sec-dag-size}
 
 In order to bound the size  of the TSLP produced by {\sf TreeBiSection}
 we have to bound  the number of nodes in the dag of the modified derivation tree.
 To this end, we prove in this section a 
 general result about the size of dags of certain weakly balanced binary trees that might be of independent interest.
  
  Let $t$ be a binary tree and let $0 < \beta < 1$. 
  The {\em leaf size} of a node $v$ is the number of leaves of the subtree rooted at $v$. 
  We say that an inner node $v$ with children $v_1$ and $v_2$ is 
  {\em $\beta$-balanced} if the following holds: If 
  $n_i$ is the leaf size of $v_i$, then $n_1 \geq \beta n_2$ and $n_2 \geq \beta n_1$.
  We say that $t$ is {\em $\beta$-balanced} if the following holds:
  For all inner nodes $u$ and $v$ such that $v$ is a child of $u$, we have that $u$ is $\beta$-balanced
  or $v$ is $\beta$-balanced.

  \begin{theorem} \label{thm-small-dag}
  If $t$ is a $\beta$-balanced binary tree having $\sigma$ different node
  labels and $n$ leaves $($and hence $|t|, \sigma \leq 2n-1)$, then the size of the dag of $t$ is bounded by 
  $\frac{\alpha \cdot n}{\log_\sigma n}$, where $\alpha \in O(\log_{1+\beta}(\beta^{-1}))$  only depends on $\beta$.
  \end{theorem}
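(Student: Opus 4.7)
The plan is to count dag nodes by splitting the corresponding subtrees of $t$ according to their leaf size, using a threshold $k = \Theta(\log_\sigma n)$ chosen so that $(4\sigma)^{2k}$ is at most $c \cdot n/\log_\sigma n$ for a small constant $c$. For dag nodes whose subtree has at most $k$ leaves — and hence at most $2k-1$ nodes, because $t$ is a full binary tree — Lemma~\ref{lemma-counting} bounds the number of distinct such subtrees by $\tfrac{4}{3}(4\sigma)^{2k-1}$, which by the choice of $k$ is $O(n/\log_\sigma n)$.

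For the remaining dag nodes I bound the number of distinct subtrees by $|T'|$, where $T' \subseteq V$ is the set of nodes $v$ of $t$ with $n_v > k$. Partition $T'$ into its leaves (both children of $v$ lie outside $T'$), its binary internals (both inside), and its unary internals (exactly one inside). The $L$ leaves of $T'$ root pairwise disjoint subtrees of $t$, each with more than $k$ leaves, so $L \le n/k$, and the usual identity for full binary trees gives $L-1$ binary internals. Following the unique in-$T'$-child pointer further partitions the unary internals of $T'$ into at most $2L-1 \le 2n/k$ vertex-disjoint downward chains $v_0 \to v_1 \to \cdots \to v_j$, each terminating at a non-unary node of $T'$.

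The main obstacle is bounding the length $j+1$ of such a chain, and this is where $\beta$-balance enters. If a unary $v \in T'$ is itself $\beta$-balanced, its out-of-$T'$ child $v_2$ (with $n_{v_2} \le k$) satisfies $n_{v_2} \ge \tfrac{\beta}{1+\beta} n_v$, forcing $n_v \le k(1+\beta)/\beta$. By the global $\beta$-balance property of $t$, no two consecutive nodes of the chain can both be non-$\beta$-balanced, so among $v_0,\dots,v_{j-1}$ at least $\lfloor j/2 \rfloor$ are $\beta$-balanced, and at each such $v_i$ we have $n_{v_{i+1}} \le n_{v_i}/(1+\beta)$. Combined with $n_{v_0} \le k(1+2\beta)/\beta$ (which holds because either $v_0$ itself is $\beta$-balanced, or $v_1$ is and $n_{v_0}$ exceeds $n_{v_1}$ only by the leaf size of the out-child) together with $n_{v_j} > k$, this yields
\[
(1+\beta)^{\lfloor j/2 \rfloor} \;\le\; \frac{n_{v_0}}{n_{v_j}} \;\le\; \frac{1+2\beta}{\beta},
\]
so $j \in O(\log_{1+\beta}(\beta^{-1}))$. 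Hence the unary internals contribute $O\bigl(\tfrac{n}{k}\log_{1+\beta}(\beta^{-1})\bigr)$ to $|T'|$; substituting $k = \Theta(\log_\sigma n)$ and combining with the small-subtree bound yields a dag of size at most $\alpha n/\log_\sigma n$ with $\alpha \in O(\log_{1+\beta}(\beta^{-1}))$, as claimed.
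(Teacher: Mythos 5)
Your proposal is correct and follows essentially the same route as the paper's proof: split the distinct subtrees of $t$ at a leaf-size threshold $k = \Theta(\log_\sigma n)$, bound the small ones via the Catalan-style counting lemma, and bound the large ones by the size of $\mbox{top}(t,k)$ (your $T'$), with the unary-chain length controlled by the observation that $\beta$-balance forces the chain to shed a constant fraction of its leaf weight at least every other step. The only cosmetic difference is that you run the chain argument top-down (leaf size decreasing from $v_0$ to $v_j$) and close with a direct inequality, whereas the paper runs it bottom-up and closes by contradiction; these are the same estimate read in opposite directions.
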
 
  
  \begin{proof}
  Let us fix a tree $t = (V,\lambda)$ as in the theorem with $n$ leaves. 
  Moreover, let us fix a number $k$ that will be defined later.
  We first bound the number of different subtrees with at most $k$ leaves in $t$. 
  Afterwards we will estimate the size of the remaining \emph{top tree}. 
  The same strategy is used for instance in \cite{HeapM94,LiawL92}
  to derive a worst-case upper bound on the size of binary decision diagrams.
  
  \medskip
  \noindent
  {\em Claim 1.}  The number of different subtrees of $t$ with at most $k$ leaves
   is bounded by $d^k$ with $d = 16 \sigma^2$ (since $t$ is a binary trees, we could deduce a more precise bound $(4 \sigma^2)^k$, but this is not crucial).
  
   \medskip
  \noindent 
   A subtree of $t$ with at most $k$ leaves has at most $2k-1$ nodes, each of which is labelled with one of 
   $\sigma$ many labels. Hence, by Lemma~\ref{lemma-counting} we can bound the number of subtrees of $t$ with at most $k$ leaves
   by $\frac{4}{3} (4 \sigma)^{2k-1} = \frac{1}{3 \sigma} (4 \sigma)^{2k} \leq (16 \sigma^2)^k$.

   
    \medskip
  \noindent 
  Let $\mbox{top}(t,k)$ be the tree obtained from $t$ by removing all nodes with leaf size at most $k$.
   
  \medskip
  \noindent
  {\em Claim 2.}  The number of nodes of $\mbox{top}(t,k)$ is bounded by 
   $c\cdot\frac{n}{k}$, where $c \in O(\log_{1+\beta}(\beta^{-1}))$ only depends on $\beta$.
   
  \medskip
  \noindent
  The tree $\mbox{top}(t,k)$ has at most $n/k$ leaves since it is obtained from $t$ 
  by removing all nodes with leaf size at most $k$. 
  Each node in $\mbox{top}(t,k)$ has at most two children.
  Therefore, if we can show that the length of  \emph{unary chains} in 
  $\mbox{top}(t,k)$ is bounded by some $c \in O(\log_{1+\beta}(\beta^{-1}))$, then it follows that
  $\mbox{top}(t,k)$ has at most $2 c n/k$ many nodes.

  Let $v_1,\dots, v_m$ be a unary chain in $\mbox{top}(t,k)$ where $v_i$ is the single child node of $v_{i+1}$.
  Moreover, let $v'_i$ be the removed sibling of $v_i$ in $t$, see Figure~\ref{fig:topTree}.
  Note that each node $v'_i$ has leaf size at most $k$. 

  We claim that the leaf size of $v_{2i+1}$ is larger than $(1+\beta)^i k$ for all $i$ with $2i+1 \leq m$.
  For $i=0$ note that $v_1$ has leaf size more than $k$
  since otherwise it would have been removed in $\mbox{top}(t,k)$.
  For the induction step, assume that the leaf size of $v_{2i-1}$ is larger than $(1+\beta)^{i-1} k$.
  One of the nodes $v_{2i}$ and $v_{2i+1}$ must be $\beta$-balanced. 
  Hence, $v'_{2i-1}$ or $v'_{2i}$ must have leaf size more than $\beta(1+\beta)^{i-1} k$.
  Hence, $v_{2i+1}$ has leaf size more than $(1+\beta)^{i-1} k + \beta(1+\beta)^{i-1} k = (1+\beta)^{i} k$.

  Let $\ell = \log_{1+\beta}(\beta^{-1})$. If $m \geq 2\ell+3$, then  $v_{2\ell+1}$ exists and has 
  leaf size more than $(1+\beta)^\ell k = k/\beta$, which implies 
  that the leaf size of $v'_{2\ell+1}$ or $v'_{2\ell+2}$ (both nodes exist) is more than $k$, which is a contradiction.
  Hence, we must have $m \leq 2 \log_{1+\beta}(\beta^{-1}) + 2$, i.e., we can choose
  $c = 2 \log_{1+\beta}(\beta^{-1}) + 2$.
  Figure~\ref{fig:topTree} shows an illustration of our argument.	  
	  
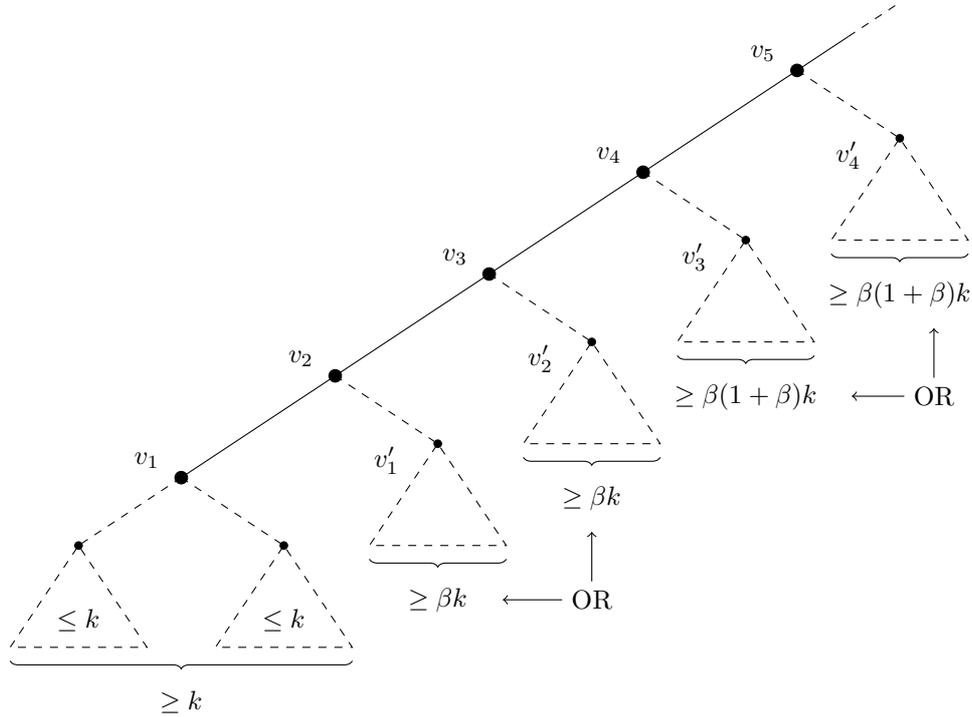
\begin{figure}[t]
\centering
\begin{tikzpicture}[scale=0.9, auto,swap]

 \coordinate (A) at ( 2.5, 2.5);
 \coordinate (B) at (12.25,9);
 \draw (A) -- (B); 
 
 \coordinate (C) at ( 13,9.5);

 \draw [dashed] (B) -- (C);  

 \coordinate (blA) at (0, 0);
 \coordinate (blB) at (2, 0); 
 \node[circle, draw, fill,scale=0.3] (blC) at (1,1.5) {};
 \draw [dashed] (A) -- (blC); 
 \draw [dashed] (blC) -- (blB)  -- (blA) -- (blC);
 \node at (1,0.4) {$\le k$};

 \foreach \x/\y/\geK /\name / \rootlabel in { 
 {2.5/2.5 / $\le k$ / $ $      / $v_1$}, 
 {4.75/4  / $ $     / $v_1'$   / $v_2$},
 {7/5.5   / $ $     / $v_2'$   / $v_3$},
 {9.25/7  / $ $     / $v_3'$   / $v_4$},
 {11.5/8.5/ $ $     / $v_4'$   / $v_5$}}
  {\draw [dashed] (\x,\y) -- (\x+1.5,\y-1) -- (\x +2.5 ,\y-2.5) -- (\x +0.5,\y-2.5) -- (\x+1.5,\y-1); 
   \node at (\x+1.5,\y- 2.1) {\geK};  
   \node at (\x+0.75,\y-1.25) {\name}; 
   \node at (\x-0.5,\y+0.25) {\rootlabel}; 
   \node[circle, draw, fill,scale=0.5] at (\x,\y) {}; 
   \node[circle, draw, fill,scale=0.3] at (\x+1.5,\y-1) {};   
   };
 
 \draw[decorate,decoration={brace,amplitude=3pt,mirror}] 
     (0,-0.2) -- (5,-0.2); 
     
 \foreach \x/\y/\labl in {
  {5.25/1.3/$\ge \beta k$},
  {7.5/2.8/$\ge \beta k$},
  {9.75/4.3/$\ge \beta (1+\beta) k$},
  {12/5.8/$\ge \beta (1+\beta) k$}}
  { \draw[decorate,decoration={brace,amplitude=3pt,mirror}] 
     (\x,\y) -- (\x+2,\y);
     \node at (\x+1,\y-0.6) {\labl }; 
   }; 
  \node at (2.5,-0.8) {$\ge  k$ };
   
  \node (or1) at (8.5,0.7) {OR};
  \node (or2) at (13.5,3.7) {OR};

  \draw[->] (or1) -- (7.2,0.7);
  \draw[->] (or1) -- (8.5,1.7);  
  
  \draw[->] (or2) -- (12.3,3.7);
  \draw[->] (or2) -- (13.5,4.7);  
     
 \end{tikzpicture}
 \caption{A chain within a top tree. The subtree rooted at $v_1$ has more than $k$ leaves.}
 \label{fig:topTree}
\end{figure}

\medskip
\noindent
  Using Claim 1 and 2 we can now prove the theorem:
  The number of nodes of the dag of $t$ is bounded by the number of
  different subtrees with at most $k$ leaves (Claim 1) 
  plus the number of nodes of the remaining tree 
  $\mbox{top}(t,k)$ (Claim 2).
  Let  $k=\frac{1}{2} \log_d n$. Recall that $d = 16 \sigma^2$ and hence $\log d = 4 + 2 \log \sigma$,
  which implies that $\log_d n \in \Theta(\log_\sigma n)$.
  With Claim~1 and 2 we get the following bound on the size of the dag, where $c \in O(\log_{1+\beta}(\beta^{-1}))$ is the bound from Claim~2:
  $$
  d^k + c \cdot \frac{n}{k} = d^{(\log_d n)/2} + 2 c\cdot \frac{n}{\log_d n} 
  = \sqrt{n} + 2c\cdot \frac{n}{\log_d n} \in O \left( \frac{c \cdot n}{\log_d n}\right) 
  = O \left( \frac{c \cdot n}{\log_\sigma n}\right)
  $$
  This proves the theorem.
  \end{proof} 
  Obviously, one could relax the definition of $\beta$-balanced by only requiring
  that if $(v_1, v_2, \ldots, v_\delta)$ is a path down in the tree, where $\delta$ is a constant, then 
  one of the nodes  $v_1, v_2, \ldots, v_\delta$ must be $\beta$-balanced. Theorem~\ref{thm-small-dag} would
  still hold with this definition (with $\alpha$ depending linearly on $\delta$).

Before we apply Theorem~\ref{thm-small-dag} to {\sf TreeBiSection} let us present a few other
results on the size of dags that are of independent interest. If $\beta$ is a constant, then a $\beta$-balanced binary tree $t$ has depth $O(\log |t|)$.
One might think that this logarithmic depth is responsible for the small dag size in Theorem~\ref{thm-small-dag}.
But this intuition is wrong:

  \begin{theorem} \label{thm-large-dag}
  There is a family of trees $t_n \in \mc T(\{a,b,c\})$ with
  $a \in {\mc F}_0$, $b \in {\mc F}_1$, and $c \in {\mc F}_2$
  $(n \geq 1)$ with the following properties:\footnote{The unary node label $b$ can 
  replaced by the pattern $c(d,x)$, where $d \in {\mc F}_0 \setminus \{a\}$ to obtain a binary tree.}
  \begin{itemize}
	\item $|t_n| \in \Theta(n)$
	\item The depth of $t_n$ is  $\Theta(\log n)$. 
	\item The size of the minimal dag of $t_n$ is at least $n$.
  \end{itemize}
  \end{theorem}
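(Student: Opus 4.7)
The plan is to exhibit an explicit family of trees $t_n$ satisfying the three properties. The conceptual content of the theorem is that logarithmic depth does not force a small dag: Theorem~\ref{thm-small-dag} applies only to $\beta$-balanced trees, so the construction must fail the $\beta$-balanced property while still keeping depth $\Theta(\log n)$. This guides the design: use a highly unbalanced shape (a caterpillar-type spine) but place balanced, inductively constructed gadgets along it.

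Concretely, for a parameter $n$ I would set $L = \lceil \log n \rceil$ and define $t_n$ as a right-leaning spine of $L$ many $c$-nodes whose left children are gadgets $G_1,\ldots, G_L$, i.e.\ $t_n = c(G_1, c(G_2, c(\ldots, c(G_L,a)\ldots)))$. Each $G_i$ is to have size roughly $n/L$, depth $O(\log n)$, dag size $\Omega(n/L)$, and the subtree sets of the $G_i$ are to be pairwise disjoint modulo the trivial leaf $a$. To enforce disjointness, prepend each gadget with a distinguishing $b$-chain tag, say $G_i = b^{h(i)}(H_i)$ for distinct tag lengths $h(i)$, and construct the inner trees $H_i$ recursively by the same scheme at a smaller scale. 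The base case is small constant-size trees chosen by hand.

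The verification then splits into three checks. For size, $|t_n| = L + \sum_i |G_i|$, which by the inductive hypothesis is $\Theta(\log n) + L\cdot \Theta(n/L) = \Theta(n)$. For depth, the spine contributes $L = \Theta(\log n)$, each tag contributes $O(\log n)$, and each $H_i$ contributes $O(\log n)$, giving depth $\Theta(\log n)$. For dag size, the spine subtrees $c(G_i,\cdot),\ldots,c(G_L,a)$ are $L$ pairwise distinct trees. Using the pairwise disjointness of the subtree sets of the $G_i$'s (apart from $a$), one gets dag$(t_n) \geq L + \sum_{i=1}^L \mathrm{dag}(G_i) - (L-1) \geq 1 + L\cdot \lceil n/L\rceil \geq n$, where the penultimate inequality uses the inductive dag bound on each $G_i$.

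The main obstacle is the pairwise disjointness of the gadget subtree sets, which is exactly where the $b$-chain tagging must do its work. One has to make sure that no ``outer'' tag of $G_j$ (i.e.\ a subtree of the form $b^k(H_j)$ with $k\leq h(j)$) accidentally appears inside another $G_i$ or inside some $H_j$, and that no $H_i$ shares a subtree with $H_j$ for $i \neq j$. This is where the recursion has to be set up so that all trees produced at lower recursion levels start with a $c$-root (after their own tag), so that chains of the form $b^k(\cdot)$ produced by the outer tagging cannot be confused with anything produced internally. Handling the cumulative effect of tag lengths on the total depth bound, and verifying that the disjointness invariant is preserved down through the recursion to the hand-chosen base case, are the technically delicate points.
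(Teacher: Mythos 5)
Your overall instinct (hang distinguishing material on unary $b$-chains and combine gadgets by a small top structure) points in the right direction, but two steps of your plan fail as stated. First, the disjointness invariant is unachievable: in any tree over $\{a,b,c\}$ with at least two nodes, an internal node of maximal depth has only leaf children, and since $a$ is the only rank-$0$ symbol, that node roots either $b(a)$ or $c(a,a)$. Hence among your $L=\lceil\log n\rceil\geq 3$ gadgets at least two share a subtree of size at least two, so ``pairwise disjoint modulo the leaf $a$'' is impossible, and the bound $\mathrm{dag}(t_n)\geq L+\sum_i\mathrm{dag}(G_i)-(L-1)$ is unjustified. The problem worsens down the recursion: at the bottom there are $\Theta(n)$ constant-size gadgets but only $O(1)$ distinct constant-size trees over $\{a,b,c\}$, so summing gadget dag sizes is the wrong accounting --- the dag counts distinct subtrees globally, independently of context. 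Relatedly, you never specify how the sibling trees $H_1,\dots,H_L$, all produced ``by the same scheme at a smaller scale,'' are made distinct from one another; if they coincide, the only distinct subtrees are those of one copy of $H$, the spine suffixes and the tag chains, so the dag collapses to size $O(n/\log n)+O(\log n)$.

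Second, the depth claim is wrong. Your recursion descends from scale $m$ to scale roughly $m/\log m$, so it runs for $\Theta(\log n/\log\log n)$ levels before reaching constant size, and each level adds, along a root-to-leaf path, spine length $\Theta(\log m)$ plus a tag of length up to $\Theta(\log m)$ (you need $L$ distinct tag lengths). These contributions nest, giving $D(m)\geq \log m + D(m/\log m)$ and hence depth $\Theta(\log^2 n/\log\log n)$, not $\Theta(\log n)$; the sentence ``each $H_i$ contributes $O(\log n)$'' conflates one level's contribution with the whole nested sum. The paper avoids both problems by not recursing at all: by Catalan counting (cf.\ Lemma~\ref{lemma-counting}) there are more than $n/\log n$ pairwise distinct binary trees $s_i\in\mc T(\{a,c\})$ with $\log n$ internal nodes; one sets $s_i'=b^{\log n}(s_i)$ and combines the $s_i'$ with a balanced binary pattern of depth $\Theta(\log n)$. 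Mere pairwise distinctness of the $s_i$ (no disjointness of subtree sets) already makes the $n$ chain subtrees $b^{j}(s_i)$ pairwise distinct, which yields the dag lower bound $n$ directly, and the depth is $\Theta(\log n)$ because there is only one level of gadgets. To repair your scheme you would have to make the recursion constant depth and obtain the many distinct small gadgets by a counting argument anyway, at which point you have essentially reproduced the paper's proof.
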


\begin{figure}[t]
\centering
\begin{tikzpicture}[scale=1,auto,swap,level distance=3mm]
\tikzset{level 1/.style={sibling distance=32mm}}
\tikzset{level 2/.style={sibling distance=16mm}}
\tikzset{level 7/.style={sibling distance=8mm}}  
\tikzset{level 8/.style={sibling distance=4mm}}
\node (eps) [knot] {} 
  child {node (0) [knot] {} 
	child {node (00) [knot] {}
	  child {node (000) [knot] {}
	    child {node (0 000) [knot] {}
	      child {node (00 000) [knot] {}
	        child {node (000 000) [knot] {}
	          child {node (00000000) [knot] {}
	            child {node (000000000) [knot] {}
	              child {node (0000000000) [knot] {}}
	              child {node (0000000001) [knot] {}}
	            }
	            child {node (000000001) [knot] {}}
	          }
	          child {node (00000001) [knot] {}
	            child {node (000000010) [knot] {}}
	            child {node (000000011) [knot] {}}
	          }
	        }
	      }
	    }  
	  }
	}
    child {node (01) [knot] {}
	  child {node (010) [knot] {}
	    child {node (010 0) [knot] {}
	      child {node (010 00) [knot] {}
	        child {node (010 000) [knot] {}
	          child {node (01000000) [knot] {}
	            child {node (010000000) [knot] {}}
	            child {node (010000001) [knot] {}
	              child {node (0100000000) [knot] {}}
	              child {node (0100000001) [knot] {}}	            
	            }
	          }
	          child {node (01000001) [knot] {}
	            child {node (010000010) [knot] {}}
	            child {node (010000011) [knot] {}}
	          }
	        }
	      }
	    }  
	  }
    }
  }
  child {node (1) [knot] {}
	child {node (10) [knot] {}
	  child {node (100) [knot] {}
	    child {node (1000) [knot] {}
	      child {node (10000) [knot] {}
	        child {node (100 000) [knot] {}
	          child {node (10000000) [knot] {}
	            child {node (100000000) [knot] {}}
	            child {node (100000001) [knot] {}}
	          }
	          child {node (10000001) [knot] {}
	            child {node (100000010) [knot] {}
	              child {node (1000000000) [knot] {}}
	              child {node (1000000001) [knot] {}}	            
	            }
	            child {node (100000011) [knot] {}}
	          }
	        }
	      }
	    }  
	  }
	}
    child {node (11) [knot] {}
      child {node (110) [knot] {}
	    child {node (1100) [knot] {}
	      child {node (11000) [knot] {}
	        child {node (110 000) [knot] {}
	          child {node (11000000) [knot] {}
	            child {node (110000000) [knot] {}}
	            child {node (110000001) [knot] {}}
	          }
	          child {node (11000001) [knot] {}
	            child {node (110000010) [knot] {}}
	            child {node (110000011) [knot] {}
	              child {node (1100000000) [knot] {}}
	              child {node (1100000001) [knot] {}}     
	            }
	          }
	        }
	      }
	    }  
	  }
    }  
  }
;
 \end{tikzpicture}
 \caption{Tree $t_{16}$ from the proof of Theorem~\ref{thm-large-dag}.}
 \label{fig:exTheorem6}
\end{figure}
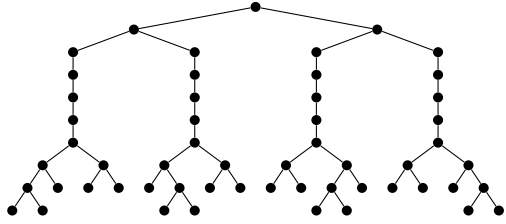

\begin{proof}
  Let $k = \frac{n}{\log n}$ (we ignore rounding problems with $\log n$, which only affect multiplicative factors).
  Choose $k$ different binary trees $s_1, \ldots, s_k \in  \mc T(\{a,c\})$, each having $\log n$ many internal nodes.
  This is possible: For $n$ large enough there are roughly $$\frac{4^{\log n}}{\sqrt{\pi \cdot \log^3 n}} = \frac{n^2}{\sqrt{\pi \cdot \log^3 n}} > n$$ 
  many different binary trees with $\log n$ many internal nodes.
  Then consider the trees $s'_i = b^{\log n}(s_i)$. Each of these trees has size $\Theta(\log n)$ and depth  $\Theta(\log n)$.
  Next, let $u_n(x_1, \ldots, x_k) \in \mc T(\{c,x_1, \ldots, x_k\})$
  be a binary valid pattern (all non-parameter nodes are labelled with $c$) of depth $\Theta(\log k) = \Theta(\log n)$ and size $\Theta(k) = \Theta(\frac{n}{\log n})$.
  We finally take $t_n = u_n(s'_1,\ldots, s'_k)$. Figure~\ref{fig:exTheorem6} shows the tree $t_{16}$.
  We obtain $|t_n| =  \Theta(\frac{n}{\log n}) + \Theta(k \cdot \log n) = \Theta(n)$. The depth of $t_n$ is  $\Theta(\log n)$.
  Finally, in the minimal dag for $t_n$ the unary $b$-labelled nodes cannot be shared. Basically, the pairwise
  different trees $t_1, \ldots, t_n$ work as different constants that are attached to the $b$-chains.
  But the number of $b$-labelled nodes in $t_n$ is $k \cdot \log n = n$.
\end{proof}
Note that the trees from Theorem~\ref{thm-large-dag} are not $\beta$-balanced for any constant $0 < \beta < 1$,
and by Theorem~\ref{thm-small-dag} this is necessarily the case.
Interestingly, if we assume that every subtree $s$ of a binary tree $t$ has depth at most $O(\log |s|)$, 
then H\"ubschle-Schneider and Raman \cite{HSR15} have implicitly shown the bound 
$O\big(\frac{n \cdot \log \log_\sigma n}{\log_\sigma n}\big)$ 
for the size of the minimal dag. 

\begin{theorem}[\cite{HSR15}] \label{thm-top-dag}
Let $\alpha$ be a constant. Then there is a constant $\beta$ that only depends on $\alpha$ such that
the following holds: If $t$ is a binary tree of size $n$ with $\sigma$ many node labels such that every subtree
$s$ of $t$ has depth at most $\alpha \log_2 n + \alpha$, then  the size of the dag of $t$ is at most 
$\frac{\beta \cdot n \cdot \log \log_\sigma n}{\log_\sigma n} + \beta$. 
\end{theorem}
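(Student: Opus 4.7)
The plan is to mimic the cutoff argument of Theorem~\ref{thm-small-dag}. For a parameter $k$ to be chosen, let $T' := \mathrm{top}(t, k)$ be the top tree of $t$ at leaf cutoff $k$, and let $L$ be its number of leaves (so $L \leq n/k$). Every subtree of $t$ either has at most $k$ leaves or is rooted at a node of $T'$, so the dag of $t$ has at most $R + |T'|$ nodes, where $R$ counts distinct subtrees of $t$ with at most $k$ leaves. By Lemma~\ref{lemma-counting}, $R \leq (16\sigma^2)^{2k}$ since each such subtree has at most $2k-1$ nodes. The challenge is bounding $|T'|$.

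For $u \in T'$, write $L_u$ and $D_u$ for the number of leaves and the depth of $T'[u]$. The first key step is to convert the depth hypothesis (applied to $s = t[u]$, so the bound is $\alpha \log_2 |t[u]| + \alpha$) into the bound $D_u = O(\log k + \log L_u)$. Each pruned sibling attached to $T'[u]$ has at most $2k-1$ nodes and a simple count shows there are $O(|T'[u]|)$ such pruned subtrees, hence $|t[u]| \leq O(k \cdot |T'[u]|)$; moreover, since $T'[u]$ has maximum degree two, $|T'[u]| \leq O(L_u D_u)$. Substituting into $D_u \leq \alpha \log_2 |t[u]| + \alpha$ yields the self-referential inequality $D_u \leq O(\log k + \log L_u + \log D_u)$, which unwinds to $D_u = O(\log k + \log L_u)$ and hence $|T'[u]| = O(L_u(\log k + \log L_u))$.

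Next I contract the maximal unary chains of $T'$ into single edges to obtain a full binary skeleton $S$ with $L$ leaves and $2L-1$ nodes. For each edge $e = (p,u)$ of $S$, let $\ell_e$ denote the corresponding chain length and $v_1$ the top of the chain. Then $T'[v_1]$ has depth $\ell_e + D_u$ and size $\ell_e + |T'[u]|$, so applying the same derivation to $v_1$ gives $\ell_e + D_u \leq O(\log k + \log(\ell_e + |T'[u]|))$. When $\ell_e \geq |T'[u]|$ this collapses to $\ell_e = O(\log k)$; in the complementary case, combining with $D_u \geq \log L_u$ and $\log|T'[u]| = O(\log L_u + \log \log n)$ gives $\ell_e = O(\log k + \log \log n)$. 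Summing over the $2L-2$ edges of $S$ then yields $|T'| = (2L-1) + \sum_e \ell_e = O(L \cdot (\log k + \log \log n))$.

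Finally, pick $k = \lceil \log_\sigma n / c \rceil$ for $c$ a sufficiently large constant depending on $\alpha$. Then $(16\sigma^2)^{2k} \leq O(\sqrt n)$ and $\log k = O(\log \log_\sigma n)$, so $|T'| = O(n \log \log n / \log_\sigma n)$, which matches the target $O\big(\tfrac{n \log \log_\sigma n}{\log_\sigma n}\big)$ up to constants in the regime where the bound is non-trivial (when $\log_\sigma n$ is too small for the target to beat the trivial dag bound $n$, the additive $\beta$ absorbs the statement). The main obstacle is the fixed-point argument in the second paragraph: without the depth hypothesis applied to \emph{every} subtree of $t$, one only obtains the uniform bound $D_u \leq O(\log n)$, which propagates to chain lengths $\ell_e = O(\log n)$ and ultimately to $|T'| = O(n \log n / k) = \Theta(n \log \sigma)$, losing the crucial $\log n / \log \log n$ factor entirely; in particular, the example in Theorem~\ref{thm-large-dag} achieves depth $O(\log n)$ globally yet has linear-size dag, showing that the subtree-wise assumption is genuinely essential for the fixed-point step to collapse.
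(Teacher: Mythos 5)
The paper itself gives no proof of Theorem~\ref{thm-top-dag}; it is imported from \cite{HSR15} (``implicitly shown'' there), so your attempt can only be judged on its own terms. Your overall plan is the natural adaptation of the proof of Theorem~\ref{thm-small-dag} (cutoff at leaf size $k$, Lemma~\ref{lemma-counting} for the small subtrees, then bound $\mathrm{top}(t,k)$), and you correctly read the hypothesis as $\mathrm{depth}(s)\le\alpha\log_2|s|+\alpha$; also your first fixed-point step $D_u=O(\log k+\log L_u)$ is sound, because there the cancellation is of $\log D_u$ against $D_u$ and works for any constant. The genuine gap is the second step, the per-chain bound $\ell_e=O(\log k+\log\log n)$ in the case $\ell_e<|T'[u]|$. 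What your inequality actually gives is $\ell_e+D_u\le C\,(\log k+\log L_u+\log\log n)$, where $C$ depends on $\alpha$ (and on the passes through $|t[v_1]|=O(k\,|T'[v_1]|)$ and $|T'[u]|\le L_u(D_u+1)$), so in general $C>1$; subtracting $D_u\ge\log_2 L_u$, which enters with coefficient only $1$, leaves a term $(C-1)\log L_u$ that you silently drop. This is not a fixable constant issue: the claimed per-chain bound is false. Take a unary chain of length $(\alpha-1)\log_2 n$ whose bottom node is the root of a balanced binary tree with $n$ leaves (give each chain node a single leaf as second child if you want rank exactly two); every subtree has depth at most $\alpha\log_2(\text{its size})+\alpha$, yet the chain survives in $\mathrm{top}(t,k)$ and has length $\Theta(\log n)\gg\log k+\log\log n$.

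Consequently your summation only yields $\sum_e\ell_e\lesssim(\alpha-1)\sum_e\log L_u+O(L(\log k+\log\log n))$, and $\sum_e\log L_u$ can be $\Theta(L\log L)$ for a caterpillar-shaped skeleton, i.e.\ only $|T'|=O\big(\frac{n\log n}{\log_\sigma n}\big)$ --- exactly the trivial-order bound you set out to beat. The theorem needs a genuinely global accounting: a long chain charges its length against the depth budget $\alpha\log_2|t[v]|+\alpha$ of \emph{every} ancestor, so long chains cannot occur densely along a root-to-leaf path (a single $\Theta(\log n)$ chain is harmless, but $L$ of them stacked above each other would violate the hypothesis); a purely local per-chain estimate cannot see this, which is why the known proof in \cite{HSR15} proceeds differently. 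A secondary point: even granting your chain bound, you would obtain $O\big(\frac{n\log\log n}{\log_\sigma n}\big)$ rather than $O\big(\frac{n\log\log_\sigma n}{\log_\sigma n}\big)$; for large $\sigma$ (e.g.\ $\sigma=2^{\log n/(\log\log n)^2}$) the theorem's bound is still $o(n)$ and strictly smaller than yours, so the discrepancy is not absorbed by the additive $\beta$ in a ``trivial regime'' as you claim.
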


Interestingly, we can show that the bound in this result is sharp:
\begin{theorem} \label{thm-large-dag-2}
  There is a family of trees $t_n \in \mc T(\{a,b,c\})$ with
  $a \in {\mc F}_0$, $b \in {\mc F}_1$, and $c \in {\mc F}_2$
  $(n \geq 1)$ with the following properties:\footnote{Again, the unary node label $b$ can 
  replaced by the pattern $c(d,x)$, where $d \in {\mc F}_0 \setminus \{a\}$ to obtain a binary tree.}
  \begin{itemize}
	\item $|t_n| \in \Theta(n)$
	\item Every subtree $s$ of a tree $t_n$ has depth $O(\log |s|)$.
	\item The size of the minimal dag of $t_n$ is $\Omega(\frac{n \cdot \log \log n}{\log n})$.
  \end{itemize}
  \end{theorem}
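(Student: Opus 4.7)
The plan is to follow the template of Theorem~\ref{thm-large-dag} but with unary $b$-chains of length $\log\log n$ rather than $\log n$, so that every chain has logarithmic depth in its own size. To compensate for the reduced per-chain contribution to the dag, I attach each chain to a distinct ``fingerprint'' tree of size $\Theta(\log n)$ and use $\Theta(n/\log n)$ chains.

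Concretely, set $d = \lceil \log\log n \rceil$ and $k = \lfloor n/(5\log n)\rfloor$. Let $U(x_1,\dots,x_k)$ be a balanced binary pattern with $k$ holes whose $k-1$ internal nodes are all labelled $c$. Let $T$ be the complete binary tree of depth $d$ over $\{a,c\}$ (internal nodes $c$, leaves $a$), having $2^d$ leaves. For each $i = 1,\dots,k$, obtain $s_i \in \mc T(\{a,c\})$ from $T$ by replacing each leaf either by $a$ or by $c(a,a)$ according to a distinct binary codeword of length $2^d$; since $2^{2^d} \ge n \ge k$, enough codewords exist. Then $|s_i| \in \Theta(\log n)$, $\mathrm{depth}(s_i) \le d+1 = O(\log\log n)$, and no $s_i$ contains $b$. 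Finally define $u_i = b^d(s_i)$ and $t_n = U(u_1,\dots,u_k)$; the size bound $|t_n| = (k-1) + k(d+|s_i|) \in \Theta(n)$ is routine.

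For the subtree-depth condition, every subtree $s$ of $t_n$ falls into one of three cases: (a) $s$ lies inside some $s_i$, and inherits the complete-binary structure of $T$ (plus at most one extra level from a $c(a,a)$-replacement), so $\mathrm{depth}(s) \le \log|s| + O(1)$; (b) $s = b^j(s_i)$ for some $1 \le j \le d$, so $|s| \in \Theta(\log n)$ (dominated by $|s_i|$) and $\mathrm{depth}(s) \le 2d+1 = O(\log|s|)$; (c) $s$ is rooted at an internal node of $U$ at height $h$, so $|s| \in \Theta(2^h \log n)$ and $\mathrm{depth}(s) = h + O(\log\log n) = O(\log|s|)$.

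For the dag lower bound, the $k \cdot d$ trees $\{b^j(s_i) : 1 \le i \le k,\ 1 \le j \le d\}$ are pairwise distinct subtrees of $t_n$ and are distinct from every other subtree of $t_n$: each contains exactly $j$ many $b$-nodes (so $j$ is determined), and its unique maximal $b$-free subtree is $s_i$ (so $i$ is determined since the codewords are pairwise distinct); every other subtree of $t_n$ is either $b$-free (if contained in some $s_i$) or has size at least $2|u_i| > |b^j(s_i)|$ (if rooted at an internal node of $U$). Therefore the minimal dag of $t_n$ has at least $k \cdot d = \Omega(n\log\log n/\log n)$ nodes. The main subtle step is case~(a) of the depth analysis, where I use that $s_i$ itself is essentially a complete binary tree of depth $d$, so all its subtrees remain balanced even after the $c(a,a)$-replacements.
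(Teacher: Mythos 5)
Your construction is correct and follows essentially the same route as the paper's proof: $\Theta(n/\log n)$ unary $b$-chains of length $\Theta(\log\log n)$, each attached to a distinct balanced ``fingerprint'' tree of size $\Theta(\log n)$ obtained by marking leaves of a fixed balanced tree, all combined under a balanced top pattern, with the same three-case depth analysis and a count of the pairwise distinct subtrees $b^j(s_i)$ for the dag lower bound. The only (harmless) cosmetic difference is that you encode the fingerprints with $c(a,a)$-replacements instead of the paper's $b(a)$-replacements, which makes your ``count the $b$-nodes to recover $j$'' argument slightly cleaner.
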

  
\begin{proof}
The  tree $t_n$ is similar to the one from the proof of Theorem~\ref{thm-large-dag}. Again, let $k =  \frac{n}{\log n}$.
Fix a balanced binary tree $v_n \in \mc T(\{a,c\})$ with $\log k \in \Theta(\log n)$ many leaves. 
From $v_n$ we construct $k$ many different trees $s_1, \ldots, s_k \in  \mc T(\{a,b,c\})$
by choosing in $v_n$ an arbitrary subset of leaves (there are $k$ such subsets) and replacing all leaves
in that subset by $b(a)$. Note that $|s_i| \in \Theta(\log n)$.  Moreover, every subtree $s$ of a tree $s_i$
has depth $O(\log |s|)$ (since $v_n$ is balanced). 
Then consider the trees $s'_i = b^{\log \log n}(s_i)$ (so, in contrast to the proof of 
Theorem~\ref{thm-large-dag}, the length of the unary chains is $\log \log n$).
Clearly, $|s'_i| \in \Theta(\log n)$. Moreover, we still have the property that
every subtree $s$ of a tree $s'_i$ has depth $O(\log |s|)$: This is clear, if that subtree is rooted in a node
from $s_i$. Otherwise, the subtree $s$ has the form $b^{h}(s_i)$
for some $h \leq \log \log n$. This tree has depth $h + \Theta(\log \log n) = \Theta(\log \log n)$
and size $\Theta(\log n)$. Finally we combine 
the trees $s'_1, \ldots, s'_k \in  \mc T(\{a,b,c\})$ in a balanced way to a single tree using the 
binary valid pattern $u_n(x_1, \ldots, x_k)$ from the proof of Theorem~\ref{thm-large-dag}, i.e.,
$t_n = u_n(s'_1, \ldots, s'_k)$. Then, $|t_n| = \Theta(n)$. Moreover, by the same argument as 
in the proof of Theorem~\ref{thm-large-dag}, the dag for $t_n$ has size $\Omega(\frac{n \cdot \log \log n}{\log n})$
since the nodes in the $k = \frac{n}{\log n}$ many unary chains of length $\log \log n$ cannot be shared with other 
nodes. It remains to show that every subtree $s$ of $t_n$ has depth  $O(\log |s|)$. For the case that $s$ is a subtree of one of the 
trees $s'_i$, this has been already shown above. But the case that $s$ is rooted in a node from  $u_n(x_1, \ldots, x_k)$ is also 
clear: In that case, $s$ is of the form $s = u'(s'_i, \ldots, s'_j)$, where $u'(x_i, \ldots, x_j)$ is  a subtree of $u_n(x_1, \ldots, x_k)$.
Assume that $d$ is the depth of $u'(x_i, \ldots, x_j)$. Since $u_n(x_1, \ldots, x_k)$ is a balanced binary tree,
we have $|u'(x_i, \ldots, x_j)| \in \Omega(2^d)$ and $j-i+1 \in \Omega(2^d)$. Hence, 
$s = u'(s'_i, \ldots, s'_j)$ has size $\Omega(2^d + 2^d \cdot \log n) = \Omega(2^d \cdot \log n)$ 
and depth $d + \Theta(\log \log n)$. This shows the desired property since $\log(2^d \cdot \log n) = d + \log \log n$.
\end{proof}

H\"ubschle-Schneider and Raman \cite{HSR15} used Theorem~\ref{thm-top-dag} to prove the upper bound 
$O\big(\frac{n \cdot \log \log_\sigma n}{\log_\sigma n}\big)$ 
for the size of the top dag of an unranked tree of size $n$ with $\sigma$ many node labels. It is not 
clear whether this bound can be improved to  $O\big(\frac{n}{\log_\sigma n}\big)$. The trees used
in Theorem~\ref{thm-large-dag-2} do not seem to arise as the top trees of unranked trees.

\subsection{Size of the TSLP produced by TreeBiSection} \label{sec-size-treebisection}

  Let us fix the TSLP $\mc G_t$ for a tree $t \in \mc T(\mc F_{\leq 2})$ that has been produced by the first part of {\sf TreeBiSection}.   
  Let $n=|t|$ and $\sigma$ be number of different node labels that appear in $t$. 
  For the modified derivation tree $\mc D_t^*$ we have the following:
 \begin{itemize}
 \item $\mc D_t^*$ is a binary tree with $n$ leaves and hence has $2n-1$ nodes.
 \item There are $\sigma+3$ possible node labels, namely $1,2,3$ and those appearing in $t$.
 \item  $\mc D_t^*$ is $(1/7)$-balanced by \eqref{eq-balanced-splitting}. 
  If we have two successive nodes in $\mc D_t^*$, then we split at one of the two nodes according 
  to  \eqref{eq-balanced-splitting}. 
  Now, assume that we split at node $v$ according to
   \eqref{eq-balanced-splitting}. 
   Let $v_1$ and $v_2$ be the children of $v$,  let 
  $n_i$ be the leaf size of $v_i$, and let $n = n_1+n_2$ be the leaf size of $v$. 
  We get $\frac{1}{8} n  \leq n_1 \leq \frac{3}{4} n$ and $\frac{1}{4} n  \leq n_2 \leq \frac{7}{8} n$
  (or vice versa). Hence,  $n_1 \geq \frac{1}{8} n \geq \frac{1}{7} n_2$
  and $n_2 \geq \frac{1}{4} n \geq \frac{1}{3} n_1$. 
  \end{itemize}
  Recall that the nodes of the dag of of $\mc D_t^*$ are the nonterminals of the TSLP
  produced by {\sf TreeBiSection} and that this TSLP is in Chomsky normal form.
  Moreover, recall that the depth of $\mc D_t^*$ is in $O(\log n)$.
  Hence, with Lemma~\ref{lemma-space-time-treebisection} we get:
  
 \begin{corollary}\label{theorem:nlogN}
  For a  tree from $\mc T(\mc F_{\leq 2})$ of size $n$ with $\sigma$ different node labels, {\sf TreeBiSection} produces
  a TSLP in Chomsky normal form of size $O \big( \frac{n}{\log_\sigma n}\big)$ and depth $O(\log n)$. Every nonterminal
  of that TSLP has rank at most $3$, and the algorithm can be implemented in logspace and, alternatively, in time $O(n \cdot \log n)$.
 \end{corollary}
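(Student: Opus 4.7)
The plan is to assemble the corollary from ingredients already established in the section, with essentially no new work beyond verifying that each hypothesis applies. First I would invoke the three bullet points stated immediately before the corollary: the modified derivation tree $\mc D_t^*$ is a full binary tree with $n$ leaves, uses at most $\sigma + 3$ distinct node labels (the labels of $t$ plus $\{1,2,3\}$), and is $(1/7)$-balanced because the splitting criterion \eqref{eq-balanced-splitting} guarantees that of any two successive nodes, at least one is obtained by a balanced split with ratios lying in $[1/8, 3/4]$, and a straightforward calculation shows $1/8 \geq (1/7)\cdot 3/4$ and $1/4 \geq (1/3)$ of $1/4$'s complement.

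Next I would apply Theorem~\ref{thm-small-dag} with $\beta = 1/7$ (a constant) to $\mc D_t^*$. This bounds the number of nodes of the minimal dag of $\mc D_t^*$ by $O\bigl(n / \log_{\sigma+3} n\bigr) = O\bigl(n / \log_\sigma n\bigr)$, using $\sigma \geq 2$ so that $\log_{\sigma+3} n = \Theta(\log_\sigma n)$. Since the nodes of this dag are exactly the nonterminals of the TSLP $\mc H$ returned by {\sf TreeBiSection}, and the right-hand side of each production has size at most two (form \eqref{nonterm-rules} for inner dag nodes, form \eqref{term-rules} for leaves), the size of $\mc H$ is bounded by twice the dag size, hence also $O\bigl(n / \log_\sigma n\bigr)$. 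The Chomsky normal form property is immediate from the construction, since the final rules are precisely of the two admissible shapes.

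For the depth bound, the derivation tree $\mc D_{\mc H}$ equals $\mc D_t^*$ up to the relabelling performed in the dag step, so $\mathrm{depth}(\mc H) = \mathrm{depth}(\mc D_t^*)$. Along any root-to-leaf path of $\mc D_t^*$, every step at a node of rank at most two shrinks the corresponding subtree size by a factor of at least $3/4$, and a rank-three step is always followed by two such balanced steps; consequently the subtree size along the path decreases geometrically, giving depth $O(\log n)$. The rank bound of three was enforced explicitly in the algorithm by the special treatment of rank-three right-hand sides via lowest common ancestors, so every nonterminal introduced by {\sf TreeBiSection} has rank at most three.

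Finally, the logspace and $O(n \log n)$-time implementations are given directly by Lemma~\ref{lemma-space-time-treebisection}. I do not anticipate any real obstacle here; the only mildly delicate point is checking the $(1/7)$-balance carefully, since the bounds in \eqref{eq-balanced-splitting} are asymmetric, and one must verify both inequalities $n_1 \geq (1/7) n_2$ and $n_2 \geq (1/7) n_1$ in the children of a split node — this is what justifies choosing $\beta = 1/7$ rather than a naive value like $1/3$.
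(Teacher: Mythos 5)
Your proposal is correct and follows essentially the same route as the paper: the three properties of $\mc D_t^*$ (binary with $n$ leaves, $\sigma+3$ labels, $(1/7)$-balanced via the asymmetric bounds from \eqref{eq-balanced-splitting}), Theorem~\ref{thm-small-dag} for the dag size, the identification of dag nodes with nonterminals in Chomsky normal form, the $O(\log n)$ depth of $\mc D_t^*$, and Lemma~\ref{lemma-space-time-treebisection} for the complexity bounds. Only trivial arithmetic slips: the balance check should read $1/8 \geq (1/7)\cdot(7/8)$ (the sibling of a piece of size at least $\frac{1}{8}|s|$ can be as large as $\frac{7}{8}|s|$, not $\frac{3}{4}|s|$), and correspondingly a balanced step shrinks the larger child only to at most $7/8$ of the parent, which still yields depth $O(\log n)$.
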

  In particular, for an unlabelled tree of size $n$ we obtain a TSLP of size $O \big( \frac{n}{\log n}\big)$.

\subsection{Extension to trees of larger degree} \label{sec-large-degree}

If the input tree $t$ has nodes with many children, then we cannot expect good compression by TSLPs.
The extreme case is $t_n = f_n(a,\ldots, a)$ where $f_n$ is a symbol of rank $n$. Hence, $|t_n| = n+1$
and every TSLP for $t_n$ has size $\Omega(n)$.
On the other hand, for trees, where the maximal rank is bounded by a constant $r \geq 1$, we can easily generalize
{\sf TreeBiSection}. Lemma~\ref{lemma:r-lemma} allows to find a splitting node $v$ satisfying 
\begin{equation} \label{eq-balanced-splitting-r}
\frac{1}{2(r+2)} \cdot |t|  \; \le \; |t[v]| \; \le \; \frac{r+1}{r+2} \cdot |t| .
\end{equation}
The maximal arity of of nodes also affects the arity of patterns: we allow patterns of rank up to $r$.
Now assume that $t(x_1,\ldots, x_{r+1})$ is a valid pattern of rank $r+1$, where 
$r$ is again the maximal number of children of a node. Then we find a subtree containing $k$
parameters, where $2 \leq k \leq r$: Take a smallest subtree that contains at least two parameters.
Since the root node of that subtree has at most $r$ children, and every proper subtree contains
at most one parameter (due to the minimality of the subtree), this subtree contains at most $r$
parameters. By taking the root of that subtree as the splitting node, we obtain two valid
patterns with at most $r$ parameters each. 
Hence, we have to change {\sf TreeBiSection} in the following way:
\begin{itemize}
\item As long as the number of parameters of the tree is at most $r$, we choose the splitting node according to Lemma~\ref{lemma:r-lemma}.
\item If the number of parameters is $r+1$ (note that in each splitting step, the number of parameters increases by at most $1$), then we choose
the splitting node such that the two resulting fragments have rank at most $r$. 
\end{itemize}
As before, this guarantees that in every second splitting step we split in a balanced way. But  the balance factor $\beta$ from Section~\ref{sec-dag-size}
now depends on $r$. More precisely, if in the modified derivation tree ${\mc D}^*_t$ we have a node $v$ with children $v_1$ and $v_2$ of leaf size $n_1$ and $n_2$,
respectively, and this node corresponds to a splitting satisfying \eqref{eq-balanced-splitting-r}, then 
we get
\begin{eqnarray*}
\frac{1}{2(r+2)} \cdot n   & \le  & n_1 \;  \; \le  \;\; \frac{r+1}{r+2} \cdot n , \\
\frac{1}{r+2} \cdot n \ = \ \left(1 - \frac{r+1}{r+2}\right) \cdot n & \leq & n_2  \; \; \leq\; \; \left(1- \frac{1}{2(r+2)}\right) \cdot n \ = \ \frac{2r+3}{2(r+2)} \cdot n 
\end{eqnarray*}
or vice versa.
This implies
\begin{eqnarray*}
   n_1 & \geq & \frac{1}{2(r+2)} \cdot n  \ \geq \ \frac{1}{2r+3} \cdot n_2 ,\\
   n_2 & \geq & \frac{1}{r+2} \cdot n  \ \geq \ \frac{1}{r+1} \cdot n_1 .
\end{eqnarray*}
Hence, the modified derivation tree becomes $\beta$-balanced for $\beta = 1/(2r+3)$.
Moreover, the label alphabet of the  modified derivation tree now has size $\sigma+r+1$ (since the TSLP produced in the first step
has nonterminals of rank at most $r+1$).
The proof of Theorem~\ref{thm-small-dag} yields the following bound on the dag of the modified derivation tree and hence the size of the final TSLP:
$$
O\left( \frac{n}{\log_{\sigma+r}(n)} \cdot \log_{1+\frac{1}{2r+3}} (2r+3) \right)  = O\left( \frac{n}{\log_{\sigma+r}(n)} \cdot \frac{\log (2r+3)}{\log(1+\frac{1}{2r+3})} \right)
$$
Note that $\log(1+x) \geq x$ for $0 \leq x \leq 1$. Hence, we can simplify the bound to
$$
O\left( \frac{n \cdot \log(\sigma+r) \cdot   r \cdot  \log r}{\log n} \right) .
$$
By Lemma~\ref{lemma:r-lemma},
the depth of the produced TSLP can be bounded by $2 \cdot d$, where $d$ is any number
that satisfies 
$$
n \cdot \bigg( \frac{r+1}{r+2} \bigg)^d \leq 1 .
$$
Hence, we can bound the depth by
$$
2 \cdot \bigg\lceil  \frac{\log n}{\log(1 + \frac{1}{r+1})}   \bigg\rceil  \leq  
2 \cdot \lceil (r+1) \cdot  \log n \rceil  \in O(r \cdot \log n).
$$


\begin{theorem} \label{thm-treebisection-non-constant-rank}
For a  tree of size $n$ with $\sigma$ different node labels, each of rank at most $r$,
{\sf TreeBiSection} produces a TSLP in Chomsky normal form of size $O\big( \frac{n \cdot \log(\sigma+r) \cdot   r \cdot  \log r}{\log n} \big)$ and depth $O(r \cdot \log n)$.
Every nonterminal  of that TSLP has rank at most $r+1$.
\end{theorem}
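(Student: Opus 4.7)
The plan is to follow and verify the outline already sketched in the paragraphs immediately preceding the theorem statement, making the three separate bookkeeping steps explicit. First I would spell out the generalized {\sf TreeBiSection}. As long as the current right-hand side $s$ has rank at most $r$, the splitting node is chosen by Lemma~\ref{lemma:r-lemma} so that the two fragments satisfy \eqref{eq-balanced-splitting-r}. When $s$ has rank exactly $r+1$, I would instead pick the root $v$ of a smallest subtree of $s$ that contains at least two parameter-labelled leaves. By minimality every proper subtree of that subtree contains at most one parameter, and since $v$ has at most $r$ children the subtree rooted at $v$ contains between $2$ and $r$ parameters. Hence both $s[v]$ and $s\setminus v$ have rank at most $r$, and the rank of every nonterminal produced by the algorithm stays $\leq r+1$.

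Second, I would establish the balance property of the modified derivation tree $\mc D^*_t$. Because a rank-reducing (possibly unbalanced) split is performed only when the rank is exactly $r+1$, and it produces two rank-at-most-$r$ patterns, the next split down either branch falls back to Lemma~\ref{lemma:r-lemma} and hence satisfies \eqref{eq-balanced-splitting-r}. Translating \eqref{eq-balanced-splitting-r} into the leaf sizes of the two children of the corresponding node in $\mc D^*_t$, as done in Section~\ref{sec-size-treebisection} for the case $r=2$, gives $\beta$-balance with $\beta = 1/(2r+3)$ in the relaxed sense of the remark following Theorem~\ref{thm-small-dag}: the balance condition is guaranteed to hold on every second level of the tree.

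Third, I would apply Theorem~\ref{thm-small-dag} to $\mc D^*_t$. Its label alphabet has size $\sigma + r + 1$ (inner-node indices lie in $\{1,\dots,r+1\}$ and leaves carry one of the $\sigma$ terminal labels of $t$), and the tree has $n$ leaves. Substituting $\beta = 1/(2r+3)$ and this alphabet size into Theorem~\ref{thm-small-dag} bounds the dag size by
\[
O\!\left(\frac{n}{\log_{\sigma+r} n} \cdot \frac{\log(2r+3)}{\log\!\bigl(1+\tfrac{1}{2r+3}\bigr)}\right).
\]
Using $\log(1+x)\ge x$ for $0\le x\le 1$ to replace the denominator by $1/(2r+3)$, and noting that $(2r+3)\log(2r+3) = O(r\log r)$, this collapses to the stated size bound $O\!\bigl(n\cdot\log(\sigma+r)\cdot r\cdot\log r/\log n\bigr)$. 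For the depth I would iterate \eqref{eq-balanced-splitting-r}: every second split reduces the tree size by at least a factor $(r+1)/(r+2)$, so the depth of $\mc D^*_t$ is bounded by $2\lceil \log n/\log(1+1/(r+1))\rceil = O(r\log n)$ by the same logarithmic inequality.

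The main obstacle I expect is not any single analytical step but the bookkeeping around the three roles played by $r$: it bounds the ranks of nonterminals, it determines the balance factor of $\mc D^*_t$, and it inflates both the label alphabet and the depth. Once these dependences are carefully tracked and fed into Theorem~\ref{thm-small-dag}, the claimed size and depth bounds, and the rank bound of $r+1$, all follow by assembling the pieces above.
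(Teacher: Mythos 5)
Your proposal is correct and follows essentially the same route as the paper: the rank-capping split via a smallest subtree with at least two parameters, the balance factor $\beta = 1/(2r+3)$ for the modified derivation tree, the alphabet of size $\sigma+r+1$, the application of Theorem~\ref{thm-small-dag} simplified with $\log(1+x)\ge x$, and the depth bound $2\lceil \log n/\log(1+\tfrac{1}{r+1})\rceil \in O(r\log n)$. The only cosmetic difference is that you invoke the relaxed remark after Theorem~\ref{thm-small-dag}, whereas the alternation of balanced and rank-reducing splits already satisfies the original definition of $\beta$-balancedness (one balanced endpoint per edge between inner nodes), so no relaxation is needed.
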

For the running time we obtain the following bound:

\begin{theorem} \label{thm-treebisection-non-constant-rank-time}
{\sf TreeBiSection} can be implemented such that it works in time $O(r \cdot n \cdot \log n)$ 
for a  tree of size $n$, where each symbol has rank at most $r$.
\end{theorem}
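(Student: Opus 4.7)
The plan is to adapt the running-time part of Lemma~\ref{lemma-space-time-treebisection} to the higher-degree setting, charging the work of the splitting phase to ancestor/descendant pairs in the derivation tree of the grammar $\mc G_t$ produced by the while loop.

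First I would precompute the subtree sizes of all nodes of the input tree $t$ in time $O(n)$. Throughout the while loop every pattern $s$ currently on the right-hand side of a rule is stored implicitly as a tuple $\mathrm{rep}(s)=(v_0,v_1,\dots,v_k)$ with $k\le r+1$, exactly as in the logspace implementation of Lemma~\ref{lemma-space-time-treebisection}, where $v_0$ is the root of $s$ in $t$ and $v_1,\dots,v_k$ are the roots of the subtrees of $t$ abstracted as parameters. From this representation, $|s|$ is obtained in time $O(r)$, and the representations $\mathrm{rep}(s[v])$ and $\mathrm{rep}(s\setminus v)$ arising from a split can be derived in time $O(r)$ as well.

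Next I would bound the cost of a single split. If $\rank(s)\le r$, I would run the root-to-$v$ walk from the proof of Lemma~\ref{lemma:r-lemma}: at each visited node $u$ I use the precomputed sizes to evaluate the sizes of the (at most $r$) children of $u$, skipping the excluded parameter subtrees, pick a largest child and test the balance condition. Since each child of each visited node is inspected at most once along the walk, the total cost of the walk is $O(|s|+r)$. If $\rank(s)=r+1$, I would locate the smallest subtree of $s$ containing at least two of the parameter roots $v_1,\dots,v_{r+1}$ by a single traversal, again in time $O(|s|+r)$. Creating the fresh nonterminals $A_1,A_2$, inserting the two new rules into $P_{\mathrm{temp}}$ or $P_{\mathrm{final}}$, and installing the linking rule for $A$ all cost $O(r)$. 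Hence each rule-processing event costs $O(|s|+r)$.

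Finally I would aggregate these costs against the derivation tree $\mc D_{\mc G_t}$ of $\mc G_t$. This tree has $|t|=n$ leaves and $O(n)$ nodes in total, and each rule-processing event in the while loop corresponds to exactly one of its internal nodes $v$, contributing $O(|s_v|+r)$ to the running time, where $s_v$ is the pattern split at $v$. A node $u$ of $t$ contributes to $|s_v|$ precisely when $v$ is an ancestor in $\mc D_{\mc G_t}$ of the leaf corresponding to $u$, so the total contribution of $u$ to $\sum_v |s_v|$ equals the depth of that leaf, which is $O(r\log n)$ by the depth bound $2\lceil(r+1)\log n\rceil$ derived just before the theorem. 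Summing over the $n$ leaves gives $\sum_v |s_v| = O(r\cdot n\cdot \log n)$, while the $O(r)$ overhead per internal node contributes only $O(r\cdot n)$, which is absorbed. Constructing the modified derivation tree $\mc D_t^*$ and its minimal dag can then be done in time linear in $|\mc D_{\mc G_t}|$ by the algorithm of \cite{DoSeTa80}, hence within the same bound. The main obstacle is ensuring that the split subroutine spends only $O(|s|+r)$ and not $O(r\cdot|s|)$ time per rule; once this is secured, the charging argument along ancestor paths in $\mc D_{\mc G_t}$ delivers the claimed $O(r\cdot n\cdot \log n)$ total running time.
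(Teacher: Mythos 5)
Your aggregation is sound, and it is essentially the paper's own accounting carried out in a different order: the paper counts $O(r\cdot\log n)$ iterations of the while loop (the same bound as the depth of the TSLP) times $O(n)$ per iteration (the total size of all right-hand sides in $P_{\mathrm{temp}}$ never exceeds $n$), which is the same double counting as your per-leaf depth charging in the derivation tree. The genuine gap is your claimed $O(|s|+r)$ cost for a single split under the implicit representation. Knowing the globally precomputed sizes $|t[u]|$ is not enough to ``evaluate the sizes of the children of $u$, skipping the excluded parameter subtrees'': for each child you must also know \emph{which} of the up to $r+1$ parameter roots $v_1,\dots,v_k$ lie below that child and subtract their $t$-sizes, and determining this costs up to $O(r)$ containment tests per visited node (or one level-ancestor query per surviving parameter per step). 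Since the walk from Lemma~\ref{lemma:r-lemma} can have length $\Theta(|s|)$ (e.g.\ along long unary chains the stopping condition $|s[u]|\le\frac{d+1}{d+2}|s|$ can fail for $\Theta(|s|)$ consecutive steps while all parameters stay in scope), the split as you sketch it costs $\Theta(r\cdot|s|)$ in the worst case, and your charging then only yields $O(r^2\cdot n\cdot\log n)$, not the claimed bound. You name exactly this as ``the main obstacle'' but do not close it, and as stated the proposed implementation does not achieve it.

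The paper closes this point by not relying on globally precomputed sizes in the time-efficient variant (that idea is only used for the logspace variant in Lemma~\ref{lemma-space-time-treebisection}): the right-hand sides are kept explicitly in the internal pattern representation of Section~\ref{sec-trees} (children lists that skip parameter nodes, plus the rank of each label), and for each rule $A\to s$ one recomputes by a bottom-up pass in time $O(|s|)$ both the size of the subtree of $s$ rooted at every non-parameter node and the number of parameters below it; with these values the balanced search of Lemma~\ref{lemma:r-lemma} and the rank-$(r+1)$ case (a deepest node with at least two parameters below it) are both found top-down in $O(|s|)$. The same repair works inside your framework: mark the at most $r+1$ parameter roots and make one bottom-up pass over the occurrence of $s$ in $t$ before the walk, at cost $O(|s|+r)$, instead of evaluating child sizes on the fly from the global $t$-sizes. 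Once the per-split cost is secured at $O(|s|+r)$, your derivation-tree charging --- or equivalently the paper's $O(r\cdot\log n)$ iterations times $O(n)$ per iteration --- gives $O(r\cdot n\cdot\log n)$, and the concluding linear-time construction of $\mc D_t^*$ and its dag is as you say.
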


\begin{proof}
{\sf TreeBiSection} makes $O(r \cdot \log n)$ iterations of the while loop (this is the same bound as for the depth of the TSLP)
and each iteration takes time $O(n)$. 
To see the latter, our internal representation of trees with parameters from Section~\ref{sec-trees} is important.
Using this representation, we can still compute the split node in a right-hand side $s$ from $P_{\mathrm{temp}}$ in
time $O(|s|)$: We fist compute for every non-parameter node $v$ of $s$ (i) the size of the subtree rooted 
at $v$ (as usual, excluding parameters) and (ii) the number of parameters below $v$. This is possible in time 
$O(|s|)$ using a straightforward bottom-up computation. Using these size informations, we can compute the split
node in $s$ in time $O(|s|)$ for both cases (number of parameters in $s$ is $r+1$ or smaller than $r+1$) by searching from
the root downwards.
\end{proof}
In particular, if $r$ is bounded by a constant, {\sf TreeBiSection} computes a 
TSLP of size $O\big(\frac{n}{\log_\sigma n}\big)$ and depth $O(\log n)$ in time $O(n \cdot \log n)$.
Moreover, our logspace implementation of {\sf TreeBiSection} 
(see Lemma~\ref{lemma-space-time-treebisection}) directly
generalizes to the case of a constant rank. 


On the other hand, for unranked trees where the number of children of a node
is arbitrary and not determined by the node label (which is the standard tree model in XML) all this fails:
{\sf TreeBiSection} only yields TSLPs of size $\Theta(n)$ and this is unavoidable as shown by the example from the beginning of this section.
Moreover, the logspace implementation from Section~\ref{sec-time-space} no longer works since nonterminals
have rank at most $r+1$ and we cannot store anymore the pattern derived from a nonterminal in space $O(\log n)$
(we have to store $r+1$ many nodes in the tree). 

Fortunately there is a simple workaround for all these problems: An unranked
tree can be transformed into a binary tree of the same size using the well known first-child next-sibling
encoding \cite{MLMN13,Knuth68}. Then, one can simply apply {\sf TreeBiSection}  to this encoding
to get in logspace and time $O(n \cdot \log n)$ a TSLP of size  $O\big( \frac{n}{\log_\sigma n}\big)$. 

For the problem of traversing a compressed unranked tree $t$ (which is addressed in \cite{BilleGLW13} for top 
dags) another (equally well known) encoding is more favorable. Let $c(t)$ be a compressed representation
(e.g., a TSLP or a top dag) of $t$. The goal is to represent $t$ in space $O(|c(t)|)$ such that one can efficiently navigate from
a node to (i) its parent node, (ii) its first child, (iii) its next sibling, and (iv) its previous sibling (if they exist).
For top dags \cite{BilleGLW13}, it was shown that a single navigation step can be done in time $O(\log |t|)$.
Using the right binary encoding, we can prove the same result for TSLPs:
Let  $r$ be the maximal rank of a node of the unranked tree $t$.
We define the binary encoding $\mathrm{bin}(t)$ by adding for every node $v$ of rank $s \leq r$
a binary tree of depth $\lceil \log s \rceil$ with $s$ many leaves,
whose root is $v$ and whose leaves are the 
children of $v$.  This introduces at most $2s$ many new binary nodes, which are labelled by a new symbol.  
We get $|\mathrm{bin}(t)| \leq 3 |t|$. In particular, we obtain a TSLP of size $O\big( \frac{n}{\log_\sigma n}\big)$
for $\mathrm{bin}(t)$, where $n = |t|$ and $\sigma$ is the number of different node labels.
Note that a traversal step in the initial tree $t$ (going to the parent node,
first child, next sibling, or previous sibling)
can be simulated by $O(\log r)$ many traversal
steps in $\mathrm{bin}(t)$ (going to the parent node, left child, or right child).
But for a binary tree $s$, it was recently shown that a TSLP $\mc G$ 
for $s$ can be represented in space $O(|\mc G|)$ such that a single traversal
step takes time $O(1)$ \cite{Loh14}.\footnote{This generalizes a corresponding result for strings~\cite{GasieniecKPS05}.}
Hence, we can navigate in $t$ in time $O(\log  r) \leq O(\log |t|)$.

\subsection{Linear Time TSLP Construction}\label{sec:lin_time}

  Recall that {\sf TreeBiSection} works in time $O(n \cdot \log n)$ for a tree of size $n$.
  In this section we present a linear time algorithm {\sf BU-Shrink} (for bottom-up shrink) that also constructs
  a TSLP of size $O\big(\frac{n}{\log_\sigma n}\big)$ and
  depth $O(\log n)$ for a given tree of size $n$ with $\sigma$ many node labels
  of constant rank.
  The basic idea of {\sf BU-Shrink} is to merge in a bottom-up way nodes
  of the tree to \emph{patterns} of size roughly $k$, 
  where $k$ is defined later.
  This is a bottom-up computation in the sense that we begin with individual nodes and gradually merge them into larger fragments
  (the term ``bottom-up'' should not be understood in the sense that the computation is done from the leaves of the tree towards the root).
  The dag of the small trees represented by the patterns then yields the compression.

  For a valid pattern $p$ of rank $r$ we define
  the weight of $p$ as $|p|+r$. This is the total number of nodes in $p$ including those
  nodes that are labelled with a parameter (which are not counted in the size $|p|$ of $p$).
  A \emph{pattern tree} is a tree in which the labels of the tree are valid patterns. 
  If a node $v$ is labelled with the valid pattern $p$ and $\rank(p) = d$, then we 
  require that $v$ has $d$ children in the pattern tree. 
  For convenience, {\sf BU-Shrink} also stores in every node the weight of the corresponding 
  pattern. For a node $v$, we denote by $p_v$ its pattern and by $w(v)$ the weight of $p_v$.
  
  Let us fix a number $k \geq 1$ that will be specified latter.
  Given a tree $t = (V, \lambda)$ of size $n$ such that all node labels in $t$ are of rank at most $r$, {\sf BU-Shrink} first creates a pattern tree
  (which will be also denoted with $t$)
  by replacing every label $f \in \mc F_d$ by the valid pattern $f(x_1,\dots, x_d)$ of weight $d+1$.
  Note that the the parameters in these patterns correspond to the edges of the tree $t$.
  We will keep this invariant during the algorithm (which will shrink the tree $t$). 
  Hence, the total number of all parameter occurrences in the patterns that appear as labels in the current pattern tree $t$
  will be always the number of nodes of the current tree $t$ minus $1$. This allows us to ignore the cost of 
  handling parameters for the running time of the algorithm.
  
  {\sf BU-Shrink} then creates a queue $Q$ that contains references to all nodes of $t$ having at most one child (excluding the root node) 
  in an arbitrary ordering. During the run of the algorithm, the queue $Q$ will only contain references to non-root nodes of the current tree $t$ that
  have at most one child (but $Q$ may not contain references to all such nodes).
  For each node $v$ of the queue we proceed as follows.
  Let $v$ be the $i^\mathrm{th}$ child of its parent node $u$.
  If $w(v) > k$ or $w(u) > k$, we simply remove $v$ from $Q$ and proceed.
  Otherwise we merge the node $v$ into the node $u$.
  More precisely, we delete the node $v$, and 
  set the $i^\mathrm{th}$ child of $u$ to the unique 
  child of $v$ if it exists (otherwise, $u$ loses its $i^\mathrm{th}$ child). 
  The pattern $p_{u}$ is modified by replacing the parameter at the position of the $i^\mathrm{th}$
  child by the pattern $p_v$ and re-enumerating all parameters 
  to get a valid pattern.
  We also set the weight $w(u)$ to $w(v)+w(u)-1$ (which is the weight of the new pattern $p_{u}$).
  Note that in this way both the number of edges of $t$ and the total number of parameter occurrences in all patterns
  decreases by $1$ and so these two sizes stay equal.
  For example, let $u$ be a node with  $p_{u} = f(x_1,x_2)$ and let $v$ 
  be its second child with $p_v = g(x_1)$.
  Then the merged pattern becomes
  $f(x_1,g(x_2))$, and its weight is $4$.
  If the node $u$ has at most one child after the merging
  and its weight is at most $k$, then we add $u$ to $Q$ (if it is not already in the queue).
  We do this until the queue is empty. Note that every pattern appearing in the final pattern tree has rank at most $r$
  (the maximal rank in the initial tree).
  
  Now consider the forest consisting of all patterns appearing in the resulting final pattern tree.  
  We construct the dag of this forest, which yields grammar rules for each pattern with shared nonterminals.
  The dag of a forest (i.e., a disjoint union of trees) is constructed in the same way as for a single tree.
  This dag has for every subtree appearing in the forest exactly one node. The parameters $x_1, x_2, \ldots, x_r$
  that appear in the patterns are treated as ordinary constants when constructing the dag. 
  As usual, the dag can be viewed as a TSLP, where the nodes of the dag are the  nonterminals.
  In this way we obtain a TSLP in which each pattern is derived by a nonterminal of the same rank. 
  Finally, we add to the TSLP the start rule $S \to s$,
  where $s$ is obtained from the pattern tree by labelling each node $v$ with the unique nonterminal
  $A$ such that $A$ derives $p(v)$.
  Algorithm~\ref{algorithm:new} shows a pseudocode for {\sf BU-Shrink}.

\begin{algorithm}[t]
\SetKwComment{Comment}{}{}
\SetKwInOut{Input}{input}
\SetKwInOut{Sub}{methods}
\Input{binary tree $t=(V,\lambda)$, number $k \leq |t|$}
$Q := \emptyset$ \\
\ForEach{$v \in V$}
  {     let $f = \lambda(v) \in \mc F_d$ be the label of node $v$ \\
	$w(v) := 1+d$  \hfill  \Comment{(the weight of  node $v$)}  
	$p_v := f(x_1, \dots, x_d)$   \hfill \Comment{(the pattern stored in node $v$)} 
	\If{$d \le 1$ and $v$ is not the root}
	  {
		$Q :=Q \cup \{ v \}$
	  }
  }
\While{$Q \neq\emptyset$}
  {
	choose arbitrary node $v \in Q$ and set $Q:= Q \setminus \{v\}$ \\
	 let $u$ be the parent node of $v$ \\
	\If{ $w(v) \leq k$ {\bf and} $w(u) \leq k$}
	  {     $d := \rank(p_v)$; $e := \rank(p_u)$  \\
	        let $v$ be the $i^{\text{th}}$ child of $u$ \\
		$w(u) := w(u) + w(v)-1$\\
		$p_u := p_u(x_1, \ldots, x_{i-1}, p_v(x_i, \ldots, x_{i+d-1}), x_{i+d}, \ldots, x_{d+e-1})$\\
		\If{ $v$ has a (necessarily unique) child $v'$}{set $v'$ to the $i^{\text{th}}$ child of $u$}
		delete node $v$ \\
		\If{ $d+e-1 \le 1$ and $w(u) \leq k$}
		  {
			$Q :=Q \cup \{ u \}$
		  }
	  }
  }
  $P := \emptyset$\\
  compute the minimal dag for the forest consisting of all patterns $p_v$ for $v$ a node of $t$ \\
  \ForEach{node $v$ of $t$}{
     create a fresh nonterminal $A_v$ of rank $d := \rank(p_v)$ \\
     $P := P \cup \{ A_v(x_1, \ldots, x_d) \to p_v(x_1, \ldots, x_d) \}$ \\
     $\lambda(v) := A_v$ \hfill \Comment{(the new label of node $v$)}
     }
  \Return TSLP $(S,P \cup \{ S \to t \})$
\caption{{\sf BU-Shrink}$(t,k)$  \label{algorithm:new}}
\end{algorithm}

  
 \begin{example}
   Consider the pattern tree depicted in Figure~\ref{fig:exLinTime}.
   Assuming no further mergings are done, the final TSLP is
   $S \to A(B(C),B(B(C))),\,A(x,y) \to f(g(x),y),\, B(y) \to f(C,y),\, C \to g(a)$.
 \end{example}

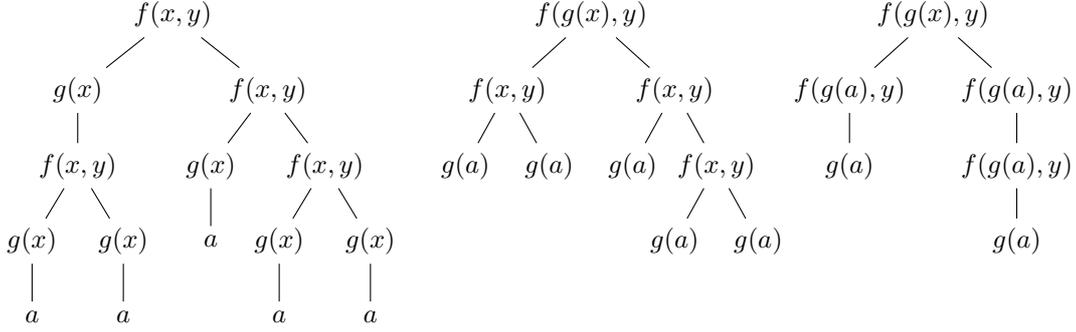
\begin{figure}[t]

\centering

\begin{tikzpicture}[scale=1,auto,swap,level distance=10mm]

\tikzset{level 1/.style={sibling distance=25mm}}
\tikzset{level 2/.style={sibling distance=15mm}}
\tikzset{level 3/.style={sibling distance=12mm}}  
\tikzset{level 4/.style={sibling distance=4mm}}
\node (eps)  at (2,4) {$f(x,y)$} {
    child {node (0) {$g(x)$}
      child {node (00) {$f(x,y)$}
        child {node (000) {$g(x)$}
          child {node (0000) {$a$}}
        }
        child {node (001) {$g(x)$}
          child {node (0010) {$a$}}
        }
      }
    }
    child {node (1) {$f(x,y)$}
      child {node (10) {$g(x)$}
        child {node (100) {$a$}}
      }
      child {node (11) {$f(x,y)$}
        child {node (000) {$g(x)$}
          child {node (0000) {$a$}}
        }
        child {node (000) {$g(x)$}
          child {node (0000) {$a$}}
        }
      }
    }
 };
\tikzset{level 1/.style={sibling distance=22mm}}
\tikzset{level 2/.style={sibling distance=11mm}}
\tikzset{level 3/.style={sibling distance=11mm}}
\node (b-eps)  at (7.5,4) {$f(g(x),y)$} {
	child {node (b-0) {$f(x,y)$}
	  child {node (b-00) {$g(a)$}}
	  child {node (b-01) {$g(a)$}}
	}
    child {node (b-1) {$f(x,y)$}
      child {node (b-10) {$g(a)$}}
      child {node (b-11) {$f(x,y)$}
        child {node (b-110) {$g(a)$}}
        child {node (b-111) {$g(a)$}}
      }
    }
 };

\node (c-eps)  at (12,4) {$f(g(x),y)$} {
	child {node (c-0) {$f(g(a),y)$}
	  child {node (c-00) {$g(a)$}}
	}
    child {node (c-1) {$f(g(a),y)$}
      child {node (c-10) {$f(g(a),y)$}
        child {node (c-100) {$g(a)$}}
      }
    }
 }; 

 \end{tikzpicture}
  \caption{When compressing, we start with the pattern tree left 
  (the weights are omitted to improve readability). 
  The second tree and third tree depict possible intermediate steps.}
 \label{fig:exLinTime}
\end{figure} 
  
It is easy to see that {\sf BU-Shrink} runs in time $O(n)$ for a tree of size $n$.
First of all, the number of mergings is bounded by $n$, since each merging reduces
the number of nodes of the pattern tree by one. Moreover, if a node is removed from $Q$
(because its weight or the weight of its parent node is larger than $k$)
then it will never be added to $Q$ again (since weights are never reduced). 
A single merging step needs only
a constant number of pointer operations and a single addition (for the weights).
For this, it is important that we do not copy patterns, when the new pattern (for the node
$u$ in the above description) is constructed. This size of the forest, for which we construct
the dag has size $O(n)$: The number of non-parameter nodes is exactly $n$,
and the number of parameters is at most $n-1$: initially the forest
has $n-1$ parameters (as there is a parameter for each node except the root)
and during {\sf BU-Shrink} we can only decrease the total amount of parameters.

 Let us now analyze the size of the constructed TSLP.
 In the following, let $t$ be the input tree of size $n$ and let $r$ be the maximal rank of a label
 in $t$. Let $\Sigma = \labels(t)$ and $\sigma = |\Sigma|$.

 \begin{lemma}\label{lemma:size_pattern_tree}
 Let $t_p$ be the pattern tree resulting from {\sf BU-Shrink}.
 Then $|t_p| \le \frac{4 \cdot r \cdot n}{k}+2$.
 \end{lemma}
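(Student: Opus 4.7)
The plan is to partition the nodes of $t_p$ into \emph{heavy} nodes (those with $w(v) > k$) and \emph{light} nodes (those with $w(v) \le k$), bound the size of each class separately, and combine them via the weight identity
\[
\sum_{v \in t_p} w(v) \;=\; n + |t_p| - 1 .
\]
This identity will hold because every one of the $n$ original nodes appears as a non-parameter node of exactly one pattern $p_v$, and every one of the $|t_p|-1$ edges of $t_p$ corresponds to exactly one parameter occurrence distributed among the patterns. If $h$ denotes the number of heavy nodes, this immediately yields $h\,k < n + |t_p|$.

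The structural heart of the proof will be the following claim: for every non-root node $v \in t_p$ with $w(v) \le k$ and $\rank(p_v) \le 1$, the parent of $v$ is heavy. I would establish this by first observing two monotonicity properties of {\sf BU-Shrink}: the number of children of any surviving node is non-increasing, and $w(v)$ is non-decreasing. Any such $v$ must therefore have been inserted into $Q$ at some moment (either at initialization, if it already had at most one child, or at the instant of the last merging that reduced its children count to at most one, noting that $w(v) \le k$ holds at that moment as well by monotonicity). Because $v$ survives to $t_p$, its dequeuing cannot have triggered a merge, so at that point either $w(v) > k$ or the parent's weight exceeded $k$; the former is ruled out by monotonicity. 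A node whose weight has exceeded $k$ is never absorbed later, because re-entering $Q$ requires weight at most $k$, so the heaviness of the parent persists to $t_p$.

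Given the claim, I would derive two bounds. Call a light node \emph{thin} if $\rank(p_v) \le 1$ and \emph{branching} if $\rank(p_v) \ge 2$. Each non-root thin light node has a heavy parent, and every heavy node has at most $r$ children, so the number of thin light nodes is at most $rh + 1$ (the $+1$ absorbing the case that the root is itself thin light). For a branching light node $v$, every child must be heavy or branching: a thin light child would need a heavy parent by the claim, contradicting $v$ being light. I would then consider the tree $T'$ obtained from $t_p$ by deleting all non-root thin light nodes and reattaching their (at most one) child to their parent. In $T'$, every branching light node still has at least two children, every node has at most $r$ children, and every leaf is heavy. The elementary tree identity $n_0 \ge 1 + n_{\ge 2}$ applied to $T'$ yields at most $h - 1$ branching light nodes.

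Combining these bounds gives $|t_p| \le h + (rh+1) + h = (r+2)h + 1$, into which substituting $h < (n+|t_p|)/k$ and solving (for $k$ at least a suitable multiple of $r+2$) produces the desired $|t_p| \le 4rn/k + 2$ after a short calculation. The main obstacle will be the bound on branching light nodes: the thin case is a one-line degree count, whereas the branching case requires both the structural claim and the contraction to $T'$ to turn ``every leaf is heavy'' into a quantitative bound. A minor separate issue, easily dispatched, is the edge case in which the root is itself a thin light node, where the contraction to $T'$ must be adjusted by treating the root as one extra node.
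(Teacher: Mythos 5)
Your structural analysis of {\sf BU-Shrink} is sound, and its core is essentially the paper's own charging argument: the paper notes that at least $|t_p|/2-1$ non-root nodes of $t_p$ have arity at most one, maps each such node to itself or to its parent (one of which must have weight at least $k$ --- exactly your persistence claim, which is correct), observes that at most $r$ nodes are mapped to any fixed node, and then bounds the number of heavy nodes by the total weight. Your thin/branching split and the contraction to $T'$ are a valid, somewhat more elaborate substitute for that one-line count, and your key claim that a surviving light node of rank at most one has a heavy parent in $t_p$ is precisely the fact the paper uses. (One small rigor point: the ``children count is non-increasing'' observation is not immediate from the merge rule, since merging a child of rank $d$ changes the parent's rank by $d-1$; it needs the remark that no node is ever popped from $Q$ with more than one child, or you can avoid it by arguing at the last moment the rank of $v$ drops to at most one.)

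The genuine defect is the endgame. The lemma is unconditional in $k$, but your final step only works ``for $k$ at least a suitable multiple of $r+2$'': using the exact identity $\sum_v w(v) = n + |t_p| - 1$ puts $|t_p|$ on both sides, and solving needs $k$ large compared to $r$ (roughly $k \ge 4(r+2)$ to recover the constant $4$); moreover, for $r=1$ the generic bound $|t_p| \le (r+2)h+1 = 3h+1$ is too weak to give the constant $4$ for moderate $k$ (e.g.\ $5 \le k \le 11$ is covered neither by your calculation nor by any triviality you invoke), so the proposal as written proves only a restricted version of the statement. The repair is short and is what the paper does: since $|t_p| \le n$, bound the weight sum by $2n$, which gives $h \le 2n/k$ outright and hence $|t_p| \le 2(r+2)n/k + 1 \le 4rn/k + 2$ for $r \ge 2$; for $r = 1$ note that there are no branching nodes at all (all patterns have rank at most one), so $|t_p| \le 2h+1 \le 4n/k+1$ --- the paper likewise treats $r=1$ separately. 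Alternatively, dispose of small $k$ explicitly via $|t_p| \le n \le 4rn/k$ whenever $k \le 4r$. Without one of these additions, the parameter regimes excluded by your parenthetical remain unproved.
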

 
 \begin{proof} Let us first assume that $r \geq 2$.
  The number of non-root nodes in $t_p$ of arity at most one is at least $|t_p|/2-1$.
  For each of those nodes, either the node itself or the parent node has weight
  at least $k$. We now map each non-root node of arity at most one to a node of weight at least $k$:
  Let $u$ (which is not the root) have arity at most one.
  If the weight of $u$ is at least $k$, we map $u$ to itself, otherwise we map $u$ to its parent node,
  which then must be of weight at least $k$.
  Note that at most $r$ nodes are mapped to a fixed node $v$: If $v$ has arity at most one, then $v$ and its child (if it exists) can be mapped to $v$;
  if $v$ has arity greater than one then only its children can be mapped to $v$, and $v$ has at most $r$ children.
  Therefore there must exist at least $\frac{|t_p|-2}{2r}$ many nodes of weight at least $k$.
  Because the sum of all weights in $t_p$ is at most $2n$, this yields
  \[
   \frac{|t_p|-2}{2r} \cdot k \le 2n,
  \]
  which proves the lemma for the case $r \geq 2$. The case $r=1$ can be proved in the same way:
  Clearly, the number of non-root nodes of arity at most one is $|t_p|-1$ and at most $2$ nodes are mapped to 
  a fixed node of weight at least $k$ (by the above argument). Hence,  
  there exist at least $\frac{|t_p|-1}{2}  = \frac{|t_p|-1}{2r}$ many nodes of weight at least $k$.
 \end{proof}
 Note that each node in the final pattern tree has weight at most  $2 k$ since {\sf BU-Shrink} only merges nodes of weight at most $k$.
 By Lemma~\ref{lemma-counting}  the number of different patterns in $\mc T(\Sigma \cup \{x_1, \ldots, x_r\})$ 
 of weight at most $2k$ is bounded by $\frac{4}{3} (4 (\sigma+r))^{2k} \leq d^{k}$ for $d = (6(\sigma+r))^2$. 
 Hence, the size of the dag constructed from the patterns is bounded by $d^k$.
Adding the size of the start rule, i.e. the size of the resulting pattern tree (Lemma~\ref{lemma:size_pattern_tree}) we get the following bound for the 
constructed TSLP:
$$
d^k + \frac{4 \cdot r \cdot n}{k} + 2.
$$
Let us now set $k = \frac{1}{2} \cdot \log_d n$.
We get the following bound for the constructed TSLP:
 $$
 d^{\frac{1}{2} \cdot \log_d n}  + \frac{8 \cdot n \cdot r}{\log_d n} + 2 =  \sqrt{ n} + O\left( \frac{n \cdot r}{\log_d n}\right) = O\left( \frac{n \cdot r}{\log_{\sigma+r} n}\right)
 = O\left(\frac{n \cdot \log(\sigma+r) \cdot r}{\log n}\right).
 $$
 \begin{theorem}
{\sf BU-Shrink} computes for a given tree of size $n$ with $\sigma$ many node labels of rank at most $r$ in time $O(n)$
a TSLP of size $O\big( \frac{n \cdot \log(\sigma+r) \cdot r}{\log n}\big)$. Every nonterminal of that TSLP has rank at most $r$.
\end{theorem}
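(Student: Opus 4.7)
The plan is to verify the three claims of the theorem in turn: the $O(n)$ running time, the stated size bound, and the fact that every nonterminal has rank at most $r$. Most of the work is already set up by the preceding lemmas, so the proposal will be to chain them together and balance the two contributions to the size.

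For the running time, I would first note that each iteration of the while loop either performs a merging (which strictly reduces the number of nodes of the pattern tree) or permanently removes a node from $Q$. Since a node, once discarded by the weight test, can never be reinserted (weights only grow), each of the at most $n$ nodes contributes $O(1)$ enqueue/dequeue operations. A single merging is constant time: thanks to the internal representation of patterns from Section~\ref{sec-trees} we do not copy the pattern $p_v$ into $p_u$ but splice it in by pointer, and the weight update is one addition. Finally, the dag of the resulting forest is built in linear time by \cite{DoSeTa80}; the key point is that the forest has total size $O(n)$, because the invariant ``total number of parameter occurrences = number of edges of the current pattern tree'' holds throughout (it starts at $n-1$ and decreases by one at every merging), while the number of non-parameter nodes is exactly $n$.

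For the size bound, the plan is to split the TSLP into the start rule $S \to s$ and the rules for the non-start nonterminals, and bound each part separately. The start rule has size $|s| = |t_p|$, and by Lemma~\ref{lemma:size_pattern_tree} this is at most $4rn/k + 2$. For the non-start rules, every pattern in the final pattern tree has weight at most $2k$, because {\sf BU-Shrink} only merges nodes when both of them have weight at most $k$. Lemma~\ref{lemma-counting}, applied with the alphabet $\Sigma \cup \{x_1,\dots,x_r\}$ of size $\sigma+r$, then yields at most $\tfrac{4}{3}(4(\sigma+r))^{2k} \le d^k$ distinct such patterns, with $d = (6(\sigma+r))^2$. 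So the non-start rules contribute at most $d^k$ to the TSLP size. Choosing $k = \tfrac{1}{2}\log_d n$ balances the two terms, since then $d^k = \sqrt{n}$ is dominated by $rn/k$, and the total becomes
\[
  O\!\left(\frac{n \cdot r}{\log_d n}\right) \;=\; O\!\left(\frac{n \cdot r \cdot \log(\sigma+r)}{\log n}\right),
\]
using $\log d = \Theta(\log(\sigma+r))$.

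For the rank bound, I would argue by induction over the mergings that every pattern stored at a node has rank at most $r$. Initially each pattern is $f(x_1,\dots,x_d)$ for a symbol of rank $d \le r$. A merging is only triggered when $v$ has been enqueued, which requires $\rank(p_v) \le 1$; then the new pattern at $u$ has rank $\rank(p_u) + \rank(p_v) - 1 \le \rank(p_u) \le r$. Since the dag construction preserves ranks, every nonterminal of the output TSLP has rank at most $r$. The only mildly delicate point in the whole argument is the interplay between the parameter accounting and the internal pattern representation: we must make sure that the linear-time merging does not secretly hide a factor from copying parameters, and that the forest handed to the dag-construction is genuinely of size $O(n)$; the edge/parameter invariant above is exactly what guarantees this.
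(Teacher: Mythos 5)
Your proposal is correct and follows essentially the same route as the paper: bounding the start rule via Lemma~\ref{lemma:size_pattern_tree}, bounding the remaining rules by counting patterns of weight at most $2k$ via Lemma~\ref{lemma-counting} with $d=(6(\sigma+r))^2$, choosing $k=\tfrac{1}{2}\log_d n$, and using the edge/parameter invariant plus pointer splicing for the $O(n)$ time. Your explicit induction for the rank bound just spells out what the paper states in one line (merging a rank-$\le 1$ pattern into its parent never increases rank), so there is nothing to add.
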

Clearly, if $r$ is bounded by a constant, we obtain the bound
$O\big(\frac{n}{\log_\sigma n}\big)$. On the other hand, already for $r \in \Omega(\log n)$ the bound $O\big( \frac{n \cdot r}{\log_{r+\sigma} n}\big)$
is in $\omega(n)$. But note that the size of the TSLP produced by {\sf BU-Shrink} can never exceed  $n$.

 \paragraph{Combining TreeBiSection and BU-Shrink to achieve logarithmic grammar depth.}

Recall that {\sf TreeBiSection} produces TSLPs in Chomsky normal form of logarithmic depth, which will be important
in the next section. Clearly, the TSLPs produced by {\sf BU-Shrink} are not in Chomsky normal form. To get a TSLP
in Chomsky normal form we have to further partition the right-hand sides of the TSLP. Let us assume in this 
section that the maximal rank of symbols appearing in the input tree is bounded by a constant. Hence, 
{\sf BU-Shrink}  produces for an input tree $t$  of size $n$ with $\sigma$ many node labels
a TSLP of size $O\big(\frac{n}{\log_\sigma n}\big)$.  The weight and hence also the depth 
of the patterns that appear in $t_p$ is $O(\log_d n) = O(\log_\sigma n) \leq O(\log n)$.
The productions that arise from the dag of the forest of all patterns have the form $A \to f(A_1, \ldots, A_r)$ 
and $A \to x$ where $f$ is a node label of the input tree, $r$ is bounded by a constant, and $x$ is one of the parameters (recall that
in the dag construction, we consider  the parameters appearing in the patterns as ordinary constants).
Productions $A \to x$ are eliminated by replacing all occurrences of $A$ in a right-hand side by the parameter $x$.
The resulting productions can be split into Chomsky normal form productions in an arbitrary way (see also \cite{LoMaSS12}).
This increases the size and depth only by the maximal rank of node labels, which is a constant in our consideration.

Recall that for the start rule  $S \to s$, where $s$ is the tree returned by {\sf BU-Shrink},
the tree $s$ has size $O\big(\frac{n}{\log_\sigma n}\big) = O\big(\frac{n \cdot \log \sigma}{\log n}\big)$.
  We want to apply {\sf TreeBiSection} to balance the tree $s$.
  But we cannot use it directly because the resulting running time would not be linear if $\sigma$ is not a constant:
  Since {\sf TreeBiSection}  needs time $O(|s| \log |s|)$ on trees of constant rank (see the paragraph after the proof of Theorem~\ref{thm-treebisection-non-constant-rank-time}), 
  this yields the time bound
  \[
	  O\left(\frac{n \cdot \log \sigma}{\log n}  \cdot \log \left(\frac{n \cdot \log \sigma}{\log n}\right)\right) = O\left(\frac{n \cdot \log \sigma}{\log n} \cdot \left( \log n + \log \log \sigma - \log \log n \right) \right) = O(n \log \sigma).
  \]
  To eliminate the factor $\log \sigma$, we apply {\sf BU-Shrink}  again to $s$ with $k = \log \sigma \leq \log n$. Note that the maximal rank
  in $s$ is still bounded by a constant (the same constant as for the input tree).
  By Lemma \ref{lemma:size_pattern_tree} this yields in time $O(|s|) \leq O(n)$
  a tree $s'$ of size $O\big(\frac{n}{\log n}\big)$  on which we may now use {\sf TreeBiSection} to get 
  a TSLP for $s'$ in Chomsky normal form of size $O(|s'|) = O\big(\frac{n}{\log n}\big)$ 
  (note that every node of $s'$ may be labelled with a different symbol, in which case
  {\sf TreeBiSection} cannot achieve any compression for $s'$, when we count the size in bits) and depth $O(\log |s'|) = O(\log n)$. 
  Moreover, the running time of  {\sf TreeBiSection} on $s'$ is 
   $$
   O(|s'| \cdot \log |s'|) = O\left(\frac{n}{\log n} \log \left(\frac{n}{\log n}\right)\right) = O\left(\frac{n}{\log n} \cdot \left( \log n - \log \log n \right) \right) = O(n).
   $$
   Let us recall this combined algorithm {\sf BU-Shrink+TreeBiSection}.
   
   \begin{theorem} \label{thm-linear-time-log-depth}
  {\sf BU-Shrink+TreeBiSection} computes for a given tree $t$ of size $n$ with $\sigma$ many node labels of constant rank each in time $O(n)$
 a TSLP in Chomsky normal form of size $O\big( \frac{n}{\log_\sigma n}\big)$ and depth $O(\log n)$. The rank of every nonterminal of that TSLP is bounded by the maximal
 rank of a node label in $t$ (a constant).
  \end{theorem}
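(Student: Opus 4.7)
The plan is to compose and analyze the three-stage pipeline described in the paragraph preceding the theorem. First, I run {\sf BU-Shrink} on $t$ with parameter $k_1 = \tfrac{1}{2}\log_d n$ (where $d = (6(\sigma+r))^2$ and $r$ is the constant maximal rank); by the earlier analysis this takes time $O(n)$ and outputs a pattern tree $s$ of size $O(n/\log_\sigma n)$ labelled by patterns of weight $O(\log_\sigma n)$, together with a dag of rules of total size $O(n/\log_\sigma n)$. Second, since applying {\sf TreeBiSection} directly to $s$ would cost $O(|s|\log|s|) = O(n\log\sigma)$ and thus fail to be linear, I run {\sf BU-Shrink} again on $s$ with $k_2 = \log\sigma$; by Lemma~\ref{lemma:size_pattern_tree} this produces in time $O(|s|)=O(n)$ a pattern tree $s'$ of size $O(|s|/k_2) = O(n/\log n)$ whose labels are patterns of weight $O(\log\sigma)$, and the corresponding dag has at most $O(n/\log_\sigma n)$ nodes because the underlying forest has total weight at most $(2\log\sigma)\cdot|s'| = O(n/\log_\sigma n)$. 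Third, I apply {\sf TreeBiSection} to $s'$: by Theorem~\ref{thm-treebisection-non-constant-rank-time} this takes time $O(|s'|\log|s'|) = O(n)$, and by Corollary~\ref{theorem:nlogN} it yields a TSLP in Chomsky normal form of size $O(|s'|) = O(n/\log n)$ and depth $O(\log|s'|) = O(\log n)$.

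The final grammar is the union of the three rule sets, with the patterns from Stages~1 and~2 split into Chomsky normal form; since every pattern has constant rank this costs only a constant factor in size. Summing sizes gives $O(n/\log_\sigma n) + O(n/\log_\sigma n) + O(n/\log n) = O(n/\log_\sigma n)$, and the total running time is $O(n)$. For the depth, the Stage~3 TSLP has derivation depth $O(\log n)$; each of its leaves invokes a Stage~2 pattern of depth $O(\log\sigma)$, whose leaves in turn invoke a Stage~1 pattern of depth $O(\log_\sigma n)$. Hence the total depth is $O(\log n + \log\sigma + \log_\sigma n) = O(\log n)$, using that $\log\sigma + \log_\sigma n \leq \log n + 1$ for every $\sigma\geq 2$. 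The rank of every nonterminal remains bounded by a constant, since both {\sf BU-Shrink} stages only introduce nonterminals of rank at most $r$ and {\sf TreeBiSection} adds at most one.

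The delicate step is the choice of $k_2$ in Stage~2: it must be large enough to shrink $s$ to size $O(n/\log n)$ so that Stage~3 is affordable in both running time and depth, yet small enough that the Stage~2 pattern dag does not blow up beyond $O(n/\log_\sigma n)$ nodes. The value $k_2 = \log\sigma$ is precisely what balances these two constraints, and the fact that the Stage~2 dag size is controlled by the forest's total weight rather than by the potentially large $d^{k_2}$ bound of Lemma~\ref{lemma-counting} is the key observation that makes the accounting work. A minor point to check is that splitting each pattern into Chomsky normal form introduces only constant overhead per rule, which is standard when the rank is bounded by a constant and keeps the size and depth bounds intact.
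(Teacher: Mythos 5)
Your proposal is correct and follows essentially the same route as the paper: run {\sf BU-Shrink} with the standard parameter, re-run {\sf BU-Shrink} on the start-rule tree $s$ with $k=\log\sigma$ to shrink it to size $O(n/\log n)$, then balance the result with {\sf TreeBiSection}, with the same time, size and depth accounting (your explicit bound on the stage-2 rule set via the total pattern weight, and the depth composition $O(\log n)+O(\log\sigma)+O(\log_\sigma n)$, are exactly what the paper's terser argument relies on). The only cosmetic difference is that you state the rank bound as $r+1$ because {\sf TreeBiSection} may add one parameter, which is if anything more precise than the theorem's phrasing and still a constant.
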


 \section{Arithmetical Circuits} \label{sec-circuits}
  
In this section, we present our main application of Corollary~\ref{theorem:nlogN} and Theorem~\ref{thm-linear-time-log-depth}.
Let $\mc S= (S,+, \cdot)$ be a (not necessarily commutative) semiring. 
Thus, $(S,+)$ is a commutative monoid with identity element $0$, $(S,\cdot)$ is a monoid
with identity element $1$, and $\cdot$ left and right distributes over $+$.

We use the standard notation of arithmetical formulas and
circuits over $\mc S$: 
An {\em arithmetical formula} is just a labelled binary tree where internal nodes are labelled with the semiring operations $+$ and $\cdot$,
and leaf nodes are labelled with variables $y_1, y_2, \ldots$ or the constants $0$ and $1$.
An {\em arithmetical circuit} is a (not necessarily minimal) dag whose internal nodes are labelled with $+$ and $\cdot$ and 
whose leaf nodes are labelled with variables or the constants $0$ and $1$.  
The \emph{depth} of a circuit is the length of a longest path from the root node to a leaf.
An arithmetical circuit evaluates to a multivariate noncommutative polynomial $p(y_1, \ldots, y_n)$ over $\mc S$, where
$y_1, \ldots, y_n$ are the variables occurring at the leaf nodes.
Two arithmetical circuits are equivalent if they evaluate to the same polynomial.

Brent \cite{Brent74} has shown that every arithmetical formula of size $n$ over a commutative ring can be transformed
into an equivalent circuit of depth $O(\log n)$ and size $O(n)$ (the proof easily generalizes to semirings). 
By first constructing a TSLP of size $O\big(\frac{n \cdot \log m}{\log n}\big)$, where $m$ is the number of different variables in the formula,
and then transforming this TSLP into a circuit, we will
refine the size bound to $O\big(\frac{n \cdot \log m}{\log n}\big)$.
Moreover, by Corollary~\ref{theorem:nlogN} (Theorem~\ref{thm-linear-time-log-depth}, respectively) this 
conversion can be done in logspace (linear time, respectively).

In the following, we consider TSLPs, whose  terminal alphabet consists of the binary symbols $+$ and $\cdot$ 
and the constant symbols $0,1, y_1, \ldots, y_m$ for some $m$. Let us denote this terminal alphabet with $\Sigma_m$.
For our formula-to-circuit conversion, it will be important to work with monadic TSLPs, i.e., TSLPs where every nonterminal 
has rank at most one.

\begin{lemma} \label{lemma-monadic-logspace}
From a given tree $t \in \mc T(\Sigma_m)$ of size $n$
one can construct in logspace (linear time, respectively) a monadic TSLP $\mc H$ of size $O\big(\frac{n \cdot \log m}{\log n}\big)$ and depth $O(\log n)$
with $\val(\mc H) = t$ and such that all productions are
of the following forms:
\begin{itemize}
\item $A \to B(C)$ for $A, C \in \mc N_0$, $B \in \mc N_1$, 
\item $A(x) \to B(C(x))$ for $A,B,C \in \mc N_1$,
\item $A \to f(B,C)$ for $f \in \{+, \cdot \}$, $A,B,C \in \mc N_0$,
\item $A(x) \to f(x,B)$, $A(x) \to f(B,x)$  for $f \in \{+, \cdot \}$, $A \in \mc N_1$, $B \in \mc N_0$, 
\item $A \to a$ for $a \in \{0,1, y_1, \ldots, y_m\}$, $A  \in \mc N_0$,
\item $A(x) \to B(x)$, $A(x) \to x$ for $A,B \in \mc N_1$.
\end{itemize}

\end{lemma}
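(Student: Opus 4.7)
The plan is to chain two earlier results: first extract a small TSLP in Chomsky normal form from $t$, then convert it to monadic form. First I would invoke Corollary~\ref{theorem:nlogN} (for the logspace version) or Theorem~\ref{thm-linear-time-log-depth} (for the linear-time version) on $t \in \mc T(\Sigma_m)$. Since every terminal of $\Sigma_m$ has rank at most two and $|\Sigma_m| = m+4$, this produces a TSLP $\mc G$ in Chomsky normal form with $\val(\mc G) = t$, size $O(n/\log_{m+4} n) = O(n \log m / \log n)$ (interpreted as $O(n/\log n)$ when $m$ is a small constant), depth $O(\log n)$, and every nonterminal of rank at most three.

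Next I would apply Theorem~\ref{theo-monadic} with $k=3$ and $r=2$ to transform $\mc G$ into an equivalent monadic TSLP $\mc H$ of size $O(|\mc G| \cdot r) = O(n \log m/\log n)$ and depth $O(\log n)$. The theorem further guarantees that every production of $\mc H$ belongs to one of four schemes, and it remains to verify that, over the alphabet $\Sigma_m$, those schemes specialize to the forms listed in the lemma. The schemes $A \to B(C)$ and $A(x) \to B(C(x))$ match directly. For the scheme $A \to f(A_1,\ldots,A_n)$ with $f \in \Sigma_m$, either $n=2$ and $f \in \{+,\cdot\}$, yielding $A \to f(B,C)$, or $n=0$ and $f$ is a constant, yielding $A \to a$. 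For the scheme $A(x) \to f(A_1,\ldots,A_{i-1},x,A_{i+1},\ldots,A_n)$ the occurrence of $x$ forces $n \geq 1$, so necessarily $n=2$ and $f \in \{+,\cdot\}$, yielding $A(x) \to f(x,B)$ or $A(x) \to f(B,x)$. The remaining shapes $A(x) \to B(x)$ and $A(x) \to x$ listed in the lemma are permitted but need not be produced.

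The only point requiring care is the complexity of the monadic conversion. For the linear-time bound the $O(|\mc G| \cdot k \cdot r) = O(|\mc G|)$ bound of Theorem~\ref{theo-monadic} composes directly with the linear-time construction of $\mc G$. For the logspace bound, the construction underlying Theorem~\ref{theo-monadic} (from \cite{LoMaSS12}) is a purely local rewriting of right-hand sides of $\mc G$ and can therefore be implemented in logspace; since the composition of logspace transductions is itself a logspace transduction, combining it with the logspace construction of $\mc G$ gives the claimed bound. This verification is the only non-mechanical step; all other pieces of the proof are immediate specializations of previously established results.
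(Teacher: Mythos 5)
Your linear-time route is exactly the paper's: compose Theorem~\ref{thm-linear-time-log-depth} with Theorem~\ref{theo-monadic}, and your case analysis showing that the four production schemes specialize over $\Sigma_m$ to the listed forms is fine (the paper even notes in a footnote that the shapes $A(x)\to B(x)$ and $A(x)\to x$ are extra allowances). The problem is the logspace half, which is precisely where the lemma requires real work and where your justification does not hold up. The monadization of \cite{LoMaSS12} is \emph{not} a purely local rewriting of right-hand sides: for a nonterminal $A$ of rank $2$ or $3$ one must build a skeleton tree whose shape (one of two possible branching structures of the parameters in $\val_{\mc G}(A)$) and whose internal labels $f_1,f_2\in\{+,\cdot\}$ (the operations sitting at the meeting points of the parameter paths) are global properties of the derived pattern $\val_{\mc G}(A)$, not data visible in the single rule for $A$. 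So the sentence ``the construction is a purely local rewriting of right-hand sides and can therefore be implemented in logspace'' asserts the conclusion without an argument, and its premise is false.

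The paper closes this gap by exploiting a by-product of {\sf TreeBiSection}: for every nonterminal $A$ of rank $k\leq 3$ it has in hand a tuple $s_A=(v_0,v_1,\ldots,v_k)$ of nodes of the input tree $t$ representing an occurrence of $\val_{\mc G}(A)$ in $t$ ($v_0$ the root of the occurrence, $v_i$ the node receiving parameter $x_i$). From this tuple one can determine in logspace, by walking the paths from $v_1,\ldots,v_k$ up to $v_0$ inside $t$, which of the two branching structures $\val_{\mc G}(A)$ has and what $f_1,f_2$ are; the skeleton tree of $A$ and the constantly many productions of $\mc H$ per rule of $\mc G$ are then written out by a finite case distinction, and a final logspace pass splits productions such as $A(x)\to B(f(C(x),D(E)))$ into the required forms. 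To repair your proof you would need either this occurrence-based idea or some other concrete logspace implementation of the skeleton computation (one could imagine an argument via the $O(\log n)$ depth of $\mc G$ and unfolding the derivation DAG, but that too has to be spelled out); as written, the logspace claim is a genuine gap.
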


\begin{proof}
The linear time version is an immediate consequence of Theorem~\ref{theo-monadic} and Theorem~\ref{thm-linear-time-log-depth}.\footnote{Note that productions of the
form $A(x) \to B(x)$ and $A(x) \to x$ do not appear in Theorem~\ref{theo-monadic}. We allow them in the lemma, since they make the logspace part of the lemma easier to show
and do not pose a problem in the remaining part of this section.}
It remains show the logspace version. 
We first apply {\sf TreeBiSection} (Corollary~\ref{theorem:nlogN}) to get in logspace a TSLP $\mc G$ in Chomsky normal form of size $O\big(\frac{n \cdot \log m}{\log n}\big)$ and depth $O(\log n)$
with $\val(\mc G) = t$. Note that every nonterminal of $\mc G$ has rank $3$. Moreover, for every nonterminal $A$ of rank $k \leq 3$, {\sf TreeBiSection} 
computes $k+1$ nodes $v_0, v_1, \ldots, v_k$ of $t$ that represent the pattern $\val_{\mc G}(A)$: $v_0$ is the root node
of an occurrence of $\val_{\mc G}(A)$ in $t$ and $v_i$ ($1 \leq i \leq k$) is the node of the occurrence to which the parameter $x_i$ is mapped, see also the proof
of Lemma~\ref{lemma-space-time-treebisection}.
We can assume that for every nonterminal $A$ of rank $k$ this tuple $s_A$ has been computed.

We basically show that the construction from  \cite{LoMaSS12}, which makes a TSLP monadic, works in logspace
if all nonterminals and terminals of the input TSLP have constant rank.\footnote{We only 
consider the case that nonterminals have rank at most three and terminals have rank zero or two, which is the case
we need, but the general case, where all nonterminals and all terminals of the input TSLP have constant rank could be 
handled in a similar way in logspace.}
For a nonterminal $A$ of rank $3$ with $s_A = (v_0, v_1,v_2,v_3)$, the pattern $\val_{\mc G}(A)$ has two possible branching structures,
see Figure~\ref{fig:threevar}. By computing the paths from the three nodes $v_1, v_2, v_3$ up to $v_0$,
we can compute in logspace, which of the two branching structures $\val_{\mc G}(A)$ has. Moreover, we can compute
the two binary symbols $f_1, f_2 \in \{ +, \cdot\}$ at which the three paths that go from $v_1$, $v_2$, and $v_3$, respectively,
up to $v_0$ meet. We finally, associate with each of the five dashed edges in Figure~\ref{fig:threevar} a fresh unary nonterminal $A_i$
($0 \leq i \leq 4$) of the TSLP $\mc H$. In this way we can built up in logspace what is called the skeleton tree for $A$. It is one of the following two trees,
depending on the branching structure of $\val_{\mc G}(A)$, see also Figure~\ref{fig-branching-structures}:
$$
A_0(f_1(A_1(f_2(A_2(x_1), A_3(x_1))), A_4(x_3)))  \qquad A_0(f_1(A_1(x_1),A_2(f_2(A_3(x_2), A_4(x_3)))))
$$
\begin{figure}[t]
\tikzset{level 1/.style={sibling distance=32mm}}
\tikzset{level 2/.style={sibling distance=16mm}}
\tikzset{level 7/.style={sibling distance=8mm}}  
\tikzset{level 8/.style={sibling distance=4mm}}
\hspace*{\fill}
\begin{tikzpicture}[scale=1,auto,swap,level distance=8mm]
\node (eps) {$A_0$} 
  child {node {$f_1$}
    child {node {$A_1$}
      child {node {$f_2$}
        child {node {$A_2$}
          child {node {$x_1$}}
        }
        child {node {$A_3$}
          child {node {$x_2$}}
        }
      }
    }
    child {node {$A_4$}
      child {node {$x_3$}}  
    }
  }
;
{label fig one};
\end{tikzpicture}
\hspace*{\fill}
\begin{tikzpicture}[scale=1,auto,swap,level distance=8mm]
\node (eps) {$A_0$} 
  child {node {$f_1$}
    child {node {$A_1$}
      child {node {$x_1$}}  
    }
    child {node {$A_2$}
      child {node {$f_2$}
        child {node {$A_3$}
          child {node {$x_2$}}
        }
        child {node {$A_4$}
          child {node {$x_3$}}
        }
      }
    }
  }
;
{label fig one};
\end{tikzpicture}
\hspace*{\fill}
\caption{The two possible skeleton trees for a nonterminal $A$ of rank three}
 \label{fig-branching-structures}
\end{figure}
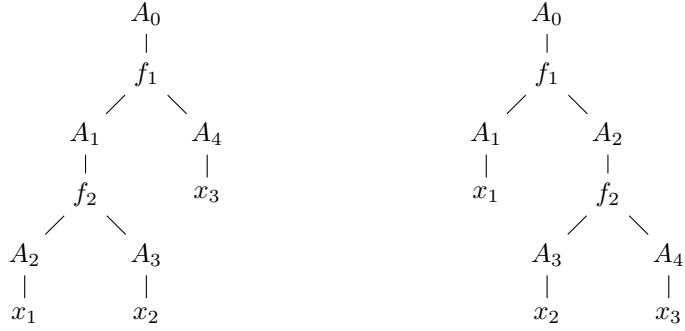 
For a nonterminal $A$ of rank two there is only a single branching structure and hence a single skeleton tree 
$A_0(f(A_1(x_1), A_2(x_2)))$ for $f \in \{+,\cdot\}$. Finally, for a nonterminal $A$ of rank at most one, the skeleton tree is $A$ itself
(this is in particular the case for the start nonterminal $S$, which will be also the start nonterminal of $\mc H$), 
and this nonterminal then belongs to $\mc H$ (nonterminals of $\mc G$ that
have rank larger than one do not belong to $\mc H$).
What remains is to construct in logspace productions for the nonterminals of $\mc H$ that allow to rewrite the skeleton tree of $A$ to $\val_{\mc G}(A)$.
For this, let us consider the productions of $\mc G$, whose right-hand sides have the form \eqref{nonterm-rules}  and \eqref{term-rules}. 
A production $A(x_1, \ldots, x_k) \to f(x_1, \ldots, x_k)$ with $k \leq 1$ 
is copied to $\mc H$. On the other hand, if $k = 2$, then $A$ does not belong to $\mc H$ and hence, we do not copy
the production to $\mc H$. Instead, we introduce the productions $A_i(x_1) \to x_1$ ($0 \leq i \leq 2$) for the three nonterminals
$A_0, A_1, A_2$ that appear in the skeleton tree of $A$.
 Now consider a production 
$$
A(x_1, \ldots, x_k) \to B(x_1, \ldots, x_{i-1}, C(x_i, \ldots, x_{i+l-1}), x_{i+l}, \ldots, x_k),
$$
where $k,l, k-l+1 \leq 3$.
We have constructed the skeleton trees $t_A,t_B, t_C$ for $A,B$, and $C$, respectively.
Consider the tree $t_B(x_1, \ldots, x_{i-1}, t_C(x_i, \ldots, x_{i+l-1}), x_{i+l}, \ldots, x_k)$.
We now introduce the productions for the nonterminals that appear in $t_A$ in such a way that $t_A(x_1, \ldots, x_k)$
can be rewritten to $t_B(x_1, \ldots, x_{i-1}, t_C(x_i, \ldots, x_{i+l-1}), x_{i+l}, \ldots, x_k)$. There are several cases depending
on $k,l$, and $i$. 
Let us  only consider two typical cases (all other cases can be dealt in a similar way):
If the production is $A(x_1,x_2,x_3) \to B(x_1, C(x_2,x_3))$ then the trees $t_A(x_1,x_2,x_3)$ and 
$t_B(x_1, t_C(x_2,x_3))$ are shown in Figure~\ref{fig-rewrite-skeleton-1}. Note that the skeleton tree 
$t_A(x_1,x_2,x_3)$ must be the right tree from Figure~\ref{fig-branching-structures}.
We add the following productions to $\mc H$:
$$
A_0(x) \to B_0(x),  \  \ A_1(x) \to B_1(x),  \  \ A_2(x) \to B_2(C_0(x)), \ \ A_3(x) \to C_1(x), \  \ A_4(x) \to C_2(x) .
$$
Let us also consider the case $A(x_1,x_2) \to B(x_1, x_2, C)$. The trees $t_A(x_1,x_2)$ and 
$t_B(x_1, x_2, t_C) = t_B(x_1, x_2, C)$ are shown in Figure~\ref{fig-rewrite-skeleton-2} (we assume 
that  the skeleton tree for $B$ is the left one from Figure~\ref{fig-branching-structures}).
We add the following productions to $\mc H$:
\begin{equation} \label{eq-skeleton}
A_0(x) \to B_0(f_1(B_1(x),B_4(C))),  \  \  A_1(x) \to B_2(x), \ \  A_2(x) \to B_3(x) . 
\end{equation}
Other cases can be dealt with similarly. In each case we write out a constant number of productions
that clearly can be produced by a logspace machine using the shape of the skeleton trees.
Correctness of the construction (i.e., $\val(\mc G) = \val(\mc H)$) follows from $\val_{\mc H}(t_A) = \val_{\mc G}(A)$,
which can be shown by a straightforward induction, see  \cite{LoMaSS12}.
Clearly, the size and depth of $\mc H$ is linearly related to the size and depth, respectively, of $\mc G$.
Finally, productions of the form $A(x) \to B(f(C(x),D(E)))$ (or similar
forms) as in \eqref{eq-skeleton} can be easily split in logspace into productions of the forms shown in the lemma. For instance,
$A(x) \to B(f(C(x),D(E)))$ is split into $A(x) \to B(F(x))$, $F(x) \to G(C(x))$, $G(x) \to f(x,H)$, $H \to D(E)$. Again, the size and 
depth of the the TSLP increases only by a linear factor.
\end{proof}


\begin{figure}[t]
\tikzset{level 1/.style={sibling distance=32mm}}
\tikzset{level 2/.style={sibling distance=16mm}}
\tikzset{level 7/.style={sibling distance=8mm}}  
\tikzset{level 8/.style={sibling distance=4mm}}
\hspace*{\fill}
\begin{tikzpicture}[scale=1,auto,swap, level 1/.style={level distance=8mm}, level 3/.style={level distance=16mm}, level 4/.style={level distance=8mm}]
\node (eps) {$A_0$} 
  child {node {$f_1$}
    child {node {$A_1$}
      child {node {$x_1$}}  
    }
    child {node {$A_2$}
      child {node (f2) {$f_2$}
        child {node {$A_3$}
          child {node {$x_2$}}
        }
        child {node {$A_4$}
          child {node {$x_3$}}
        }
      }
    }
  }
;
\end{tikzpicture}
\hspace*{\fill}
\begin{tikzpicture}[scale=1,auto,swap,level distance=8mm]
\node[grayKnot] (B0) {$B_0$} 
  child {node {$f_1$}
    child {node[grayKnot] (B1) {$B_1$}
      child {node {$x_1$}}
    }
    child {node (B2) {$B_2$}
      child {node (C0) {$C_0$}
        child {node {$f_2$}
          child {node[grayKnot] (C1) {$C_1$}
            child {node {$x_2$}}
          }
          child {node[grayKnot] (C2) {$C_2$}
            child {node {$x_3$}}
          }
        }
      }
    }
  }
;

\fill[gray,opacity=0.3] \convexpath{B2,C0}{10pt};

\node [left = 15pt of B0] (a0) {$A_0$};
\draw [->] (a0) -- ++ (20pt,0);
\node [left = 15pt of B1] (a1) {$A_1$};
\draw [->] (a1) -- ++ (20pt,0);
\node [right = 15pt of B2] (a2) {$A_2$};
\draw [->] (a2) -- ++ (-20pt,0);
\node [left = 15pt of C1] (a3) {$A_3$};
\draw [->] (a3) -- ++ (20pt,0);
\node [right = 15pt of C2] (a4) {$A_4$};
\draw [->] (a4) -- ++ (-20pt,0);
\end{tikzpicture}
\hspace*{\fill}
\caption{The skeleton tree $t_A(x_1,x_2,x_3)$ and the tree $t_B(x_1, t_C(x_2,x_3))$}
 \label{fig-rewrite-skeleton-1}
\end{figure}
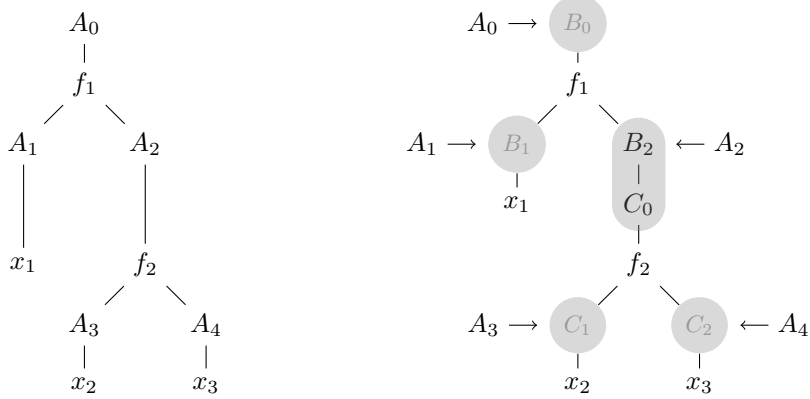

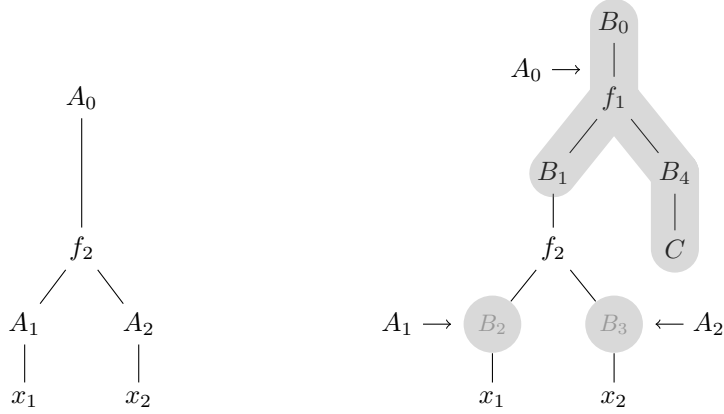
\begin{figure}[t]
\tikzset{level 1/.style={sibling distance=32mm}}
\tikzset{level 2/.style={sibling distance=16mm}}
\tikzset{level 7/.style={sibling distance=8mm}}  
\tikzset{level 8/.style={sibling distance=4mm}}
\hspace*{\fill}
\begin{tikzpicture}[scale=1,auto,swap, level 1/.style={level distance=20mm}, level 2/.style={level distance=10mm}]
\node (eps) {$A_0$} 
  child {node {$f_2$}
    child {node {$A_1$}
      child {node {$x_1$}}  
    }
    child {node {$A_2$}
      child {node {$x_2$}
      }
    }
  }
;
\end{tikzpicture}
\hspace*{\fill}
\begin{tikzpicture}[scale=1,auto,swap,level distance=10mm]
\node (B0) {$B_0$} 
  child {node (f1) {$f_1$}
    child {node (B1) {$B_1$}
      child {node {$f_2$}
        child {node[grayKnot] (B2) {$B_2$}
          child {node {$x_1$}}
        }
        child {node[grayKnot] (B3) {$B_3$}
          child {node {$x_2$}}
        }
      }
    }
    child {node (B4) {$B_4$}
      child {node (C) {$C$}}  
    }
  }
;
\fill[gray,opacity=0.3] \convexpath{B1,f1,B0,f1,B4,C,B4,f1}{9pt};
\node [above left = -5pt and 15pt of f1] (a0) {$A_0$};
\node [left = 15pt of B2] (a1) {$A_1$};
\node [right = 15pt of B3] (a2) {$A_2$};
\draw [->] (a0) -- ++(20pt,0);
\draw [->] (a1) -- ++(20pt,0);
\draw [->] (a2) -- ++(-20pt,0);
\end{tikzpicture}
\hspace*{\fill}
\caption{The skeleton tree $t_A(x_1,x_2)$ and the tree $t_B(x_1, x_2, t_C)$}
 \label{fig-rewrite-skeleton-2}
\end{figure}


Going from a monadic TSLP to a circuit (or dag) that evaluates over every semiring to the same 
noncommutative polynomial is easy:

\begin{lemma} \label{lemma-TSLP-circuit}
From a given monadic TSLP $\mc G$ over the terminal alphabet $\Sigma_m$ such that all productions are
of the form shown in Lemma~\ref{lemma-monadic-logspace},
one can construct in logspace (linear time, respectively)
an arithmetical circuit $C$ of depth $O(\mathrm{depth}(\mc G))$ and size $O(|\mc G|)$
such that over every semiring, $C$ and $\val(\mc G)$ evaluate to the same noncommutative polynomial in $m$ variables.
\end{lemma}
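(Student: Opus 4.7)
My plan is to assign to each nonterminal of $\mc G$ a constant number of gates in the circuit $C$ that together realise the polynomial(s) derived by the nonterminal. For a rank-$0$ nonterminal $A$, the tree $\val_{\mc G}(A)$ evaluates to a polynomial $p_A$, so I introduce a single gate $\langle A\rangle$ computing $p_A$. For a rank-$1$ nonterminal $A$, the pattern $\val_{\mc G}(A)(x_1)$ contains exactly one occurrence of $x_1$ (patterns are linear in their parameters), and hence by a direct induction on the pattern using associativity and distributivity, its evaluation takes the affine form $\ell_A \cdot x_1 \cdot r_A + c_A$ for polynomials $\ell_A, r_A, c_A$ over $\mc S$. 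I introduce three gates $\langle A,\ell\rangle, \langle A,r\rangle, \langle A,c\rangle$ realising these three components.

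The circuit $C$ is then assembled by translating each production into $O(1)$ fresh gates. A rule $A \to a$ makes $\langle A\rangle$ a leaf labelled $a$; a rule $A \to f(B,C)$ with $f \in \{+,\cdot\}$ sets $\langle A\rangle := \langle B\rangle \mathbin{f} \langle C\rangle$; rules $A(x)\to x$, $A(x) \to f(x,B)$ and $A(x) \to f(B,x)$ determine the triple $(\ell_A, r_A, c_A)$ directly from the one-step decomposition (for instance $A(x) \to \cdot(B,x)$ gives $\ell_A = \langle B\rangle$, $r_A = 1$, $c_A = 0$); a chain rule $A(x) \to B(x)$ aliases the triple of $B$; the composition rule $A(x) \to B(C(x))$ sets
\[
\ell_A = \langle B,\ell\rangle \cdot \langle C,\ell\rangle,\qquad r_A = \langle C,r\rangle \cdot \langle B,r\rangle, \qquad c_A = \langle B,\ell\rangle \cdot \langle C,c\rangle \cdot \langle B,r\rangle + \langle B,c\rangle;
\]
and finally $A \to B(C)$ with $B \in \mc N_1$, $C \in \mc N_0$ sets $\langle A\rangle = \langle B,\ell\rangle \cdot \langle C\rangle \cdot \langle B,r\rangle + \langle B,c\rangle$. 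The output gate of $C$ is $\langle S\rangle$ for the start nonterminal $S$.

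Correctness will follow by a straightforward induction along the acyclic dependency order on nonterminals of $\mc G$: assuming the gates for nonterminals on the right-hand side already realise the correct polynomial or triple, the newly introduced $O(1)$ gates realise those of the left-hand side, since the equations above are exactly the decomposition of the substitution prescribed by each production. The size bound $|C| \in O(|\mc G|)$ is immediate because each of the $O(|\mc G|)$ productions contributes $O(1)$ gates. For the depth bound, the gates produced by one production sit $O(1)$ levels above the gates of the nonterminals on its right-hand side, so paths in $C$ correspond, up to a constant factor, to root-to-leaf paths in the derivation tree $\mc D_{\mc G}$, yielding depth $O(\mathrm{depth}(\mc G))$. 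Since the translation is entirely local to individual productions, both a logspace and a linear-time implementation are immediate from a single pass over the production list.

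The main obstacle I foresee is verifying the composition rule $A(x) \to B(C(x))$: the size and depth bounds require implementing this substitution by $O(1)$ gates, which is possible only because each pattern contains its parameter exactly once, ensuring that the overall evaluation remains affine in $x_1$ in the above sense. Without linearity of patterns, the $\ell \cdot x_1 \cdot r + c$ decomposition would collapse in the noncommutative setting and the size of $C$ would blow up; the restricted production shapes guaranteed by Lemma~\ref{lemma-monadic-logspace} are precisely what make the book-keeping with the three gates $\langle A,\ell\rangle, \langle A,r\rangle, \langle A,c\rangle$ go through.
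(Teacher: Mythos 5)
Your proposal is correct and follows essentially the same route as the paper: the paper likewise represents each rank-one nonterminal $A$ by three gates realising the decomposition $p_A(x) = A_0 + A_1\, x\, A_2$ (your $c_A, \ell_A, r_A$), handles each production shape with $O(1)$ fresh gates, and uses the same composition formula for $A(x) \to B(C(x))$, giving the identical size, depth, logspace and linear-time bounds. No gaps.
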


\begin{proof}
Fix an arbitrary semiring $\mc S$ and 
let $\mc R$ be the polynomial semiring $\mc R = \mc S[y_1, \ldots, y_m]$.
Clearly, for a nonterminal $A$ of rank $0$, $\val_{\mc G}(A)$ is a tree without
parameters that evaluates to an element $p_A$ of the semiring $\mc R$. For a 
nonterminal $A(x)$ of rank $1$, $\val_{\mc G}(A)$ is a tree, in which the only parameter $x$ occurs exactly
once.  Such a tree evaluates to a noncommutative polynomial $p_A(x) \in 
\mc R[x]$.  Since the parameter $x$ occurs exactly once in the tree $\val(A)$, it turns out
that $p_A(x)$ is linear and contains exactly one occurrence of $x$. More precisely, by induction on the structure 
of the TSLP $\mc G$ we show that for every nonterminal $A(x)$ of rank $1$, the tree $\val_{\mc G}(A)$ evaluates
in $\mc R[x]$ to a noncommutative polynomial of the form
$$
p_A(x) = A_0 + A_1 x A_2,
$$ where 
$A_0, A_1, A_2 \in \mc R = \mc S[y_1, \ldots, y_m]$.  Using the same induction, one can
build up a circuit of size $O(|\mc G|)$ and depth
$O(\mathrm{depth}(\mc G))$ that contains gates evaluating to
$A_0, A_1, A_2$. For a nonterminal $A$ of rank zero, the circuit contains
a gate that evaluates to the semiring element $p_A \in \mc R$, and we denote this gate
with $A$ as well.

The induction uses a straightforward case distinction on the rule for $A(x)$.
The cases that the unique rule for $A$ has the form $A(x) \to x$, $A(x) \to B(x)$,
$A(x) \to f(x,B)$, $A(x) \to f(B,x)$, $A \to f(B,C)$, or $A \to a$ 
is clear. For instance, for a rule $A(x) \to +(B,x)$, we have $p_A(x) = B + 1 \cdot x \cdot 1$,
i.e., we set $A_0 := B$, $A_1 := 1$, $A_2 := 1$.
Now consider a rule $A(x) \to B(C(x))$
(for $A \to B(C)$ the argument is similar). We have already built up a circuit containing
gates that evaluate to $B_0, B_1,B_2, C_0, C_1, C_2$, where
$$
p_B(x)  =  B_0 + B_1 x_1 B_2, \qquad p_C(x)  =  C_0 + C_1 x C_2  .
$$
We get
\begin{eqnarray*}
p_A(x) &=& p_B(p_C(x)) \\
&=& B_0 + B_1 (C_0+C_1 x C_2)  B_2  \\
&=& (B_0 + B_1 C_0 B_2) + B_1 C_1 x_1 C_2 B_2 
\end{eqnarray*}
and therefore set
\begin{equation*}
A_0 := B_0 + B_1 C_0 B_2, \ A_1 := B_1 C_1, \  A_2 := C_2 B_2 .
\end{equation*}
So we can define the polynomials $A_0, A_1, A_2$ using the gates $B_0$, $B_1$, $B_2$,
$C_0$, $C_1$, $C_2$ with only $5$ additional gates. 
Note that also the depth only increases by a constant factor (in fact, 2).

The output  gate of the circuit is the start nonterminal  of the TSLP $\mc G$.
The above construction can be carried out in linear time as well as in logspace.
\end{proof}
Now we can show the main result of this section:

\begin{theorem} \label{thm-circuit}
A given arithmetical formula $F$ of size $n$ having $m$ different variables
can be transformed in logspace (linear time, respectively)
into an arithmetical circuit $C$ of depth $O(\log n)$ and size $O\big(\frac{n \cdot \log m}{\log n}\big)$
such that over every semiring, $C$ and $F$ evaluate to the same noncommutative polynomial in $m$ variables.
\end{theorem}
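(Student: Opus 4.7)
The proof will be a direct composition of the two lemmas just established. First I would observe that an arithmetical formula $F$ of size $n$ with $m$ variables is, by definition, nothing but a labelled binary tree $t_F \in \mc T(\Sigma_m)$ of size $n$, where $\Sigma_m = \{+,\cdot\} \cup \{0,1,y_1,\dots,y_m\}$ consists of $m+4$ symbols; in particular, the number of distinct node labels in $t_F$ is $\sigma \le m+4$, and the maximal rank of a symbol is $2$, hence constant. So $t_F$ lives in exactly the regime where both Corollary~\ref{theorem:nlogN} and Theorem~\ref{thm-linear-time-log-depth} apply.

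The plan is to apply Lemma~\ref{lemma-monadic-logspace} to $t_F$. This produces, either in logspace or in linear time, a monadic TSLP $\mc H$ with $\val(\mc H) = t_F$, of size $O\!\left(\frac{n \log m}{\log n}\right)$ and depth $O(\log n)$, whose productions are all of the restricted forms listed in that lemma. Then I would feed $\mc H$ into Lemma~\ref{lemma-TSLP-circuit}, which produces in logspace (linear time, respectively) an arithmetical circuit $C$ of size $O(|\mc H|) = O\!\left(\frac{n \log m}{\log n}\right)$ and depth $O(\mathrm{depth}(\mc H)) = O(\log n)$, and which evaluates over every semiring to the same noncommutative polynomial as $\val(\mc H) = t_F = F$.

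Finally I would note that the logspace statement uses the standard fact recalled in Section~2 that logspace computations compose, so chaining Lemma~\ref{lemma-monadic-logspace} and Lemma~\ref{lemma-TSLP-circuit} stays inside logspace; likewise the linear-time versions compose to a linear-time procedure since each step runs in time linear in its input, and the intermediate TSLP has size $O(n)$. There is really no hard step left: all the work went into the TSLP construction of Section~5 and into Lemmas~\ref{lemma-monadic-logspace} and~\ref{lemma-TSLP-circuit}, so the proof of Theorem~\ref{thm-circuit} is just a one-paragraph composition. The only mildly delicate point to mention explicitly is the correctness of the semantics: $C$ evaluates to the polynomial $p_S$ attached to the start gate in Lemma~\ref{lemma-TSLP-circuit}, and a straightforward induction (already used in the proof of that lemma) shows $p_S$ equals the polynomial computed by the tree $\val(\mc H) = t_F$, independently of the semiring.
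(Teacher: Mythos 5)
Your proposal is correct and is essentially identical to the paper's own proof: the paper likewise proves Theorem~\ref{thm-circuit} by applying Lemma~\ref{lemma-monadic-logspace} to the formula (viewed as a tree over $\Sigma_m$) and then Lemma~\ref{lemma-TSLP-circuit} to the resulting monadic TSLP, with the logspace/linear-time claims following by composition.
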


\begin{proof}
Let $F$ be an arithmetical formula of size $n$ and let $y_1, \ldots, y_m$ be the variables occurring in $F$.
Fix an arbitrary semiring $\mc S$ and 
let $\mc R$ be the polynomial semiring $\mc R = \mc S[y_1, \ldots, y_m]$.
Using Lemma~\ref{lemma-monadic-logspace} we can construct in logspace (linear time, respectively)
a monadic TSLP $\mc G$ of size $O\big(\frac{n \cdot \log m}{\log n}\big)$
and depth $O(\log n)$ such that $\val(\mc G) = F$.
Finally, we apply Lemma~\ref{lemma-TSLP-circuit} in order to transform $\mc G$ 
in logspace (linear time, respectively)
into an equivalent circuit 
of size $O\big(\frac{n \cdot \log m}{\log n}\big)$ and depth $O(\log n)$.
\end{proof}

Theorem~\ref{thm-circuit} can also be shown for fields instead of semirings. In this case, the expression
is built up using variables, the constants $-1$, $0$, $1$, and the field operations $+, \cdot$ and $/$. The proof is similar to the semiring case. 
Again, we start with a monadic TSLP of size $O\big(\frac{n \cdot \log m}{\log n}\big)$ and depth $O(\log n)$ for the arithmetical
expression. Again, one can assume that all rules have the form $A(x) \to B(C(x))$, $A \to B(C)$,
$A(x) \to f(x,B)$, $A(x) \to f(B,x)$, $A \to f(B,C)$, $A(x) \to B(x)$, $A(x) \to x$, or $A \to a$, where $f$ is one of the binary
field operations and $a$ is either $-1$, $0$, $1$, or a variable. Using this particular rule format, one can
show that every nonterminal $A(x)$ evaluates to a rational function $(A_0 + A_1 x)/(A_2 + A_3 x)$ for polynomials
$A_0, A_1, A_2, A_3$ in the circuit variables, whereas a nonterminal of rank $0$ evaluates to a fraction of two polynomials.
Finally, these polynomials can be computed by a single circuit of size $O\big(\frac{n \cdot \log m}{\log n}\big)$ and depth $O(\log n)$.

Lemma~\ref{lemma-TSLP-circuit} has an interesting application to the problem of checking whether the polynomial represented by a
TSLP over a ring is the zero polynomial. The question, whether the polynomial computed by a given circuit is the zero polynomial
is known as {\em polynomial identity testing} (PIT). This is a famous problem in algebraic complexity theory. For the case that
the underlying ring is $\mathbb{Z}$ or $\mathbb{Z}_n$ ($n \geq 2$) polynomial identity testing 
belongs to the complexity class {\bf coRP} (the complement of randomized polynomial time), see \cite{AgraSap09}.
PIT can be generalized to arithmetic expressions that are given by a TSLP. Using Lemma~\ref{lemma-TSLP-circuit}  and Theorem~\ref{theo-monadic}
we obtain:

\begin{theorem}
Over any semiring, the question, whether the polynomial computed by
a given TSLP is the zero polynomial, is equivalent with respect to polynomial time reductions to PIT.
In particular, if the underlying semiring is $\mathbb{Z}$ or $\mathbb{Z}_n$, then the question, whether the polynomial computed by
a given TSLP is the zero polynomial,  belongs to {\bf coRP}.
\end{theorem}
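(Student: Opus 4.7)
The plan is to establish the equivalence by exhibiting polynomial-time reductions in both directions, and then invoke the known fact that PIT over $\mathbb{Z}$ and $\mathbb{Z}_n$ lies in \textbf{coRP}.

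For the direction from TSLP zero-testing to PIT, I would proceed as follows. Given a TSLP $\mc G$ computing an arithmetic expression, first apply Theorem~\ref{theo-monadic} to obtain in polynomial time an equivalent monadic TSLP $\mc H$ whose productions have exactly the special forms listed in that theorem. Since these forms are a subset of the rule shapes required by Lemma~\ref{lemma-monadic-logspace}, I can invoke Lemma~\ref{lemma-TSLP-circuit} to transform $\mc H$ in polynomial time into an arithmetical circuit $C$ which, over every semiring, evaluates to the same polynomial as $\mc H$, and hence as $\mc G$. Consequently $\mc G$ computes the zero polynomial iff $C$ does, reducing zero-testing for $\mc G$ to PIT on $C$.

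For the converse direction, a given arithmetical circuit $D$ can itself be viewed directly as a TSLP with only rank-zero nonterminals. Concretely, for every gate $v$ of $D$ labelled by an operation $f \in \{+,\cdot\}$ with inputs $v_1,v_2$, introduce a nonterminal $A_v$ and production $A_v \to f(A_{v_1}, A_{v_2})$; for a leaf gate $v$ labelled with a variable or constant $a$, introduce $A_v \to a$; designate the output gate as the start nonterminal. This yields a TSLP of size $O(|D|)$ whose value tree is the (possibly exponentially large) unfolding of $D$, and which evaluates to the same polynomial as $D$ over every semiring. Thus PIT on $D$ reduces in linear time to zero-testing of the resulting TSLP, completing the equivalence.

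Finally, for the \textbf{coRP} conclusion over $\mathbb{Z}$ and $\mathbb{Z}_n$, I simply note that PIT over these rings lies in \textbf{coRP} (by Schwartz--Zippel, see \cite{AgraSap09}), and that \textbf{coRP} is closed under polynomial-time many-one reductions; combining this with the first reduction above yields the claim. The main obstacle is essentially nonexistent once the earlier machinery is in place: both reductions are routine and only require verifying that the rule forms produced by Theorem~\ref{theo-monadic} match the hypotheses of Lemma~\ref{lemma-TSLP-circuit}, and that a circuit is a degenerate TSLP. The only point requiring minor care is that "the polynomial computed by a TSLP" must be interpreted as the polynomial obtained by evaluating the (possibly enormous) value tree of $\mc G$ in the polynomial semiring, which is consistent with the framework set up in Lemma~\ref{lemma-TSLP-circuit}.
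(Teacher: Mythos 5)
Your proposal is correct and follows essentially the same route the paper intends: the paper states this theorem as a consequence of Lemma~\ref{lemma-TSLP-circuit} and Theorem~\ref{theo-monadic}, exactly the two ingredients you use for the forward reduction, and the reverse reduction (circuit as a rank-zero TSLP) is the observation the paper makes in Section~\ref{sec-trees}. The only point worth flagging is that Theorem~\ref{theo-monadic} takes as input a TSLP in Chomsky normal form, so you should insert the standard polynomial-time conversion to Chomsky normal form (cited in the paper from \cite{LoMaSS12}) before invoking it; with that minor addition the argument is complete.
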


 \section{Future work}
 
 In \cite{ZhangYK14} a universal (in the information-theoretic sense) code for binary trees is developed. This code 
 is computed in two phases: In a first step, the minimal dag for the input tree is constructed. Then, a particular
 binary encoding is applied to the dag. It is shown that the {\em average redundancy} of the resulting code converges
 to zero (see \cite{ZhangYK14} for definitions) for every probability distribution on binary trees that satisfies the so-called domination property and 
 the representation ratio negligibility property. Whereas the domination property is somewhat technical and 
 easily  established for many distributions, the representation ratio negligibility property means that
 the average size of the dag divided by the tree size converges to zero for the underlying probability distribution.
 This is, for instance, the case for the uniform distribution, since the average size of the dag is 
 $\Theta\big(\frac{n}{\sqrt{\log n}}\big)$ \cite{FlajoletSS90}.   
 
 We construct TSLPs that have {\em worst-case} size $O\big(\frac{n}{\log n}\big)$ assuming a constant number of node labels. 
 We are confident that replacing the minimal dag by a TSLP of worst-case size $O\big(\frac{n}{\log n}\big)$ in the 
 universal encoder from \cite{ZhangYK14} leads to stronger
 results. In particular, we hope that for the resulting encoder the {\em maximal pointwise redundancy} converges
 to zero for certain probability distributions. 
 For strings, such a result was obtained in \cite{KiYa00} using the fact that every string of length $n$ over a constant size alphabet has 
 an SLP of size $O\big(\frac{n}{\log n}\big)$.
 
It would be interesting to know  the worst-case output size of the grammar-based tree compressor 
 from \cite{JezLo14approx}. It works in linear time and produces a TSLP that is only by a factor $O(\log n)$ larger than an optimal TSLP.
 From a complexity theoretic point of view, it would be also interesting to see, whether {\sf TreeBiSection} (which can be implemented in logspace)
 can be even carried out in $\mathsf{NC}^1$. 
 
In \cite{BilleGLW13} the authors proved that the top dag of a given tree $t$ of size $n$ is at most by a factor $\log n$  
larger than the minimal dag of $t$. It is not clear, whether the TSLP constructed by {\sf TreeBiSection} has this property too.
The construction of the top dag is done in a bottom-up way, and as a consequence identical subtrees are compressed
in the same way. This property is crucial for the comparison with the minimal dag. {\sf TreeBiSection}  works in a top-down
way. Hence, it is not guaranteed that  identical subtrees are compressed in the same way. In contrast, {\sf BU-Shrink} works 
bottom-up, and one might compare the size of the produces TSLP with the size of the minimal dag.

\medskip
\noindent
{\bf Acknowledgment.} We have to thank Anna G{\'a}l  for helpful comments.

\def\cprime{$'$} \def\cprime{$'$}

%

\end{document}